\documentclass{article}
\usepackage{iclr2027_conference,times}

\input{commands.sty}

\title{Locally Sparsified, Globally Near-Optimal: \\ Matching under Independent Vertex Arrivals}
\author{Sara Ahmadian \\
Google Research \\
\texttt{sahmadian@google.com}
\And
Edith Cohen \\
Google Research \\
Tel Aviv University \\
\texttt{edith@cohenwang.com}
\And
Mohammad Roghani \\
Google Research \\
\texttt{roghani@google.com}
}
\date{}
\hypersetup{
    pdftitle={Locally Sparsified, Globally Near-Optimal: Matching under Independent Vertex Arrivals},
    pdfauthor={Sara Ahmadian, Edith Cohen, Mohammad Roghani}
}

\iclrfinalcopy
\begin{document}
\maketitle
\lhead{Preprint}

\begin{abstract}
Resource allocation systems often restrict each request to a short
list of options before coordinating assignments globally.
We study this separation in stochastic bipartite matching under independent vertex arrivals. Each request draws a state from its own known distribution, determining its compatible resources, and independently retains a menu of at most $k$ edges. A maximum matching is then computed on the retained graph.

We show that bounded local menus universally suffice for near-optimal
matching.  For every $\varepsilon>0$, there is a menu size
$k_\varepsilon$ depending only on $\varepsilon$
that preserves at least a $(1-\varepsilon)$ fraction of the expected maximum-matching size of the full realized graph.
Earlier guarantees required additional assumptions on how matching mass is distributed across edges; our result resolves the unrestricted case.

Moreover, the menus are simple to generate from any benchmark matching rule, either by weighted sampling according to the benchmark's edge
marginals, or by applying the benchmark to sampled realizations and
retaining the resulting partners.

Our proof constructs a near-optimal certificate inside the sparsifier
by combining a \emph{locally computable} surrogate for the large-marginal
edges with a fractional completion from sampled light edges.  The
surrogate nearly preserves the benchmark's value and endpoint loads
while controlling dependencies, which makes the statistical
light-edge completion possible.
\end{abstract}

\section{Introduction}
\label{sec:introduction}

Resource allocation often separates local candidate selection from
global coordination.  In a batched robotaxi dispatch system, each
request can identify a short list of compatible vehicles before the
platform jointly assigns vehicles to requests.  Similarly, a cloud
scheduler can receive a small set of eligible servers for each job
before resolving competition for server capacity.  These examples
motivate a basic question: how much information must each request
retain for the final allocation to remain nearly optimal?

The local sparsification model of \citet{ACR26} captures this
separation.  Resources form one side of a bipartite graph, and
requests form the other.  Each request observes its compatible
resources and retains a \emph{menu} of at most $k$ incident edges.
Its choice may use statistical knowledge of the instance and private
randomness, but cannot depend on the realized requests elsewhere in
the system.  A central coordinator then computes a maximum matching
using the retained edges.  The parameter $k$ measures the amount of
information passed from each local decision to the global
optimization: $n$ requests communicate at most $nk$ resource
identifiers.

We study this model under \emph{independent vertex arrivals}.
Each request independently draws its compatible neighborhood
from its own known distribution. The model includes both i.i.d.\ request types and independent
presence or absence of vertices with fixed neighborhoods.

Statistical knowledge can help a request select options that will
remain useful when other requests compete for the same resources.
However, each menu must be chosen without observing that competition.
An option that is dispensable in one realization may be essential in
another.  The question is whether a bounded number of local choices
can accommodate this uncertainty.  \citet{ACR26} established near-optimal preservation under a spread
condition on the fractional plan guiding the menus, and demonstrated
strong empirical performance on ride-hailing data and synthetic
instances, including instances outside the regime covered by their
spread-based analysis.  Their work left open whether the spread
condition reflects a genuine barrier to near-optimal preservation or
only a limitation of the analysis.  We ask this question for the
more general independent-arrival model.

\begin{question}[Near-optimal matching from bounded local menus]
\label{q:local-sparsification}
For every $\varepsilon>0$, is there a menu bound $k_\varepsilon$,
depending only on $\varepsilon$, such that every instance with
independent vertex arrivals admits a local menu rule preserving
at least a $(1-\varepsilon)$ fraction of the expected
maximum-matching size of the full realized graph?
\end{question}

\paragraph{Our results.}
We answer \Cref{q:local-sparsification} affirmatively.

\begin{theorem}[Informal; see \Cref{thm:status-product}]
For every $\varepsilon>0$, there is a finite $k_\varepsilon$
such that every instance with independent vertex arrivals admits
a local menu rule of size at most $k_\varepsilon$ whose retained
graph $H$ satisfies
\[
    \mathbb E[\nu(H)]
    \ge
    (1-\varepsilon)\mathbb E[\nu(\mathcal G)],
\]
where $\mathcal G$ is the full realized graph and $\nu(G)$ is the maximum-matching size of a graph $G$.
The expectations are over the arrivals and the menu randomness.
\end{theorem}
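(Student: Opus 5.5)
We plan to prove the statement by building, inside the retained graph $H$, a fractional matching (a certificate) whose expected value is at least $(1-\varepsilon)\mathbb E[\nu(\mathcal G)]$. Since $H$ is bipartite, the fractional matching polytope is integral, so $\nu(H)$ is at least the value of any fractional matching on $H$. Fix a benchmark rule: for each realization, take a maximum matching of $\mathcal G$. For each request $v$, each state $s$, and each incident edge $e=(v,u)$, let $x_e(s)=\Pr[e\in M \mid v \text{ in state } s]$. These marginals satisfy $\sum_e x_e(s)\le 1$ at each request. Summed over requests and states with their probabilities, they give expected load at most $1$ on each resource $u$. The total mass equals $\mathbb E[\nu(\mathcal G)]$. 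The menu of $v$ in state $s$ is $k$ independent samples of incident edges, drawn with probability proportional to $x_e(s)$. We then split edges at a threshold $\tau=\tau(\varepsilon)$. An edge is \emph{heavy} if $x_e(s)\ge\tau$ and \emph{light} otherwise. Each state has at most $1/\tau$ heavy edges, and with $k\gg (1/\tau)\log(1/\varepsilon)$ every heavy edge is retained with probability at least $1-\varepsilon$.

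\textbf{Heavy part.} Assigning each retained heavy edge the value $x_e(s)$ already respects the request constraints. The resource constraints, however, hold only in expectation: in a given realization a resource may receive heavy mass from many requests. We would replace this with a \emph{locally computable} surrogate. Each request $v$, knowing only its own state, independently selects at most one heavy edge, choosing $e$ with probability $x_e(s)$, scaled down as needed. The selected heavy edges then form a random bipartite graph with independent choices on the request side. Each resource has expected in-degree at most its heavy load $\ell_u$. We resolve conflicts by letting each resource keep one selected heavy edge. By a standard balls-in-bins estimate, the expected number of matched resources is $\sum_u (1-\prod(1-p))$. Because each heavy probability is at least $\tau$, this count is not within $1-\varepsilon$ of $\sum_u \ell_u$ in general; we expect a loss of the $1-1/e$ type. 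This is the main obstacle. The fix we would try is to treat resources whose heavy load comes from few large contributors differently from those whose load comes from many small contributors. In the first case, the few contributors have large individual probabilities, and variance concentrates on a bounded number of requests. There we would instead assign the fractional value $\min(1,\text{realized heavy mass})$ scaled by $(1-\varepsilon)$. Independence of the requests across a bounded-size neighborhood then gives $\mathbb E[\min(1,X_u)]\ge(1-O(\varepsilon))\mathbb E[X_u]$, provided the benchmark load is derived from realized maximum matchings, which bounds the realized degree contributions. Concretely, the surrogate $y$ should keep value close to $\sum x$ while keeping each realized resource load at most $1$, and it should depend on bounded-size groups of requests so that we retain independence.

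\textbf{Light part.} Given the surrogate $y$, each resource $u$ has residual capacity $r_u=1-y_u$. Each light edge $e=(v,u)$ in state $s$ appears in the menu roughly $\mathrm{Poisson}(k x_e(s))$ times. We assign it fractional value (number of copies)$/k$, truncated at each request to the slack $1-y_v$. At resource $u$, the light mass is a sum of many independent terms, each of size at most $\tau$ in expectation, coming from requests whose states are independent of one another. We expect the surrogate to depend only on bounded local information that is independent of the light samples, so the light mass concentrates. A Chernoff/Bernstein bound conditional on the surrogate then shows that scaling the light mass by $\min(1, r_u/\text{light}_u)$ loses only $O(\sqrt{\tau}+1/\sqrt{k})$ of the mass in expectation. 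We would then choose $\tau$ small in terms of $\varepsilon$ and $k$ large in terms of $\tau$. Summing the heavy and light contributions gives a fractional matching in $H$ of expected value at least $(1-\varepsilon)\sum x=(1-\varepsilon)\mathbb E[\nu(\mathcal G)]$, which proves the theorem with $k_\varepsilon$ depending only on $\varepsilon$. The delicate point throughout is keeping the dependencies created by the heavy surrogate weak enough that the light-edge concentration still applies.
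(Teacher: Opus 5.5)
Your outer skeleton matches the paper's: a fractional certificate on $H$, benchmark marginals, a heavy/light split, and a local heavy surrogate to control dependencies. The heavy surrogate you propose, however, does not work, and this is a genuine gap. The inequality $\mathbb E[\min(1,X_u)]\ge(1-O(\varepsilon))\mathbb E[X_u]$ fails precisely in the few-large-contributors regime where you invoke it. Take one resource $u$ and $n$ requests, each present independently with probability $1/n$ and adjacent only to $u$, with a maximum-matching benchmark that breaks ties uniformly. Every present request has conditional marginal $1-(1-1/n)^n\approx1-1/e$, so all edges are heavy and $X_u\approx(1-1/e)N$ with $N\approx\mathrm{Poisson}(1)$. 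Then $\mathbb E[X_u]\approx0.63=\mathbb E[\nu(\mathcal G)]$, but $\mathbb E[\min(1,X_u)]\approx(1-1/e)/e+1-2/e\approx0.50$. This is a constant loss even though $H=\mathcal G$ for every $k\ge1$. The benchmark attains $\mathbb E[X_u]$ only by adapting to the realization. In this example neither per-request independent choices nor fixed fractional values on realized heavy edges do so, and independence is exactly what creates the fluctuation.

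If you instead meant to keep the benchmark's own realized heavy edges, value and feasibility are fine. But then the premise of your light step fails, because the free statuses of $M^\star$ can carry arbitrary global correlations. The missing ingredient is a heavy witness that is an adaptive near-maximum matching of the realized heavy graph, has quantifiably bounded dependencies, and does not overuse endpoints. The paper builds it in three steps. It caps offline heavy degree at $D=O(1/\tau+\log(1/\eta))$, losing $O(\eta)$ of the heavy mass. It runs a local near-optimal weighted-matching rule with in--out query bounds $D^{O(\eta^{-2}\log^2(1/\eta))}$. It chooses overuse-penalizing weights at a Brouwer fixed point, so the witness's endpoint marginals satisfy $y_u\le(1+\eta^2)q_u$ relative to the capped benchmark's heavy marginals $q_u$.

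Once the heavy witness is a genuine matching, your light step needs two further changes. First, the residual at each endpoint is $0$ or $1$, so truncating to realized slack recovers only about a $(1-y_u)$ fraction of the light mass at $u$. The paper instead divides the Horvitz--Thompson weights by the free probabilities $a_u=1-y_u$. That stays feasible only because the load control gives $\ell_u\le a_u$ for the rescaled light plan, after discarding endpoints with $a_u<\eta$.

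Second, the light terms are now gated by the witness's free-status indicators, so they are not independent. The paper controls this dependence with in--out locality plus Efron--Stein over the arrivals, at a cost factor $\Gamma\le\Lambda_\eta(\tau)^{A_\eta}$ that grows as $\tau\to0$. The resulting light-load variance bound is of order $\Gamma\tau^-/\eta^2$, not the independent-case variance behind your $O(\sqrt\tau+1/\sqrt k)$ estimate. With a single threshold $\tau^-=\tau$ this is roughly $\tau^{1-A_\eta}$ with $A_\eta\gg1$, which is not small. You therefore need a separated light threshold $\tau^-\le\eta^{C}\Lambda_\eta(\tau)^{-(A_\eta+C)}$. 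That leaves an intermediate band covered by neither part; the paper makes its mass $O(\eta Z)$ by pigeonholing over $\Theta(1/\eta)$ nested scales, which is also where the double-exponential $k_\varepsilon$ comes from.
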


Note that the number of retained choices per request needed to preserve the expected global matching value depends only on the desired accuracy.
In particular, it is independent of the number of requests, resources, possible states, or neighborhood sizes.

The menu construction is simple: weighted sampling according to
the conditional edge marginals of any randomized benchmark matching
rule $M^\star$ preserves $(1-\varepsilon)\mathbb E[|M^\star|]$
for $k\ge k_\varepsilon$. Alternatively, only sample access is
needed: for each request, fix its observed state, independently
simulate $k$ realizations of the other requests, run $M^\star$
on each, and retain its at most $k$ distinct assigned partners.
Taking $M^\star$ to compute a maximum matching gives our
near-optimality guarantee. These constructions extend the
Monte Carlo approach explored empirically for i.i.d.\ arrivals
by \citet{ACR26} to general independent arrivals.

Our proof gives the quantitative bound
\[
    k_\varepsilon
    \le
    \exp\exp\!\left(
        O\!\left(\varepsilon^{-1}\log\frac1\varepsilon\right)
    \right).
\]
We show that dependence on the menu size is unavoidable.
We establish in \Cref{sec:lower-bound} a lower bound that applies \emph{for every local menu rule}, exhibiting independent-presence instances on which the expected loss is $\Omega(1/k)$ of the full maximum-matching value. Consequently, any universal $(1-\varepsilon)$ guarantee requires $k_\varepsilon=\Omega(1/\varepsilon)$.
The gap between this lower bound and our upper bound is
large, and our result is foundational and not intended to predict  menu sizes needed in practice.
 Our experiments in
\Cref{sec:experiments} examine the dependence on $k$ of hub and expanded instance families. They indicate that good preservation can already occur for small menu sizes, consistent with the strong empirical performance observed by \citet{ACR26} beyond the regime covered by their theory.

\paragraph{Related work.} In classical online bipartite matching, each arriving request
must be matched or discarded immediately. The RANKING algorithm of \citet{KVV90} fixes a uniformly random
priority order on the offline vertices and matches each request
to its highest-priority available neighbor.
It achieves the optimal $1-1/e$ competitive ratio under
adversarial arrivals.
Random-order arrivals allow improved guarantees without
knowledge of a demand distribution
\citep{KMT11,MY11, peng2025revisiting}.
A complementary line of work assumes requests are drawn
independently from a known distribution
\citep{FMMM09,BK10,MOS12,JL14,HS21,HSY22}.
In all three models, however, the worst-case competitive ratio
is bounded away from one: this limitation persists even with
known i.i.d.\ arrivals \citep{MOS12,CGPV25}.

Local sparsification permits the final matching to be deferred:
the irrevocable decision is which options to retain.
Unlike an online matching algorithm, each request chooses its
menu without observing earlier requests or current resource
availability.
Despite this restriction, our result shows that a menu size
depending only on $\varepsilon$ suffices to preserve a
$(1-\varepsilon)$ fraction of the expected offline optimum
under independent arrivals with known distributions.

Streaming matching likewise permits a final computation after
processing the input, and studies what can be preserved in
limited memory \citep{GKK12,Kapralov13, kapralov2021space}.
A streaming algorithm can use its accumulated memory when
processing each arrival.
Here each menu depends only on the known arrival distributions,
the request's own state, and its private randomness.
The statistical assumption also differs from random-order
streams: we sample request states independently, rather than
randomly permuting a fixed collection of requests.
Our formulation isolates the power of distribution-informed
local selection followed by global optimization.

A different stochastic matching literature studies sparse
queries to a graph whose edges are independently realized
\citep{BDHPS15,AKL19,BFHR19,BDH20, behnezhad2020stochastic, ABGR25}.
In its nonadaptive formulation, the query subgraph is chosen
before the edge realizations are observed.
Our menus are chosen after observing each request's realized
neighborhood, with a separate budget for each request.
At the same time, a single request state can jointly determine
many edges, so independence of individual edges is unavailable.

\paragraph{Proof ideas.}
We prove the guarantee by constructing an auxiliary fractional
matching on the retained graph whose expected value nearly matches
that of the benchmark. This certificate is used only in the
analysis: the algorithm simply forms the menus and computes a
maximum matching on the retained graph. The large-marginal edges of the benchmark are retained in the menu.
The main challenge is that the large- and small-marginal parts
of the benchmark share endpoint capacity. Although the benchmark
has the right value and endpoint loads, its unmatched vertices
may have global dependencies that obstruct concentration for
the sampled light edges.

To control these dependencies, we build on the connection between
local computation algorithms and stochastic matching developed
by \citet{ABGR25}, using the framework of \citet{RTVX11,ARVX12}.
We replace the large-marginal benchmark matching with a locally
computable surrogate that nearly preserves its value and endpoint
loads, while in--out query bounds control dependencies among
unmatched vertices. We adapt this approach from independently
realized edges to independent vertex arrivals, where resampling
one request can change an entire neighborhood.
The resulting residual capacity allows a fractional completion
from sampled light edges to recover nearly all of the remaining
benchmark mass. Bipartite matching integrality then gives an
integral matching of at least the certificate's value.
\Cref{sec:main-theorem} gives the formal theorem and proof roadmap.

\section{Product-arrival model}
\label{sec:model}

\begin{definition}[Product-arrival instance]
\label{def:product-arrival}
A \emph{product-arrival instance} $\mathcal I$ consists of a finite set
$V$ of offline vertices, an integer $n$, and, for each online position
$i\in[n]$, a finite state space $\mathcal T_i$, a distribution $p_i$
with support $\mathcal T_i$, and a neighborhood map
\[
    \Gamma_i:\mathcal T_i\to 2^V.
\]

A \emph{realized graph} $\mathcal G\sim\mathcal I$ is sampled by drawing
independently $T_i\sim p_i$ for each $i\in[n]$ and setting
\[
    \mathcal G
    =
    \bigl([n],V,\{(i,v):v\in\Gamma_i(T_i)\}\bigr).
\]
We call a pair $(i,t)$ with $t\in\mathcal T_i$ a \emph{state}, and a triple $(i,t,v)$ with $v\in\Gamma_i(t)$ a \emph{state-edge}.
\end{definition}

\paragraph{Sparsification via menus.}

For a graph $G$, let $\nu(G)$ denote its maximum-matching size.

\begin{definition}[Menu rule]
\label{def:menu-rule}
A \emph{menu rule of size $k$} for an instance $\mathcal I$ assigns to every state $(i,t)$ a
distribution $\mathcal D_{it}$ over subsets
$S\subseteq\Gamma_i(t)$ with $|S|\le k$.

Given a realized graph $\mathcal G$, the rule samples a \emph{retained graph} $H$
by drawing, independently for each $i$,
\[
    S_i\sim\mathcal D_{i,T_i},
\]
and setting
\[
    H
    :=
    \bigl([n],V,\{(i,v):v\in S_i\}\bigr)
    \subseteq \mathcal G.
\]
The value of the menu rule is $\mathbb E[\nu(H)]$,
where the expectation is over both the product arrivals and the menu randomness.
\end{definition}

The distributions $\mathcal D_{it}$ may depend arbitrarily on the known instance $\mathcal I$; the realized menu draws  use independent private randomness.

\paragraph{Benchmark marginals and derived menus.}

Fix a randomized benchmark matching rule $M^\star$.  We represent its
randomness by an independent random seed $\omega^\star$ and write
\[
    M^\star=M^\star(T,\omega^\star),
\]
where for every state profile $T$, $M^\star(T,\omega^\star)$ is a
matching of the realized graph $\mathcal G(T)$.  The seed
$\omega^\star$ is independent of the arrival states and of all menu
and auxiliary randomness used below.

For every state-edge $(i,t,v)$, define the conditional and unconditional
benchmark marginals
\[
    z_{itv}
    :=
    \Pr[(i,v)\in M^\star\mid T_i=t],
    \qquad
    x_{itv}
    :=
    p_i(t)z_{itv}.
\]
Since $M^\star$ is always a matching,
\[
    \sum_v z_{itv}\le1
    \quad\text{for every state $(i,t)$},
    \qquad
    \sum_{i,t}x_{itv}\le1
    \quad\text{for every $v\in V$}.
\]
Write
\[
    Z:=\sum_{i,t,v}x_{itv}
      =\mathbb E[|M^\star|],
\]
and, for a set $F$ of state-edges,
\[
    x(F):=\sum_{(i,t,v)\in F}x_{itv}.
\]
If $M^\star$ selects a maximum matching of every realized graph, then
\[
    Z=\mathbb E[\nu(\mathcal G)].
\]

\begin{definition}[Benchmark-derived menu rule]
\label{def:derived}
A menu rule of size $k$ is \emph{derived from $M^\star$} if, for every
state $(i,t)$ and $S\sim\mathcal D_{it}$:
\begin{itemize}[topsep=2pt,itemsep=2pt]
\item[(D1)] \emph{Coverage:} for every $v\in\Gamma_i(t)$,
\begin{equation}
\label{eq:menu-property}
    \Pr[v\notin S]\le e^{-kz_{itv}}.
\end{equation}
\item[(D2)] \emph{Nonpositive covariances:} for distinct $v,w\in\Gamma_i(t)$, $\Cov\!\left(\mathbf1\{v\in S\},\mathbf1\{w\in S\}\right)
    \le 0$.
\end{itemize}
\end{definition}

We will use the consequence
\begin{equation}
\label{eq:inclusion-lower-bound}
    \Pr_{S\sim\mathcal D_{it}}[v\in S]
    \ge (1-e^{-1})\min\{1,kz_{itv}\}
    \ge \tfrac58\min\{1,kz_{itv}\}.
\end{equation}
More generally, (D1) could be replaced by $\Pr[v\notin S]\le e^{-c k z_{itv}}$ for any fixed
constant $c>0$; this only changes universal constants in our statements and proofs below.

Two standard constructions satisfy these conditions.  A fixed-size
VarOpt$_k$ sample of $(z_{itv})_v$
\citep{Chao82,Srinivasan01,CohenDKLT11} keeps every edge with
$z_{itv}\ge1/k$ and has miss probability at most
$(1-kz_{itv})_+\le e^{-kz_{itv}}$, with negatively orthant dependent inclusion
indicators.  Alternatively, draw $k$ independent auxiliary realizations
of the environment conditional on $T_i=t$, run $M^\star$ on each, and let
$S_i$ be the set of partners assigned to $i$.  Then $|S_i|\le k$, an edge
is missed with probability $(1-z_{itv})^k\le e^{-kz_{itv}}$, and the
resulting occupancy indicators are negatively associated.

\section{Universal sparsification theorem and proof roadmap}
\label{sec:main-theorem}

\begin{theorem}[Universal sparsification for product arrivals]
\label{thm:status-product}
There are universal constants $C,\varepsilon_0>0$ such that, for every
$0<\varepsilon\le\varepsilon_0$, if
\[
    k_\varepsilon
    :=
    \left\lceil
    \exp\!\left(
        \exp\!\left(
            C\varepsilon^{-1}\log\frac C\varepsilon
        \right)
    \right)
    \right\rceil,
\]
then for every product-arrival instance, every matching rule $M^\star$,
and every size-$k_\varepsilon$ menu rule derived from $M^\star$,
\[
    \mathbb E[\nu(H)]
    \ge
    (1-\varepsilon)\mathbb E[|M^\star|].
\]
\end{theorem}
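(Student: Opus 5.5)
The plan is to construct, for each realization of arrivals and menus, a fractional matching $f$ supported on $H$ with $\mathbb E[|f|]\ge(1-\varepsilon)Z$. Bipartite integrality then gives $\nu(H)\ge|f|$. First we fix thresholds $\tau=\tau(\varepsilon)$ and split the state-edges into \emph{heavy} edges $F_h=\{(i,t,v): z_{itv}\ge\tau\}$ and \emph{light} edges $F_\ell$. By (D1), a heavy edge is missed with probability at most $e^{-k\tau}$, so taking $k\gg\tau^{-1}\log(1/\varepsilon)$ keeps essentially all heavy edges in $H$. We also need a gap between scales: the edges with $z_{itv}\in[\tau',\tau)$ for a slightly smaller $\tau'$ should carry $x$-mass at most $\varepsilon Z$. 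We get this by pigeonholing over $O(1/\varepsilon)$ geometrically separated scales. Iterating this gap argument inside the LCA step is where the doubly exponential dependence of $k$ on $\varepsilon$ will come from.

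\textbf{Heavy part via a local surrogate.} The benchmark restricted to heavy edges has the correct value and endpoint loads, but it is a global function of $T$. Its unmatched capacity at an offline vertex $v$ therefore depends on all arrivals, and this destroys the concentration needed for the light completion. Each state has at most $1/\tau$ heavy edges, so the realized heavy graph has online degree at most $1/\tau$. Offline degrees may be large, so we first thin them.

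Following the LCA framework of \citet{RTVX11,ARVX12} as used by \citet{ABGR25}, we replace $M^\star|_{F_h}$ by a random-order greedy or Ranking-type fractional matching $g$ computed on the realized heavy edges, reweighted so that its loads satisfy $\mathbb E[\text{load}_g(v)]\ge \sum_{(i,t)}x_{itv}\mathbf 1\{z_{itv}\ge\tau\}-O(\varepsilon)$ in aggregate, with value at least $(1-O(\varepsilon))x(F_h)$. The key property we need is an in--out query bound. For each vertex $v$, the event ``residual capacity of $v$ under $g$'' is determined by the states of a random set $Q(v)$ of online positions, and every position lies in few $Q(v)$ in expectation, at most $D(\tau,\varepsilon)$. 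The point requiring adaptation from \citet{ABGR25} is that resampling a single $T_i$ changes an entire neighborhood. Because online heavy degree is bounded by $1/\tau$, the query trees still have bounded expected size, so we will redo their exploration bound with vertex-level rather than edge-level randomness.

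\textbf{Light completion.} For light state-edges we set $f_{iv}=\min\{1,kz_{itv}\}^{-1}\,z_{itv}\cdot\mathbf 1\{v\in S_i\}\cdot$(scaling by residual capacities). Using \eqref{eq:inclusion-lower-bound}, each online vertex then carries light load $\le\sum_v z_{itv}$ in expectation. At each offline $v$ the light load is a sum of independent (across $i$) contributions, each at most $1/(k\cdot\tfrac58)\cdot$ small. By (D2), these concentrate at $v$ around $\sum x_{itv}$ over light edges, with relative error about $\sqrt{1/(k\tau')}$. To respect capacity we truncate: at each $v$ we scale the light load down to fit in $1-\text{load}_g(v)$, and at each $i$ we scale down to fit $1-\text{load}_g(i)$.

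\textbf{Main obstacle.} The hardest part is showing that this truncation loses only $O(\varepsilon)Z$. The residual capacity at $v$ is correlated with the light arrivals at $v$, through positions in $Q(v)$. We will condition on the states of positions in $Q(v)$ and apply concentration only to the remaining independent positions. The positions in $Q(v)$ have expected total light mass at $v$ bounded by $D\cdot\tau'$-type quantities, which is negligible once $\tau'\ll 1/D$.

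The argument closes by choosing in order $\tau$, then $D(\tau)$, then $\tau'\ll\varepsilon/D$, then $k\gg\varepsilon^{-2}/\tau'$. Summing the losses (missed heavy edges, the surrogate gap, the middle scale band, and concentration/truncation) then gives $\mathbb E|f|\ge(1-\varepsilon)Z$.
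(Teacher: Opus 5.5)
There is a genuine gap, and it sits in what you ask of the heavy surrogate $g$.

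\textbf{Heavy surrogate: wrong load condition, and no mechanism.} You require near-optimal value and an aggregate \emph{lower} bound on $g$'s loads. The light completion needs the opposite: an \emph{upper} bound at \emph{every} endpoint, namely $\Pr[T_i=t,\ i\text{ matched by }g]\le(1+\eta^2)q_{it}$ and $\Pr[v\text{ matched by }g]\le(1+\eta^2)q_v$, where $q$ are the loads of the (capped) heavy benchmark. Without this bound, $g$ can consume exactly the capacity the light benchmark relies on. For example, let an always-present arrival $i$ have heavy edges to $v$ and $w$ with $z=3/4$ to $w$ and $z=1/4$ to $v$, and let $v$ (but not $w$) carry light benchmark load $3/4$. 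A local matching that matches $i$ to $v$ and $w$ equally often has full heavy value and the correct aggregate load. Yet it leaves only $1/2$ of $v$ for light mass, and disjoint copies lose a constant fraction of $Z$.

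Near-optimal LCA matchings have no reason to respect prescribed per-endpoint loads. Simple local roundings of $x$ on heavy edges do respect them, but they lose a constant factor of value. ``Reweighting'' does not say how to get both at once. This is the paper's central new step:
\begin{itemize}
\item endpoint prices from a convex overuse penalty define edge scores;
\item the local weighted-matching rule is given edge marginals continuous in the weights, via an internal random perturbation;
\item Brouwer's theorem yields a fixed point of the marginal-response map.
\end{itemize}
At the fixed point, a saturated price deletes every incident edge, so no endpoint exceeds $(1+\eta^2)q_u$. Concavity of the penalized objective, together with near-optimality of the rule, gives value $(1-\eta)Q$. Only this pointwise bound lets the light plan be refitted to the surrogate's residual capacities (\Cref{lem:residual-fitting}).

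\textbf{Light completion: truncation loses mass, and the independence claim fails.} The residual capacity $g$ leaves at an endpoint is random and not concentrated. Heavy contributions are of order at least $\tau$; for an integral $g$ the residual is just the free indicator $F_u$. Truncating light loads to $1-\mathrm{load}_g$ therefore keeps only about an $a_{it}=\Pr[F_i=1\mid T_i=t]$ (resp.\ $a_v=\Pr[F_v=1]$) fraction of the light mass. That is a constant-fraction loss whenever heavy and light mass share endpoints.

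To avoid this loss you must inflate the light weights by $1/(a_{iT_i}a_v)$ on free endpoints. But then the column load at $v$ is $\sum_i W_{iv}F_i/a_{iT_i}$, whose terms are coupled through $g$. So it is not ``a sum of independent (across $i$) contributions.'' Conditioning on the states in $Q(v)$ does not fix this, because the query regions determining different $F_i$ overlap one another.

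What you actually need are three estimates:
\begin{itemize}
\item a variance bound $\operatorname{Var}(\sum_v h_vF_v)\le\Gamma\sum_v h_v^2$;
\item its online analogue for $\sum_{i,t}h_{it}\mathbf 1\{T_i=t\}F_i$;
\item control of the aggregate bias from $\Pr[F_v=1\mid T_i=t]\ne\Pr[F_v=1]$ in the row means.
\end{itemize}
The paper gets these from the fixed-input in--out correlation lemma, combined with an Efron--Stein transfer over single-arrival resamplings that uses stability of the capped graph (\Cref{thm:prelim-transfer}).

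\textbf{Two smaller points.}
\begin{itemize}
\item Your weights $z_{itv}/\min\{1,kz_{itv}\}$ are not unbiased under (D1): at $kz\approx1$ the expectation can be as low as $(1-1/e)z$. Use exact Horvitz--Thompson weights instead.
\item The scale ladder must be separated by a large power, $\tau^-\le\eta^{C}\Lambda_\eta(\tau)^{-(A_\eta+C)}$, with $\tau$ then selected by pigeonhole over the ladder. It cannot be geometric, and $\tau$ cannot be chosen first.
\end{itemize}
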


Taking $M^\star$ to be a maximum-matching selector gives
\[
    \mathbb E[\nu(H)]
    \ge
    (1-\varepsilon)\mathbb E[\nu(\mathcal G)].
\]

The remainder of this section gives a high-level roadmap of the proof with pointers to formal statements and technical arguments in the appendices.

Fix the benchmark $M^\star$ and write $Z:=\mathbb E[|M^\star|]$. Fix a benchmark-derived menu rule of size $k_\varepsilon$, and let
$H$ be its retained graph.  We carry out the proof using an auxiliary
accuracy parameter $\eta$.  Specifically, we construct a random feasible
fractional matching $Y$ supported on $H$, which we call our \emph{certificate}, such that
\[
    \mathbb E[|Y|]\ge (1-O(\eta))Z.
\]
By integrality of the bipartite matching polytope,
$\nu(H)\ge |Y|$ in every realization.  At the end of the proof, we set
$\eta$ to a sufficiently small constant multiple of $\varepsilon$.

Observe that we cannot simply use $M^\star\cap H$, the
benchmark restricted to retained edges, as our certificate. This is because light edges
(those with small conditional marginals) may carry a substantial fraction
of $Z$ even though each is retained with only a small probability.
Consequently, restricting the particular benchmark matching to $H$ can
lose most of this mass.  Recovering it requires using retained light edges
other than those selected by that realization of the benchmark.

\subsection{Crucial, light, and intermediate bands.}
We partition the state-edges according to their conditional benchmark marginals using two thresholds
\[
    0<\tau^-<\tau\le \frac14.
\]
The \emph{crucial}, \emph{intermediate}, and \emph{light}
state-edges are:
\begin{equation}
\label{eq:taubands}
    C_\tau
    :=
    \{z_{itv}\ge\tau\},
    \qquad
    I_{\tau^-,\tau}
    :=
    \{\tau^-<z_{itv}<\tau\},
    \qquad
    L_{\tau^-}
    :=
    \{z_{itv}\le\tau^-\}.
\end{equation}
The threshold $\tau$ is set to be large enough so that each realized crucial edge is in $H$ with probability at least $1-\eta$. The light threshold $\tau^-$ is set to be substantially smaller than $\tau$ as to facilitate statistical guarantees on the sampled (retained) light edges; the precise requirement appears in \Cref{lem:light-completion}.

Our certificate combines two parts that separately recover the crucial and the light parts of the benchmark mass:
\begin{itemize}
\item \textbf{Crucial.}
We construct an auxiliary matching $M_C$ on the realized crucial graph, which we call the \emph{crucial witness} (\Cref{sec:witnesspreview}). It nearly preserves the crucial benchmark value $x(C_\tau)$ and its endpoint loads are (nearly) upper bounded by those of the crucial benchmark $M^\star\cap C_\tau$. Additionally, the witness also has controlled local dependencies (\Cref{sec:local_roadmap}).
Note that we construct $M_C$ independently of the menu randomness.  Since each crucial edge is retained with probability at least $1-\eta$, intersecting $M_C$ with $H$ loses at most an $\eta$ fraction of its expected value.
\item \textbf{Light.}
We then construct a fractional matching on retained light edges whose endpoints are free in $M_C$ (\Cref{sec:fittinglight,sec:realizinglight} ). Together with $M_C\cap H$, this forms our certificate.
\item \textbf{Intermediate.}
The intermediate band $I_{\tau^-,\tau}$ is left unaccounted
for by our certificate. In \Cref{sec:thresholdsroadmap} we show that the two thresholds can be chosen so that  $x(I_{\tau^-,\tau})=O(\eta Z)$.
\end{itemize}

\subsection{Local computation as a dependency-control tool.}\label{sec:local_roadmap}

To recover the light mass from the sampled menus, we use
inverse-probability weights that preserve the target masses in expectation.  Variance bounds then control the loss from reducing overloaded endpoint weights to obtain a feasible fractional matching. Additionally, we must also restrict to vertices left free by $M_C$ and compensate for their availability. For the weighted sums to concentrate, we need to control the variance by accounting for the dependencies among the \emph{free-status indicators} of the crucial witness $M_C$, which indicate whether each vertex is unmatched in $M_C$.

Note that we cannot simply take $M^\star\cap C_\tau$ as the crucial witness. Although it has the desired value and endpoint loads, its free-status indicators may inherit arbitrary global correlations from $M^\star$.
We therefore construct $M_C$ using \emph{local computation}, which lets us control dependencies among its free-status indicators while preserving the required value and endpoint-load guarantees. We describe the local computation tools used for this purpose. See \cref{sec:prelim-local} for details.

 A local computation algorithm (LCA) \citep{RTVX11,ARVX12} represents a global randomized object through consistent local queries.  For example, rather than explicitly constructing an entire matching, a query at a vertex determines its matched/free status by probing only a small connected portion of the input graph.  Queries use shared randomness so that all answers are consistent with one global matching. We use the following two primitives.

\paragraph{In--out bounds to variance bounds.}
On a fixed input graph, each query reads randomness from only a bounded set of locations.  Overlap between the locations read by different queries controls the covariances of their outputs. Bounding this overlap therefore yields variance bounds for the weighted sums of free-status indicators.  We measure locality in two directions.  The \emph{out} complexity $R^+$ bounds how many input vertices a single query can probe, while the \emph{in} complexity $R^-$ bounds, for any fixed input vertex, the expected number of query roots whose executions probe that vertex. Informally, out locality asks how much one output can depend on, whereas in locality asks how many outputs can depend on one input.  Together these bounds control aggregate dependence: an in--out profile $(R^+,R^-)$ bounds the total overlap among the information read by different queries, yielding covariance and variance bounds for sums of local outputs (\Cref{lem:prelim-fixed-correlation}).  This in--out viewpoint was developed for (a different model of) stochastic matching by \citet{ABGR25}.

\paragraph{Local Weighted Matching}  We construct the crucial matching using a near-optimal local weighted-matching rule (\Cref{thm:prelim-matching}) with two additional properties: (i) in--out bounds that are independent of graph
size (depending only on the maximum degree and the accuracy parameter), and (ii) edge-inclusion probabilities in the output matching are continuous in
the weights on each fixed graph.

\subsection{A crucial witness: from benchmark loads to a local matching.} \label{sec:witnesspreview}
For an arrival realization $T$, let
\[
    \mathcal G_C(T)
    :=
    \bigl([n],V,
    \{(i,v):(i,T_i,v)\in C_\tau\}\bigr)
\]
be the realized crucial graph.  We seek a matching on this graph that
nearly preserves the crucial benchmark value, does not substantially
exceed its endpoint loads, and has sufficiently localized dependencies. We refer to this matching as the \emph{crucial witness}.

The construction (\Cref{sec:crucial-witness}) proceeds in two stages.  First, we cap the realized crucial graph to obtain degree bounds, which are needed to bound the in--out complexity of the local matching rule.
Every online vertex has at most $1/\tau$ crucial neighbors, but an
offline vertex can have arbitrarily large realized crucial degree.
We delete all edges incident to an offline vertex whose crucial degree
exceeds
\[
    D=\Lambda_\eta(\tau)
      =O\!\left(\frac1\tau+\log\frac1\eta\right),
\]
and denote the resulting graph by $\mathcal G_C^D$.

Let $M_C^{\star,D}$ be the restriction of the crucial part of $M^\star$ to
$\mathcal G_C^D$, let
\[
    Q:=\mathbb E[|M_C^{\star,D}|],
\]
and let $q$ denote its endpoint marginals, indexed by online
state-events and offline vertices.  We show (\Cref{lem:capping}) that capping preserves almost all of the crucial benchmark value:
\[
    Q\ge (1-O(\eta))x(C_\tau)
    \qquad\text{and}\qquad
    \Delta(\mathcal G_C^D)\le D.
\]

Capping also makes the graph \emph{stable under resampling}: changing one
arrival state modifies only a controlled portion of
$\mathcal G_C^D$.  This property lets us extend the fixed-input
correlation bounds supplied by the in--out framework to the randomness
of the product arrivals.  Combining capping stability with the locality
of the matching construction yields a decorrelation bound
(\Cref{thm:prelim-transfer}), which controls the variance of weighted sums of
free-status indicators.

Note that capping alone does not remove the benchmark's global correlations:
$M_C^{\star,D}$ may still inherit arbitrary dependencies from $M^\star$.  We use
$M_C^{\star,D}$ only as a reference for its value and endpoint marginals.
In \Cref{subsec:witness-construction}, we replace
$M_C^{\star,D}$ by an auxiliary matching $M_C$ on
$\mathcal G_C^D$, constructed with fresh randomness and controlled
in--out locality.

The construction
uses weighted matching to discourage endpoint loads from exceeding
the capped-benchmark loads $q$.
For a candidate vector $y$ of endpoint marginals, we assign weights to the crucial state-edges as a function of $y$ and the target loads $q$, and apply our local weighted-matching primitive to the realized capped crucial graph.  Let $\mathcal R(y)$ denote the resulting vector of endpoint marginals.

We choose the edge weights so that the induced map $\mathcal R$ is
continuous, and Brouwer's theorem supplies a fixed point
$y^\star=\mathcal R(y^\star)$.
The crucial witness $M_C$ is the matching
produced at this fixed point.

The weights penalize overloaded endpoints increasingly strongly.
In particular, once the candidate load at an endpoint $u$ reaches
$(1+\eta^2)q_u$, its penalty saturates and no incident edge is selected
by the response.  Such an overload therefore cannot occur at a fixed
point.  Together with near-optimality of the weighted-matching
response, this yields a witness that nearly preserves the capped
crucial value while controlling every endpoint load.  Writing
$y:=y^\star$ for its endpoint marginals,
\begin{equation}
\label{eq:overfitp}
    \mathbb E[|M_C|]\ge(1-\eta)Q,
    \qquad
    y_u\le(1+\eta^2)q_u
    \quad\text{for every endpoint }u.
\end{equation}
Its free-status indicators also have the controlled in--out locality
needed for the light-completion argument (\Cref{sec:realizinglight} ).

Note that $M_C$ is  constructed
without looking at the sampled menus and need not itself be contained
in $H$.  Only afterward do we intersect it with $H$.  Every edge of
$M_C$ is crucial, so, conditional on the arrival states and on $M_C$,
the coverage property (D1) gives
$\Pr[e\notin H]\le e^{-k\tau}$
for every $e\in M_C$.
Hence, once $k\tau\ge\log(1/\eta)$,
$
    \mathbb E[|M_C\cap H|]
    \ge
    (1-\eta)\mathbb E[|M_C|].
$
Combining this with the witness value guarantee and
$Q\ge(1-\eta)x(C_\tau)$  gives
$
    \mathbb E[|M_C\cap H|]
    \ge
    x(C_\tau)-O(\eta Z).
$

\subsection{Fitting the light benchmark to the witness.} \label{sec:fittinglight}
We next prepare the light benchmark mass for completion on endpoints
left free by $M_C$.  This step, formalized in
\Cref{subsec:light-fit}, uses only the marginal probabilities that
endpoints are free in $M_C$; it does not depend on which endpoints are
free in a particular realization.

The initial light \emph{plan} is the restriction of the benchmark
marginal vector $x$ to $L_{\tau^-}$.  Since the light part and
$M_C^{\star,D}$ are disjoint subsets of the benchmark matching, this
plan is feasible in the residual marginal capacities left by
$M_C^{\star,D}$: at most $p_i(t)-q_{it}$ at each online state
$(i,t)$ and $1-q_v$ at each offline vertex $v$.  The witness $M_C$
instead leaves residual capacities $p_i(t)-y_{it}$ and $1-y_v$.

The pointwise load guarantee \eqref{eq:overfitp} implies that the benchmark residual capacity can exceed the residual capacity left by $M_C$ by at most an $\eta^2$ fraction of the full endpoint capacity. Consequently, at every endpoint whose residual availability under $M_C$ is at least $\eta$, the benchmark residual capacity is at most a $1+\eta^2/\eta=1+\eta$ factor larger than the witness residual capacity.

We therefore discard light edges incident to endpoints whose residual availability is below $\eta$.  Such endpoints carry only $O(\eta Z)$ benchmark light mass in total.  On the remaining endpoints, dividing the light plan by $1+\eta$ makes it feasible in the residual capacities left by $M_C$.

The resulting subplan $x^0$ is supported on $L_{\tau^-}$, satisfies
$0\le x^0_{itv}\le x_{itv}$, has
\[
    |x^0|\ge x(L_{\tau^-})-O(\eta Z), \qquad
    \sum_v x^0_{itv}\le p_i(t)-y_{it},
    \qquad
    \sum_{i,t}x^0_{itv}\le1-y_v.
\]
Moreover, every endpoint in the support of $x^0$ has residual
availability at least $\eta$.  Write
$
    z^0_{itv}:=\frac{x^0_{itv}}{p_i(t)}
$
for its corresponding conditional masses. In the next step we show how to recover nearly all of $x^0$ from the retained light edges.

\subsection{Realizing the light residual.} \label{sec:realizinglight}
We now use the light edges retained by the original menu rule to realize the fitted analysis plan $x^0$ on endpoints left free by $M_C$. See \Cref{subsec:light-sampling} for full details. This requires compensating for
both menu sampling and endpoint availability.

First consider menu sampling.  Conditional on $T_i=t$, assign the
Horvitz--Thompson weight \citep{HT52}
\[
    W_{iv}
    :=
    z^0_{itv}
    \frac{\mathbf1\{v\in S_i\}}
         {\Pr[v\in S_i\mid T_i=t]}.
\]
Thus
\[
    \mathbb E[W_{iv}\mid T_i=t]=z^0_{itv}.
\]
These weights recover the fitted light masses in expectation.
When $1/k\le\tau^-$, the coverage property (D1) also bounds their
conditional second moments by $O(\tau^-)z^0_{itv}$, while (D2)
gives nonpositive covariances between distinct weights from the
same menu.

Next we restrict to endpoints free in $M_C$.  Let $F_i$ and $F_v$
be the indicators that online vertex $i$ and offline vertex $v$
are free, and write
\[
    a_{it}:=\Pr[F_i=1\mid T_i=t]
           =1-\frac{y_{it}}{p_i(t)},
    \qquad
    a_v:=\Pr[F_v=1]=1-y_v.
\]
Simply multiplying $W_{iv}$ by $F_iF_v$ would lose light mass:
even if the free statuses were independent, this would introduce
an additional factor $a_{it}a_v$ in the expected weight.
We therefore start with the availability-normalized weights
\[
    U_{iv}
    :=
    W_{iv}\frac{F_iF_v}{a_{iT_i}a_v}.
\]
Each availability in these denominators is at least $\eta$ on
the support of $x^0$, by the residual-fitting step.

The normalization compensates for endpoint availability, but it
does not make these weights exactly unbiased.  Conditional on
$T_i=t$, the indicators $F_i$ and $F_v$ may be correlated, and
the free probability of $v$ may differ from its unconditional
value $a_v$.  Moreover, feasibility requires control of the total
weight incident to each free endpoint, not just the expected
weight of each edge.

This is where the local construction of $M_C$ is used.
The product-arrival decorrelation bound of
\Cref{thm:prelim-transfer} controls variances of weighted sums
of free-status indicators and the aggregate bias caused by
conditioning on an arrival state.  Taking $\tau^-$ sufficiently
small relative to the locality bounds at threshold $\tau$ makes
the resulting losses $O(\eta Z)$.

The light-completion argument in \Cref{lem:light-completion} divides these weights by $1+\eta$ to create slack, then scales down overloaded rows and columns in each realization.  The variance and conditioning-bias bounds show that the expected mass lost is only $O(\eta Z)$.  To control the initial value error, we also use the pointwise witness bound, which bounds the total offline matched mass $\sum_v y_v$ by $(1+\eta^2)Q$.
The result is a feasible fractional matching $\widehat Y$
supported on retained light edges with endpoints free in $M_C$,
such that
\[
    \mathbb E[|\widehat Y|]
    \ge |x^0|-O(\eta Z)
    \ge x(L_{\tau^-})-O(\eta Z).
\]
Since $\widehat Y$ uses only vertices free in $M_C$, it is compatible
with the surviving crucial matching $M_C\cap H$.  Their union is our
fractional matching certificate $Y$.  Bipartite matching
integrality therefore gives
\begin{equation}
\label{eq:roadmap-certificate}
    \mathbb E[\nu(H)]
    \ge
    x(C_\tau)+x(L_{\tau^-})-O(\eta Z)
    =
    Z-x(I_{\tau^-,\tau})-O(\eta Z).
\end{equation}

\subsection{Choosing the band thresholds.} \label{sec:thresholdsroadmap}
It remains to choose the thresholds $\tau$ and $\tau^-$. See \Cref{sec:thresholds} for full details. For $m=\Theta(1/\eta)$, we construct $\rho_0>\rho_1>\cdots>\rho_m$,
so that every consecutive pair $(\tau,\tau^-) =
    (\rho_j,\rho_{j+1})$
has enough scale separation for the light-completion argument.  The
corresponding intermediate bands are disjoint.  Since their total
benchmark mass is at most $Z$, one of the $m$ bands has mass
$O(\eta Z)$ by pigeonhole.  We use that pair in
\eqref{eq:roadmap-certificate}.

The scale separation required at each step makes the thresholds shrink
by a large power.  Iterating this recurrence for
$\Theta(1/\eta)$ levels gives
$
    \rho_m^{-1}
    \le
    \exp\exp\!\left(
        O\!\left(
            \eta^{-1}\log\frac1\eta
        \right)
    \right).
$
Taking $k\ge\rho_m^{-1}$ also guarantees the menu-coverage condition
needed for the crucial witness and the sampling scale needed for the
light completion.  Finally setting $\eta=\Theta(\varepsilon)$ proves
\Cref{thm:status-product}.

\section{Experiments: loss at moderate menu sizes}
\label{sec:experiments}

We examine the loss $1-\mathbb E[\nu(H)]/\mathbb E[\nu(\mathcal G)]$
at moderate menu sizes.  Our main experiments use VarOpt$_k$ menus
constructed from estimated conditional marginals of a maximum-matching
benchmark with random tie-breaking.  The estimates and evaluation use
independent arrival profiles.

\paragraph{Rates and constants.}
\Cref{fig:rate} considers growing hub families, in which hubs compete
for resources left free by dedicated arrivals, and a variant with
contested private resources.  We vary the number of hubs and the
presence probabilities; instance sizes grow with $k$.
Each point reports the largest estimated loss over the tested scale
grid.  Across these families, the loss is below $1.3\%$ at $k=32$
and $0.7\%$ at $k=64$, with $k$ times the loss between $0.08$ and
$0.42$ over the tested range.  These structured families already admit
an $O(1/k)$ guarantee (\Cref{prop:hub-upper,app:experiments}); the plots assess constants for the implemented sampling rule.

\paragraph{Sampling choices.}
The comparisons in \Cref{tab:rules} illustrate the value of marginal
information and of spreading choices across resources.  At $k=4$,
the reported loss on random heterogeneous arrivals is $8.1\%$ for
uniform menus versus $0.29\%$ for VarOpt; on the four-hub family,
top-$k$ menus lose $11.4\%$ versus $6.6\%$ for VarOpt.
Additional tests cover nested neighborhoods and random presence graphs,
and show sensitivity to the benchmark selector.
\Cref{app:experiments} gives the families, scale grids, estimation
procedure, and further comparisons.

\begin{figure}[t]
\centering
\includegraphics[width=0.62\textwidth]{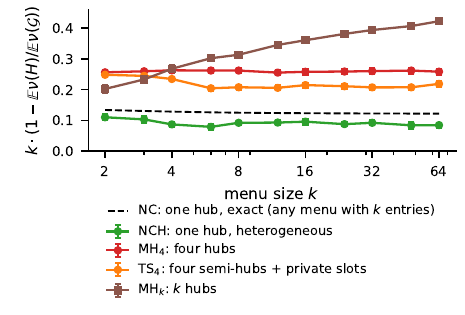}
\caption{Loss multiplied by $k$ for VarOpt$_k$ menus, maximized over
the tested scale grid.  Bars show twice the estimated evaluation
standard error.  Dashed: exact loss for the symmetric one-hub family.
Family definitions and grids are in \Cref{app:experiments}.}
\label{fig:rate}
\end{figure}

\section{Conclusion}
\label{sec:conclusion}

Bounded local menus preserve nearly all of the expected maximum-matching
value under independent vertex arrivals, with menu size depending only
on the desired accuracy.

The main open problem is to improve this dependence.  We conjecture
that the optimal worst-case loss is $\Theta(1/k)$, so menus of size
$O(1/\varepsilon)$ suffice for a $(1-\varepsilon)$ guarantee.
The hub class already admits this rate (\Cref{prop:hub-upper}).

A second direction is to remove knowledge of the arrival distributions,
for example under unknown i.i.d.\ arrivals or random-order vertex streams
\citep{KMT11,MY11}.  Can a single pass learn enough from an initial
sample to retain small menus for a near-optimal final matching?
The challenge is to learn useful statistics within the memory budget
and, for random-order streams, account for sampling without replacement.

\newpage
\section*{AI use statement}

We used generative AI tools, primarily ChatGPT (OpenAI),
Claude (Anthropic), and Gemini (Google), to explore proof strategies,
formulate and critically examine intermediate mathematical claims,
identify potential gaps, and assist with drafting and revising proofs.
We also used these tools for literature searches and improvements to implement and run experiments,
the manuscript's organization, notation, and exposition.
The authors take responsibility for the mathematical claims,
proofs, references, and final content of this work.

\section*{Reproducibility statement}
The model and sampling assumptions are specified in
\Cref{sec:model}, and the proof of the main theorem is developed
in \Cref{sec:prelim-local,sec:crucial-witness,subsec:light-fit,subsec:light-sampling,sec:thresholds,app:lca,app:decorrelation}.
The lower and upper bounds for hub instances are proved in
\Cref{sec:lower-bound}.
\Cref{app:experiments} describes the experimental families,
sampling rules, marginal estimation, evaluation procedure,
and uncertainty estimates.

\clearpage
\appendix
\section{Local computation primitives}
\label{sec:prelim-local}

The LCA model was introduced by
\citet{RTVX11} and further developed by \citet{ARVX12}; our analysis uses the in--out framework of \citet{ABGR25}.

\begin{definition}[Probe access and shared randomness]
\label{def:prelim-access}
An execution rooted at a vertex $u$ starts by probing $u$
and learns the input graph by probing vertices. Probing a vertex reveals its neighbors and local input data, and subsequent probes may only follow already revealed adjacencies; in particular the probed set is connected to $u$.  We call such an execution \emph{natural}.

Shared randomness $\pi$ is represented by independent \emph{dictionary marks}.  Each
mark is assigned to a vertex in its local support, called its
\emph{owner}, and may be read only after that owner has been probed.
Thus all randomness used by an execution is revealed locally along its
probe exploration.
\end{definition}

\begin{definition}[Local rule and query profile]
\label{def:prelim-rule}
A \emph{local rule} answers a query at every root $u$ by a natural
execution, returning an output $X_u=X_u(G,\pi)$. The answers $(X_u)_u$ jointly represent the global object computed by the rule.
Let $\mathrm{Read}(u)$ be the set of vertices probed by the execution rooted
at $u$.  Its \emph{query profile} is $(R^+,R^-)$ if
\[
    |\mathrm{Read}(u)|\le R^+
\]
pointwise for every root and every realization of the shared
randomness, while for every vertex $w$,
\[
    \mathbb E\!\left[
        \bigl|\{u:w\in\mathrm{Read}(u)\}\bigr|
    \right]
    \le R^-.
\]
We refer to $R^+$ and $R^-$ as the out-query and expected in-query
complexities, respectively.
\end{definition}

We use two LCA tools, stated next, with proofs deferred to
\Cref{app:lca}.

The first tool (\Cref{lem:prelim-fixed-correlation}) is a fixed-input correlation bound for local outputs.
For a fixed input graph and each vertex $u$, the expected number of
other queries whose read sets intersect $\mathrm{Read}(u)$ is at most
$R^+R^-$, where the expectation is over the shared randomness $\pi$.
These intersection probabilities control the absolute covariances
between the indicators $X_u$, yielding a bound on the variance of weighted sums of the form $\sum_u a_uX_u$. We apply this bound to the free-status indicators of the crucial matching. After accounting for the arrival randomness, this gives the variance bounds needed for the light-completion analysis in \Cref{lem:light-completion}.

\begin{restatable}[Fixed-input in--out correlation
{\citep[Lemmas~3.7--3.8]{ABGR25}}]{lemma}{fixedinputcorrelation}
\label{lem:prelim-fixed-correlation}
Fix a graph $G$ and a natural local rule whose randomness is grouped
into independent private tapes at their owner vertices.  Suppose its
query profile is $(R^+,R^-)$.  For roots $u,u'$, let
\[
    \delta(u,u')
    :=
    \Pr[
        \mathrm{Read}(u)\cap\mathrm{Read}(u')\ne\varnothing
    ].
\]
Then
\begin{equation}
\label{eq:prelim-delta-sum}
    \sum_{u'\ne u}\delta(u,u')
    \le R^+R^-
\end{equation}
for every $u$.  Moreover, if the query outputs are binary $X_u\in\{0,1\}$, then
\begin{equation}
\label{eq:prelim-cov-delta}
    |\Cov(X_u,X_{u'})|
    \le C\,\delta(u,u')
\end{equation}
for a universal constant $C$.  Consequently, for deterministic
coefficients $(a_u)$,
\begin{equation}
\label{eq:prelim-fixed-variance}
    \Var\!\left(\sum_u a_uX_u\right)
    \le
    C R^+R^-\sum_u a_u^2.
\end{equation}
\end{restatable}

The second tool, \Cref{thm:prelim-matching}, is a near-optimal
local weighted-matching rule. Local approximation schemes for maximum-weight matching were developed by \citet{EMR18}, building on the local-improvement framework of \citet{NO08}.
Our rule has bounds on both out-query and expected in-query complexity that are independent of graph size, and its output-edge marginals are continuous in the edge weights on each fixed graph. Continuity is needed for the fixed-point argument in
\Cref{sec:crucial-witness}.
The proof is deferred to \Cref{app:lca} and combines bounded weighted augmentations
\citep{PS04} with truncated random-greedy MIS \citep{ABGR25}.

We use auxiliary edge weights in our construction of the crucial witness, as a way to
penalize overused endpoints.  For edge weights
$w=(w_e)_{e\in E(G)}$, write
\[
    w(F):=\sum_{e\in F}w_e
    \qquad\text{and}\qquad
    \operatorname{OPT}_w(G)
    :=\max\{w(M):M\text{ is a matching of }G\}.
\]
When all edge weights equal one, these quantities reduce to
$|F|$ and $\nu(G)$, respectively.

\begin{restatable}[Local weighted matching rule]{theorem}
{weightedmatchingrule}
\label{thm:prelim-matching}
There are universal constants
$c_{\mathrm{lca}},C_{\mathrm{lca}}>0$ such that the following holds.
Fix $\eta\in(0,1/4)$ and $\gamma\in(0,1]$ with
$\gamma\ge c_{\mathrm{lca}}\eta$.
There is a natural local rule $\Psi$ with the following properties
for every finite graph $G$ of maximum degree at most $\Delta\ge2$
and every edge-weight vector $w\in[\gamma,1]^{E(G)}$.
The rule computes a matching
\[
    M=\Psi(G,\pi;w),
\]
where $\pi$ comprises all internal randomness, represented by
independent dictionary marks with local owners as in
\Cref{def:prelim-access}.

\smallskip
\noindent
\textnormal{(i) Approximation and locality.}
The output satisfies
\[
    \mathbb E_\pi[w(M)]
    \ge (1-\eta)\operatorname{OPT}_w(G).
\]
Its query profile satisfies
\[
    R^+,R^-\le\Delta^{a_\eta},
    \qquad
    a_\eta
    :=C_{\mathrm{lca}}\eta^{-2}\log^2\frac{e}{\eta}.
\]

\smallskip
\noindent
\textnormal{(ii) Continuous edge marginals.}
For fixed $G,\eta,\gamma$ and every $e\in E(G)$, the map
\[
    w\longmapsto
    \Pr_\pi[e\in\Psi(G,\pi;w)]
\]
is continuous on $[\gamma,1]^{E(G)}$.
The expectation and probability are over $\pi$, with the graph
and input weights fixed.
\end{restatable}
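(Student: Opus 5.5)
We would build $\Psi$ by combining bounded weighted augmentations with a local simulation via truncated random-greedy maximal independent sets. The approximation part rests on a Pettie--Sanders style fact \citep{PS04}: for a matching $M$ of $G$ and $\ell=O(\eta^{-1}\log(e/\eta))$, if $M$ admits no positive-gain alternating path or cycle with at most $\ell$ edges, then $w(M)\ge(1-\eta/3)\operatorname{OPT}_w(G)$. Because all weights lie in $[\gamma,1]$ and $\gamma\ge c_{\mathrm{lca}}\eta$, each short augmentation of gain at least $\gamma\eta/\ell$ raises $w(M)$ by a fixed fraction of $\operatorname{OPT}_w$ relative to the count of disjoint augmentations. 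We therefore run $s=O(\eta^{-1}\log(e/\eta))$ phases. In phase $j$ we form the conflict graph $\mathcal H_j$, whose nodes are the augmentations of length at most $\ell$ with gain above a threshold, adjacent when they share a vertex. We then apply a maximal independent set of $\mathcal H_j$ simultaneously, computed by random greedy on independent uniform ranks. Each rank is a dictionary mark owned by a canonical vertex of its augmentation (the minimum-identifier vertex, say), so executions remain natural. The standard phase analysis then gives $\mathbb E[w(M)]\ge(1-\eta)\operatorname{OPT}_w(G)$, with a small additive loss from truncating the greedy procedure.

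For locality, $\mathcal H_j$ has maximum degree at most $\Delta^{O(\ell)}$. Random-greedy MIS simulated by the recursive rank-ordered oracle has expected query complexity $\exp(O(\text{deg}))$ in general, which is too large. We therefore use the truncated version from \citet{ABGR25}: it stops the recursion at depth $O(\log(1/\eta))$ and leaves the node out, and the expected fraction of affected nodes is small enough to cost only an extra $\eta/3$ factor in value. Each phase query then reads $\Delta^{O(\ell\log(1/\eta))}$ phase-$(j-1)$ queries, and composing over $s$ phases gives the out bound $R^+\le\Delta^{O(s\ell\log(1/\eta))}=\Delta^{a_\eta}$ pointwise. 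The in-query bound follows by symmetry: the expected number of roots whose read set contains $w$ is at most the number of roots within the probe radius, which is at most the pointwise ball size. The ball radius is also bounded by the same exponent, so $R^-\le\Delta^{a_\eta}$.

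For continuity (ii), the only dependence on $w$ enters through which augmentations are eligible and how gains compare. Discontinuities appear only at comparisons and thresholds, so we smooth them. We draw a uniform random threshold offset per phase (a mark owned by the root phase-label, or per augmentation) and let eligibility be "gain $>\theta_j$" with $\theta_j$ uniform on an interval of width $\Theta(\gamma\eta/\ell)$. Greedy ranks are independent of $w$. For fixed $\pi$ the output is piecewise constant in $w$, and it changes only when some gain (a linear function of $w$) crosses some $\theta$. Because the $\theta$'s have bounded densities, $\Pr_\pi[e\in\Psi]$ is continuous in $w$, and in fact Lipschitz with a constant depending on $G$, since the graph is finite and there are finitely many augmentations. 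The random threshold costs at most a constant factor in the per-phase gain guarantee, which we absorb into $s$.

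The main obstacle is the in--out accounting for the truncated greedy MIS. Its error must stay $\eta$-small while the exponent remains $O(\eta^{-2}\log^2(e/\eta))$ rather than growing like $\ell\cdot s\cdot\log$ raised to a further power. We would check this carefully against the $\ell,s$ choices and adjust the truncation depth if needed.
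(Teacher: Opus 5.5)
Your ingredients (short augmentations via \citet{PS04}, conflict graphs, truncated random-greedy MIS, smoothing for continuity) are the paper's. However, the phase analysis has a genuine gap. In each phase you take a random-greedy MIS, with gain-independent priorities, among \emph{all} augmentations whose gain exceeds one small threshold $\theta\approx\gamma\eta/\ell$.

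The progress argument needs each member of the vertex-disjoint comparison family (total gain at least $\frac12\mathrm{gap}(M)$) to be blocked by an applied augmentation of comparable gain. With a single threshold this fails. An applied augmentation of gain just above $\theta$ can block $O(\ell)$ comparison augmentations of gain up to $\Theta(\ell)$. So you can only certify progress of order $\theta\,\mathrm{gap}(M)/\ell^2$ per phase. That forces a much larger polynomial number of phases than $O(\eta^{-1}\log(e/\eta))$ and loses the exponent $a_\eta$.

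The same defect breaks your truncation accounting. The guarantee $\mathbb E|I\setminus I_\zeta|\le\zeta\,\mathbb E|I|$ bounds cardinality. It becomes a bound on lost gain only when all candidates in one MIS have gains within a constant factor of each other.

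The missing idea is gain-ordered selection. Take working accuracy $\alpha=\Theta(\eta)$, weight lower bound $b\ge\alpha$, $W=\operatorname{OPT}_w(G)$ and $k_0=\Theta(1/\alpha)$. In each sweep, process the dyadic gain buckets $[2^j\alpha b,2^{j+1}\alpha b)$ in decreasing order, with a separate truncated MIS per bucket. Every surviving comparison augmentation is then charged either to an augmentation applied in an earlier bucket or to an exact-MIS member of its own bucket. In both cases the receiver has at least half its gain, and each receiver gets $O(k_0)$ charges. This yields $G_s\ge\frac{c}{k_0}\mathrm{gap}(M_s)-\frac{C\alpha}{k_0}W-D_s$ and $\mathbb E[D_{\mathrm{tot}}]\le4\zeta W$.

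This is also where $\gamma\ge c_{\mathrm{lca}}\eta$ is really used. It gives $J=O(\log(e/\eta))$ buckets, and it lets you discard comparison augmentations of gain below $\alpha b$ at cost $\alpha b|\mathcal A^\star|\le\alpha W$. With $k_0=\Theta(1/\eta)$ (no logarithm), $L=O(k_0\log(e/\eta))$ sweeps give $S=LJ=O(\eta^{-1}\log^2(e/\eta))$ stages. Each stage multiplies the query profile by $\Delta^{O(k_0)}$. The product $k_0S$ is the source of $\eta^{-2}\log^2(e/\eta)$, not a truncation depth.

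Second, your locality accounting uses a truncation that is not the available primitive. The truncated random-greedy MIS of \citet{ABGR25} aborts a root query once it exceeds a budget of $O(\widehat\Delta^2/\zeta)$ recursive calls. It gives a pointwise $R^+=O(\widehat\Delta^3/\zeta)$ but only an \emph{expected} $R^-=O(\widehat\Delta^2)$. Its loss is at most $\zeta\,\mathbb E|I|$, from an $O(n\widehat\Delta)$ total-call bound together with $\mathbb E|I|\ge n/(\widehat\Delta+1)$.

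A cutoff at depth $O(\log(1/\eta))$, independent of $\widehat\Delta=\Delta^{O(\ell)}$, with relative loss $\zeta$ is not established. Even if a depth tail bound held, charging against $\mathbb E|I|$ would need per-root failure probability of order $\zeta/\widehat\Delta$. That puts $\log\Delta$ into the exponent.

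Under the call-budget truncation the recursion depth is not small, so your radius/ball-size argument for $R^-$ collapses. It only gives a bound of the form $\Delta^{\Delta^{O(\ell)}}$, whose exponent depends on $\Delta$. The in-query bound must instead come from the expected in-query bound of the truncated MIS. It is then propagated stage by stage through a composition argument that conditions on all preceding-stage dictionaries, as in \Cref{lem:prelim-compose}.

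Your continuity argument by random thresholds is a legitimate alternative to the paper's weight perturbation $\widetilde w_e=w_e(1+\rho_e)/(1+\varpi)$. For fixed $w^0$ there are finitely many possible gains, so a continuously distributed threshold avoids them almost surely. Two conditions apply: the marks must be owned per augmentation rather than globally per phase, and once buckets are introduced the bucket boundaries must also be randomized (or the weights perturbed).
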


\section{The crucial witness}
\label{sec:crucial-witness}

\subsection{Capping the crucial graph}
\label{subsec:capping}

Fix $\tau\in(0,\tfrac14]$ and consider the crucial state-edges
$C_\tau$ defined in \Cref{eq:taubands}.  We first cap their realized degree.
Define the projected benchmark endpoint marginals
\[
    c_{it}
    :=
    \sum_{v:(i,t,v)\in C_\tau}z_{itv},
    \qquad
    c_v
    :=
    \sum_{i,t:(i,t,v)\in C_\tau}x_{itv}.
\]
Then
\[
    \sum_{i,t}p_i(t)c_{it}
    =
    \sum_vc_v
    =
    x(C_\tau).
\]

Every state $(i,t)$ has at most $\tau^{-1}$ crucial neighbors, since
$\sum_vz_{itv}\le1$ and every crucial edge has
$z_{itv}\ge\tau$.  Write
\[
    d:=\lceil\tau^{-1}\rceil .
\]
For an offline vertex $v$, let
\[
    N_v
    :=
    \sum_i
    \mathbf1\{(i,T_i,v)\in C_\tau\}
\]
be its realized crucial degree.  The summands are independent Bernoulli
variables, and
\begin{equation}
\label{eq:cap-expected-degree}
    \mathbb E[N_v]
    \le
    \frac{c_v}{\tau}
    \le
    \frac1\tau.
\end{equation}
Unlike the online degree, $N_v$ has no upper bound by a function of $\tau$. We show that we can cap the degree while
losing only an $O(\eta)$ fraction of the benchmark crucial value.

\begin{lemma}[Capping]
\label{lem:capping}
Fix $\eta\in(0,\tfrac12)$ and set
\[
    D=\Lambda_\eta(\tau)
    :=
    \Bigl\lceil\frac{2e}{\tau}\Bigr\rceil
    +\Bigl\lceil\log_2\frac1\eta\Bigr\rceil .
\]
Let $\mathcal G_C^D$ be the \emph{capped} crucial graph, that is,  $\mathcal G_C$ after deleting all
edges incident to an offline vertex with $N_v>D$, and define
\[
    M_C^{\star,D}:=(M^\star\cap C_\tau)\cap E(\mathcal G_C^D),
    \qquad
    Q:=\mathbb E[|M_C^{\star,D}|],
\]
with endpoint marginals
\[
    q_{it}:=\Pr[T_i=t,\ i\in V(M_C^{\star,D})],
    \qquad
    q_v:=\Pr[v\in V(M_C^{\star,D})].
\]
Then $\mathcal G_C^D$ is a deterministic function of $T$, and $M_C^{\star,D}$ is
always a matching of $\mathcal G_C^D$.  Moreover:
\begin{itemize}[topsep=2pt,itemsep=2pt]
\item[(i)] in every realization, the maximum degree is bounded by $D$
\begin{equation}
\label{eq:new-cap-degree}
    \Delta(\mathcal G_C^D)\le D=\Lambda_\eta(\tau);
\end{equation}

\item[(ii)] if $T,T'$ differ only in coordinate $i$, then every changed
capped edge is incident to one of the at most $2d$ offline vertices
adjacent to $i$ in one of the two realized crucial graphs, and
\begin{equation}
\label{eq:cap-stability}
    |E(\mathcal G_C^D(T))\triangle E(\mathcal G_C^D(T'))|
    \le 2dD.
\end{equation}
If $X_i(T,T')$ is the set of vertices whose capped neighborhood changes,
then
\begin{equation}
\label{eq:cap-vertex-stability}
    |X_i(T,T')\cup\{i\}|\le1+2d(D+1),
\end{equation}
and $X_i(T,T')$ lies within distance two of $i$ in the union of the two
realized crucial graphs;

\item[(iii)] the retained benchmark value satisfies
\begin{equation}
\label{eq:new-cap-value}
    Q\ge(1-\eta)x(C_\tau);
\end{equation}

\item[(iv)] the endpoint marginals satisfy
\[
    \sum_{i,t}q_{it}=\sum_vq_v=Q,
    \qquad
    q_{it}\le p_i(t)c_{it},
    \qquad
    q_v\le c_v.
\]
\end{itemize}
\end{lemma}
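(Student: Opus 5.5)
The structural claims are immediate; the real work is (iii), where we show the capping loss is small via a Chernoff bound.

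\emph{Structural claims.} The capped graph $\mathcal G_C^D$ is a deterministic function of $T$, since $C_\tau$ is fixed and each $N_v$ is a function of $T$. The set $M_C^{\star,D}$ is a subset of the matching $M^\star$ restricted to edges of $\mathcal G_C^D$, so it is a matching of $\mathcal G_C^D$.

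For (i), online vertices have crucial degree at most $d\le D$. Every surviving offline vertex has $N_v\le D$, and deleted offline vertices have degree $0$.

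For (ii), changing $T_i$ to $T_i'$ changes only edges at $i$: at most $d$ crucial edges are removed and at most $d$ are added. Hence only the at most $2d$ offline neighbours $v$ of $i$, in either graph, can change $N_v$ or their capping status. Every changed capped edge is incident to such a $v$, and each such $v$ has at most $D$ capped edges in each realization. A careful count then gives \eqref{eq:cap-stability}; if the naive count gives $4dD$ instead of $2dD$, one notes that each $v$ contributes its edge set in at most one "direction" of change beyond the edge at $i$. The vertices whose capped neighbourhood changes are $i$, these $\le 2d$ offline vertices, and the online neighbours of those offline vertices (at most $D$ each). This gives $1+2d(D+1)$ and distance at most two.

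For (iv), each matched edge of $M_C^{\star,D}$ contributes one online and one offline endpoint, so both sums equal $Q$. Moreover $q_{it}\le\Pr[T_i=t,\ (i,v)\in M^\star\cap C_\tau \text{ for some } v]=\sum_v x_{itv}\mathbf 1\{(i,t,v)\in C_\tau\}=p_i(t)c_{it}$, and similarly $q_v\le c_v$.

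\emph{The main step, (iii).} It suffices to show that for each crucial state-edge $(i,t,v)$,
\[
\Pr[(i,v)\in M^\star,\ T_i=t,\ N_v>D]\le \eta\, x_{itv}.
\]
Summing over state-edges then gives $x(C_\tau)-Q\le\eta\,x(C_\tau)$. The difficulty is that the event $\{(i,v)\in M^\star\}$ can correlate arbitrarily with $N_v$. We avoid this by bounding $\Pr[(i,v)\in M^\star, T_i=t, N_v>D]\le \Pr[T_i=t,\ N_v>D]$ and comparing with $x_{itv}=p_i(t)z_{itv}\ge\tau p_i(t)$. So it suffices to show
\[
\Pr[N_v>D\mid T_i=t]\le\eta\tau.
\]
Conditional on $T_i=t$, we have $N_v\le 1+N_v^{(-i)}$, where $N_v^{(-i)}$ is a sum of independent Bernoullis with mean $\mu\le 1/\tau$ by \eqref{eq:cap-expected-degree}. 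By the Chernoff bound $\Pr[X\ge s]\le (e\mu/s)^s$, with $s=D-1\ge 2e/\tau+\log_2(1/\eta)-1$ we get $e\mu/s\le 1/2$ roughly. Hence the tail is at most $2^{-s}\le 2^{-2e/\tau}\cdot 2\eta$.

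\emph{Expected obstacle.} The main difficulty is the constant bookkeeping in this last inequality. We need $2^{-2e/\tau+1}\le\tau$, which holds for $\tau\le 1/4$, but the $\pm1$ from conditioning on $T_i$ must be absorbed carefully. If the conditioning loses too much, an alternative is to compare against the pairwise bound $\Pr[(i,v)\in M^\star, T_i=t]\le p_i(t)$ directly using $z_{itv}\ge\tau$ as above; this is the step I expect to require the most care.
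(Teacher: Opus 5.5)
Your proof is correct. The structural claims, (i) and (iv) follow the paper's argument, but for the value bound (iii) you take a genuinely different route.

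\textbf{How the two arguments for (iii) differ.} The paper argues vertex by vertex. Writing $A_v$ for the event that $M^\star$ uses a crucial edge at $v$, it gets $x(C_\tau)-Q=\sum_v\Pr[A_v\cap\{N_v>D\}]\le\sum_v\Pr[N_v>D]$, which drops the benchmark entirely. It then gets the needed factor $c_v$ from the tail itself, via the factorial-moment bound $\Pr[N_v\ge r]\le \mu_v^r/r!\le c_v/(r!\,\tau^r)$. This uses $\mu_v\le c_v/\tau$ and $c_v\le 1$. Summing over $v$ gives a loss proportional to $\sum_v c_v=x(C_\tau)$, with no conditioning.

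You instead work edge by edge. You get the proportionality from the defining inequality $z_{itv}\ge\tau$ of crucial edges, i.e.\ $p_i(t)\le x_{itv}/\tau$. The price is an extra factor $\tau$ that the tail bound must supply. Conditioning on $T_i=t$ costs nothing, since $N_v^{(-i)}$ is independent of $T_i$ by the product structure.

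Both arguments ignore the correlation between $M^\star$ and $N_v$ in the same crude way. The paper's is slightly more economical: it uses the $\lceil 2e/\tau\rceil$ term in $D$ only to make $e\mu/(D+1)\le 1/2$. Yours also spends the $2^{-2e/\tau}$ slack that this term provides, to absorb the factor $\tau$. In exchange, you avoid the observation $\mu_v^r\le c_v/\tau^r$.

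\textbf{The obstacle you anticipate does not arise.} For $(i,t,v)\in C_\tau$, conditioning on $T_i=t$ gives $N_v=1+N_v^{(-i)}$ exactly. So $\{N_v>D\}=\{N_v^{(-i)}\ge D\}$ and you may take $s=D$. Then $e\mu/D\le e/(\tau D)\le 1/2$, and
\[
\Pr[N_v^{(-i)}\ge D]\le 2^{-D}\le \eta\,2^{-2e/\tau}\le\eta\tau ,
\]
where the last step holds for every $\tau\in(0,1]$. There is no factor $2$ to absorb, and the ratio bound $e\mu/s\le 1/2$ holds exactly rather than ``roughly.'' Your $s=D-1$ version also goes through, as you computed.

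\textbf{The count in (ii) should be written out.} Since $N_v$ changes by at most one, there are two cases. If the cap status of $v$ is unchanged, at most the single edge $(i,v)$ differs. If the status flips, the counts are $D$ and $D+1$. Then $v$ is isolated on one side, and the difference at $v$ is exactly the $D$ retained edges on the other side. This gives at most $D$ changed edges per affected offline vertex, and hence at most $2dD$ in total.
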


\begin{proof}
The graph $\mathcal G_C^D$ depends only on the realized states.  An
offline vertex either retains its $N_v\le D$ crucial edges or is
isolated, while every arrival has at most $d\le D$ crucial neighbors;
this proves \eqref{eq:new-cap-degree}.  Since $M_C^{\star,D}$ is obtained by
intersecting $M^\star$ with this graph, it is always a matching.

For the value bound, let $A_v$ be the event that $M^\star$ contains
a crucial edge incident to $v$.  Since $M^\star$ is a matching,
\[
    x(C_\tau)=\sum_v\Pr[A_v].
\]
The capping rule retains all crucial edges at $v$ when $N_v\le D$ and
isolates $v$ when $N_v>D$.  Thus
\[
    x(C_\tau)-Q
    =
    \sum_v \Pr[A_v\cap\{N_v>D\}]
    \le
    \sum_v \Pr[N_v>D].
\]

For every integer $r\ge1$, independence of the Bernoulli summands in
$N_v$ and \eqref{eq:cap-expected-degree} give
\[
    \Pr[N_v\ge r]
    \le\frac{\mathbb E[(N_v)_r]}{r!}
    \le\frac{c_v}{r!\tau^r},
\]
where we used $c_v\le1$.  Taking $r=D+1$ and summing over $v$ yields
\[
    x(C_\tau)-Q
    \le
    \frac{x(C_\tau)}{(D+1)!\tau^{D+1}}
    \le
    \eta x(C_\tau),
\]
because $D+1\ge2e/\tau$ and $D+1\ge\log_2(1/\eta)$.  This proves
\eqref{eq:new-cap-value}.

For stability, changing one state modifies only edges incident to the
arrival $i$, so only the at most $2d$ corresponding offline neighbors
can change their count.  At such a vertex, if the cap status does not
change then at most the edge to $i$ changes; if it flips, the two counts
are $D$ and $D+1$, so exactly the $D$ capped edges on the uncrossed side
can change.  This proves \eqref{eq:cap-stability}.  The endpoints of
these changed edges are $i$, the affected offline vertices, and at most
$D$ additional arrivals per affected offline vertex, giving
\eqref{eq:cap-vertex-stability} and the radius-two claim.

Finally, the two sums of the $q$'s count the two endpoints of the edges
of $M_C^{\star,D}$.  Conditional on $T_i=t$, the probability that $M^\star$ uses a
crucial edge at $i$ is $c_{it}$, so $q_{it}\le p_i(t)c_{it}$; similarly
$q_v\le c_v$.
\end{proof}

From now on set $\eta:=c_0\varepsilon$, where the universal constant
$c_0>0$ is chosen sufficiently small at the end, after fixing the
constants in the witness and light-completion bounds.  We choose
$\varepsilon_0$ sufficiently small that $\eta$ satisfies the universal
upper bounds required below.

\subsection{Constructing the local crucial witness}
\label{subsec:witness-construction}

We store each arrival state $T_i$ at vertex $i$.
The scores of its incident state-edges are evaluated there from
$T_i$ and the deterministic instance parameters.
Auxiliary marks have fixed local owners, as in
\Cref{def:prelim-access}, and are sampled independently of the arrival states and the private randomness used to draw menus.  This convention will also be used when coupling executions under a change of one arrival state in \Cref{app:decorrelation}.

\begin{lemma}[Crucial witness]
\label{lem:witness}
There is a universal constant $C_{\rm loc}>0$ such that the following
holds for every $0<\eta\le1/4$ and $\tau\in(0,\tfrac14]$.
Let $q_{it}$, $q_v$, and $Q$ be the capped-benchmark marginals and
value of \Cref{lem:capping}, with capping parameter $\eta$.
There is a random matching $M_C$ of the realized capped crucial graph
$\mathcal G_C^D$, a function of the arrival states and of a shared
dictionary $\pi$ independent of both the arrival states and the menu
randomness, whose endpoint marginals
\[
    y_{it}:=\Pr[T_i=t,\ i\in V(M_C)],
    \qquad
    y_v:=\Pr[v\in V(M_C)]
\]
satisfy:
\begin{itemize}[topsep=2pt,itemsep=2pt]
\item[(a)] \emph{value:}
\[
    \mathbb E[|M_C|]\ge(1-\eta)Q;
\]
\item[(b)] \emph{endpoint-load control:}
\[
    y_{it}\le(1+\eta^2)q_{it},
    \qquad
    y_v\le(1+\eta^2)q_v
\]
for every state $(i,t)$ and offline vertex $v$;
\item[(c)] \emph{locality:} the free-status queries
$u\mapsto\mathbf1\{u\notin V(M_C)\}$, over arrivals and offline
vertices alike, are answered by a natural local rule in the probe model
of \Cref{def:prelim-access,def:prelim-rule}, with out-query and
expected in-query complexity at most
\[
    \Lambda_\eta(\tau)^{\widehat a_\eta},
    \qquad
    \widehat a_\eta
    :=
    C_{\rm loc}\eta^{-2}\log^2\frac e\eta .
\]
\end{itemize}
The rule may depend on the fixed instance, capped-benchmark marginals,
and parameters $\eta,\tau$.
\end{lemma}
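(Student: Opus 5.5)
We build $M_C$ by running the local weighted-matching rule $\Psi$ of \Cref{thm:prelim-matching} on the realized capped crucial graph $\mathcal G_C^D$. The edge weights depend on a candidate endpoint-marginal vector, and we close the loop with Brouwer's theorem.

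\textbf{Weights.} Let $y=(y_{it},y_v)$ range over the compact convex box $\mathcal Y:=\prod_{(i,t)}[0,p_i(t)]\times\prod_v[0,1]$. For each endpoint $u$ with target load $q_u$, define a continuous nondecreasing penalty $\phi_u(y_u)\in[0,1]$. It equals $0$ for $y_u\le q_u$ and rises linearly to $1$ at $y_u=(1+\eta^2)q_u$. When $q_u=0$ we instead let it rise from $0$ at $y_u=0$ to $1$ at $y_u=\eta^2\cdot\epsilon_u$, for a tiny positive scale $\epsilon_u$; the limit $\epsilon_u\to0$ is handled at the end. A realized capped edge $(i,v)$ with $T_i=t$ gets weight
\[
w_{iv}(y):=\max\{\gamma,\,1-\phi_{it}(y_{it})-\phi_v(y_v)\}.
\]
Here $\gamma=c_{\rm lca}\eta$ meets the hypothesis of \Cref{thm:prelim-matching}. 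The weight is $1$ on underloaded edges. A saturated endpoint forces weight $\gamma$.

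The floor $\gamma$ is a nuisance: it means saturated edges are not strictly removed. To fix this, we delete every edge whose weight is at the floor, i.e.\ we run $\Psi$ on the subgraph of edges with $w>\gamma$. Deleting edges discontinuously would break continuity, so instead we post-process. An edge $e$ output by $\Psi$ is kept independently with probability $\min\{1,(w_e-\gamma)/\eta\}$, using a local mark owned by an endpoint. This keeps marginals continuous in $y$, and edges at a saturated endpoint are kept with probability $0$. The additional loss relative to $\mathbb E[w(M)]$ is $O(\eta)$ per unit of weight, and is absorbed after rescaling $\eta$ by a constant.

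\textbf{Response map and fixed point.} Let $\mathcal R(y)$ be the endpoint marginals of the resulting matching, averaged over $T$ and $\pi$. For each fixed $T$ the capped graph is fixed and has finitely many edges. The weights are continuous in $y$, so by \Cref{thm:prelim-matching}(ii) each edge marginal is continuous. The thinning probability is also continuous, and $T$ has finite support. Hence $\mathcal R:\mathcal Y\to\mathcal Y$ is continuous, and Brouwer gives $y^\star=\mathcal R(y^\star)$. We set $M_C$ to the output at $y^\star$. The weights are then deterministic functions of $T_i$ and fixed numbers, evaluated locally as required.

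\textbf{(b) Load control.} Suppose $y^\star_u\ge(1+\eta^2)q_u$. Then every edge at $u$ is thinned to probability $0$, so $y^\star_u=\mathcal R(y^\star)_u=0$. This contradicts $y^\star_u>0$ unless $q_u=0$. In the $q_u=0$ case the same argument gives $y^\star_u=0$ directly. So $y^\star_u<(1+\eta^2)q_u$, or $y^\star_u=0$.

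\textbf{(a) Value.} Compare with the reference matching $M_C^{\star,D}$, which is a matching of $\mathcal G_C^D$. Then $\mathbb E_\pi[w(\Psi)]\ge(1-\eta)\mathbb E[w(M_C^{\star,D})]$ pointwise in $T$. For the reference,
\[
\mathbb E[w(M_C^{\star,D})]\ge Q-\sum_u q_u\,\phi_u(y^\star_u).
\]
For the output,
\[
\mathbb E[w(M_C)]\le|M_C|\text{-mass}-\sum_u y^\star_u\phi_u(y^\star_u)+O(\eta)\mathbb E|M_C|.
\]
Whenever $\phi_u>0$ we have $y^\star_u\ge q_u$, so $\sum_u y_u^\star\phi_u\ge\sum_u q_u\phi_u$. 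Combining the two bounds gives $\mathbb E|M_C|\ge(1-O(\eta))Q$, and rescaling $\eta$ gives (a). This penalty-comparison step is the delicate one. The per-endpoint subtraction double-counts when both endpoints of an edge are penalized, which the $\max$ with $\gamma$ interacts with. To control this, I would first try replacing the subtraction by a product form $w=\gamma+(1-\gamma)(1-\phi_{it})(1-\phi_v)$ and redo the comparison via the convexity argument above.

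\textbf{(c) Locality.} A free-status query at $u$ runs $\Psi$'s execution for edges at $u$, plus the local thinning marks. Every weight is computed from data stored at the probed vertex. The max degree is at most $D=\Lambda_\eta(\tau)$ by \Cref{lem:capping}(i). With $\gamma=\Theta(\eta)$, \Cref{thm:prelim-matching}(i) gives $R^\pm\le D^{a_{\eta'}}$ for the rescaled $\eta'=\Theta(\eta)$. This is at most $\Lambda_\eta(\tau)^{\widehat a_\eta}$ for a suitable $C_{\rm loc}$.
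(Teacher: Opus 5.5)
Your architecture is the paper's. It uses a Brouwer fixed point over endpoint marginals and a price ramping linearly from $0$ at $q_u$ to $1$ at $(1+\eta^2)q_u$ (the paper's $s_u=h_\delta'$). It uses additive scores $1-\phi_u-\phi_v$ and load control via ``a saturated endpoint is never matched.'' Its value comparison with $M_C^{\star,D}$ uses $\phi_u>0\Rightarrow y^\star_u\ge q_u$, which is a direct form of the paper's tangent inequality.

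The gap is your treatment of the floor $\gamma$. You run $\Psi$ on the floored weights $w=\max\{\gamma,1-\phi_u-\phi_v\}$ over all of $\mathcal G_C^D$ and thin afterwards, claiming a loss of $O(\eta)$ per unit of weight. That is false. An output edge with $w_e\in[\gamma,\gamma+\eta)$ is discarded with probability $1-(w_e-\gamma)/\eta$, so it can lose all of its weight. With $M$ the pre-thinning output and $g_e=1-\phi_u-\phi_v$, the correct accounting is
\[
  \mathbb E[g(M_C)]\ \ge\ \mathbb E[w(M)]-(\gamma+\eta)\,\mathbb E\bigl[\#\{e\in M:\ w_e<\gamma+\eta\}\bigr].
\]
The fixed point constrains only the thinned marginals. $\mathbb E|M|$ is bounded only by $\mathbb E[\nu(\mathcal G_C^D)]\le x(C_\tau)/\tau$.

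This is not slack in the estimate. Suppose $\mathcal G_C^D$ is a perfect matching and $M^\star$ uses each edge independently with probability $\tau$. Then at a fixed point every edge has thinning probability $\theta_e\approx\tau$, hence $w_e\approx\gamma+\eta\tau$. So $M$ has about $Q/\tau$ near-floor edges, and $\mathbb E[w(M)]-\mathbb E[g(M_C)]\approx\gamma Q/\tau\gg Q$. Put differently, $\Psi$'s additive slack $\alpha\operatorname{OPT}_w$ is measured against an optimum inflated by floor credit on edges that never survive thinning, so it is not $O(\eta Q)$. Shrinking $\gamma$ and the ramp to order $\eta\tau$ would make the exponent in (c) depend on $\tau$.

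The missing idea is to threshold \emph{before} calling $\Psi$, using random thresholds to keep continuity. Draw locally owned marks $\Theta_e\sim\mathrm{Unif}[\gamma,2\gamma]$, keep $e$ iff $g_e(y)\ge\Theta_e$, and run $\Psi$ on the kept subgraph with the unfloored scores $g_e(y)\in[\gamma,1]$. Output its matching as is. Almost surely no $\Theta_e$ equals $g_e(y^0)$, so the kept graph is locally constant near $y^0$. Part~(ii) of \Cref{thm:prelim-matching} plus dominated convergence then give continuity.

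Every output edge now counts fully, so at the fixed point $\mathbb E[g(M_C)]\le\mathbb E|M_C|\le(1+\eta^2)Q$. This bounds the maximum kept score by $\mathbb E[W^\star]\le 2Q$ and makes $\Psi$'s loss $O(\alpha Q)$. Benchmark edges removed by thresholding have score below $2\gamma$ (possibly negative, which only helps) and cost at most $2\gamma Q$.

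This also resolves your double-counting concern. Keep the additive unfloored score: its exact linearity $\mathbb E[g(M)]=\sum_u(\tfrac12-\phi_u)r_u$ in endpoint marginals is what makes the comparison work, and the product form would forfeit it.

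Finally, for $q_u=0$ your $\epsilon_u$-ramp yields only $y^\star_u<\eta^2\epsilon_u$, not $y^\star_u=0$. A limit $\epsilon_u\to0$ of fixed points is not the fixed point of any single rule. Instead, delete edges with an endpoint having $q_u=0$ (equivalently set $\phi_u\equiv1$ there); this removes no edge of $M_C^{\star,D}$ with positive probability.
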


\begin{proof}
Set
\[
    \alpha
    :=
    \frac{\eta}{4(1+c_{\rm lca})},
    \qquad
    \delta:=\eta^2,
    \qquad
    \gamma:=c_{\rm lca}\alpha.
\]
Thus $\alpha\le\eta$, $\gamma\ge c_{\rm lca}\alpha$, and
\[
    2\gamma
    =
    \frac{c_{\rm lca}}{2(1+c_{\rm lca})}\eta
    <1.
\]
We will invoke \Cref{thm:prelim-matching} with accuracy parameter
$\alpha$.

If $Q=0$, then \eqref{eq:new-cap-value} gives $x(C_\tau)=0$, and
$M_C=\varnothing$ satisfies the lemma.  Assume henceforth that $Q>0$.
The instance, capped-benchmark marginals, and parameters are fixed
throughout the proof.

Let
\[
    \mathcal U
    :=
    \{(i,t):t\in\mathcal T_i\}\sqcup V
\]
be the endpoint index set: an online endpoint is indexed by a state
$(i,t)$ and an offline endpoint by a vertex $v$.
Thus $q_u$ and, below, $y_u$ denote unconditional endpoint marginals
on either side.

Write
\[
    \mathcal U_+:=\{u\in\mathcal U:q_u>0\}.
\]
We work on the endpoint-marginal feasibility box
\[
    K
    :=
    \prod_{u\in\mathcal U_+}[0,\bar y_u],
    \qquad
    \bar y_{it}=p_i(t),\quad \bar y_v=1,
\]
and extend vectors in $K$ by zero outside $\mathcal U_+$.

We preview the construction of the crucial witness.  Given a candidate
endpoint-marginal vector $y\in K$, we use $y$ to assign scores to the
edges of each realization of $\mathcal G_C^D$, and then use independent
random thresholds to retain a weighted subgraph.  Denote by $M(y)$ the
random matching obtained by running the local weighted-matching rule
$\Psi$ of \Cref{thm:prelim-matching} on this weighted graph.
Let $\mathcal R(y)$ be the vector of unconditional endpoint marginals
of $M(y)$, averaged over the arrival realization and all auxiliary
randomness.

Thus $\mathcal R$ maps a candidate marginal vector $y$ to the marginals
actually produced by the matching rule in response to the weights
induced by $y$.  We will show that $\mathcal R$ is a continuous
self-map of $K$.  Since $K$ is nonempty, compact, and convex, Brouwer's
theorem then gives a fixed point
\[
    y^\star=\mathcal R(y^\star).
\]
We define the crucial witness by $M_C:=M(y^\star)$.

\paragraph{Penalized response map.}
We restrict the response graph to state-edges whose two endpoint
indices lie in $\mathcal U_+$.  This removes no edge of
$M_C^{\star,D}$ with positive probability.

For $y\in K$, define
\[
    \operatorname{val}(y)
    :=
    \frac12\sum_{u\in\mathcal U_+}y_u.
\]
When $y$ is the endpoint-marginal vector of a random matching,
$\operatorname{val}(y)$ is its expected size.

Define
\[
    h_\delta(z)
    :=
    \begin{cases}
        0,
            & z\le0,\\[1mm]
        \dfrac{z^2}{2\delta},
            & 0<z<\delta,\\[2mm]
        z-\dfrac{\delta}{2},
            & z\ge\delta.
    \end{cases}
\]
This function is convex and continuously differentiable, with
\[
    h_\delta'(z)
    =
    \begin{cases}
        0,
            & z\le0,\\[1mm]
        \dfrac{z}{\delta},
            & 0<z<\delta,\\[2mm]
        1,
            & z\ge\delta.
    \end{cases}
\]
In particular, $0\le h_\delta'(z)\le1$.

Define the one-sided overuse penalty
\[
    P(y)
    :=
    \sum_{u\in\mathcal U_+}
        q_u
        h_\delta\!\left(\frac{y_u-q_u}{q_u}\right),
\]
and the penalized objective
\[
    \Phi(y):=\operatorname{val}(y)-P(y).
\]
The endpoint price is
\begin{equation}
\label{eq:endpointprice}
    s_u(y)
    :=
    \frac{\partial P}{\partial y_u}(y)
    =
    h_\delta'\!\left(\frac{y_u-q_u}{q_u}\right),
\end{equation}
so
\[
    0\le s_u(y)\le1.
\]
Assign an edge with endpoints $u,v$ the score
\[
    g_{uv}(y):=1-s_u(y)-s_v(y).
\]
Thus $g_{uv}(y)\le1$, although it may be negative.

For every potential crucial edge $e$, independently draw a locally
owned threshold
\[
    \Theta_e\sim\operatorname{Unif}[\gamma,2\gamma].
\]
For $y\in K$ and a realization of the arrivals and thresholds, let
$G_y$ be the subgraph of $\mathcal G_C^D$ obtained by retaining exactly
those edges $e$ for which
\[
    g_e(y)\ge\Theta_e.
\]
Every edge of $G_y$ therefore has score in $[\gamma,1]$.
On $G_y$, with edge weights $g_e(y)$, apply the local weighted-matching
rule $\Psi$ of \Cref{thm:prelim-matching}, with accuracy parameter
$\alpha$.  Let
\[
    M(y):=\Psi(G_y,\pi_\Psi;g(y)),
\]
where $\pi_\Psi$ denotes the internal randomness of $\Psi$.
The threshold marks $\Theta$ and $\pi_\Psi$ are mutually independent
and independent of the arrivals and menu randomness.

Define the response map $\mathcal R$ to be the unconditional
endpoint-marginal vector of $M(y)$:
\[
    \mathcal R_{it}(y)
    :=
    \Pr[T_i=t,\ i\in V(M(y))],
    \qquad
    \mathcal R_v(y)
    :=
    \Pr[v\in V(M(y))].
\]
The probabilities average over the arrivals, threshold marks, and
internal randomness of $\Psi$, so $\mathcal R$ is deterministic once
the instance and capped-benchmark data are fixed.

We verify the hypotheses of Brouwer's fixed-point theorem.
First, $K$ is nonempty, compact, and convex.  Second,
$\mathcal R(K)\subseteq K$: every output of $\Psi$ is a matching, so
\[
    \mathcal R_{it}(y)\le p_i(t),
    \qquad
    \mathcal R_v(y)\le1.
\]

It remains to prove continuity.  Fix $y^0\in K$.
The prices $s_u(y)$, and hence the edge scores $g_e(y)$, are continuous
in $y$.  Since there are finitely many potential edges and every
$\Theta_e$ has a continuous distribution, with probability one
\[
    \Theta_e\ne g_e(y^0)
    \qquad\text{for every potential edge }e.
\]
For every realization outside this null event, the thresholded graph
$G_y$ is therefore constant throughout some neighborhood of $y^0$.
On this fixed graph the surviving weights $g_e(y)$ vary continuously
with $y$ and remain in $[\gamma,1]$.
Part~(ii) of \Cref{thm:prelim-matching} implies that the inclusion
probability of every edge in the output matching is continuous in
these weights.  Endpoint marginals are finite sums of such
edge-inclusion probabilities and are therefore continuous as well.
Since they are bounded by one, dominated convergence over the arrivals
and threshold marks shows that the unconditional response map
$\mathcal R$ is continuous.

Thus $\mathcal R$ is a continuous self-map of the nonempty compact
convex set $K$.  Brouwer's theorem gives a fixed point
\[
    y^\star=\mathcal R(y^\star).
\]
Fix one such $y^\star$ and define
\[
    M_C:=M(y^\star).
\]
The fixed point $y^\star$ is deterministic; all randomness of $M_C$
comes from the arrivals and the auxiliary dictionary
\[
    \pi:=(\Theta,\pi_\Psi).
\]

By definition, $\mathcal R(y^\star)$ is exactly the endpoint-marginal
vector of $M_C$.  Since $y^\star=\mathcal R(y^\star)$, the coordinates
of $y^\star$ are therefore precisely the endpoint marginals appearing
in the statement of the lemma.  Consequently,
\begin{equation}
\label{eq:new-fixed-point-value}
    \operatorname{val}(y^\star)
    =
    \frac12\sum_{u\in\mathcal U_+}y^\star_u
    =
    \mathbb E[|M_C|],
\end{equation}
because every matching edge contributes exactly two matched endpoints.

\paragraph{Endpoint-load control.}
Suppose that, for some $u\in\mathcal U_+$,
\[
    y^\star_u\ge(1+\delta)q_u.
\]
Then \eqref{eq:endpointprice} gives $s_u(y^\star)=1$.
Since every endpoint price is nonnegative, every edge $e$ incident to
$u$ has
\[
    g_e(y^\star)\le0.
\]
But every threshold satisfies $\Theta_e\ge\gamma>0$, so no edge
incident to $u$ survives in $G_{y^\star}$.  Hence $u$ is never matched
by $M(y^\star)$ and
\[
    \mathcal R_u(y^\star)=0,
\]
contradicting
\[
    y^\star_u=\mathcal R_u(y^\star)
    \ge(1+\delta)q_u>0.
\]
Thus
\[
    y^\star_u<(1+\delta)q_u
    \qquad\text{for every }u\in\mathcal U_+.
\]
For $u\notin\mathcal U_+$, both sides are zero by construction.
Since $\delta=\eta^2$,
\begin{equation}
\label{eq:pointwise-load}
    y^\star_u\le(1+\eta^2)q_u
    \qquad\text{for every endpoint }u.
\end{equation}
This proves part~(b).

Summing \eqref{eq:pointwise-load} over all endpoints and using
$\sum_uq_u=2Q$ gives
\begin{equation}
\label{eq:witness-upper-value}
    \mathbb E[|M_C|]
    =
    \operatorname{val}(y^\star)
    \le
    (1+\eta^2)Q.
\end{equation}

\paragraph{Value.}
For this paragraph, write
\[
    g_e:=g_e(y^\star),
    \qquad
    g(M):=\sum_{e\in M}g_e.
\]
Since $P$ is convex, $\Phi$ is concave, with
\[
    \frac{\partial\Phi}{\partial y_u}(y)
    =
    \frac12-s_u(y).
\]
The capped-benchmark endpoint vector $q$ belongs to $K$, and
\[
    P(q)=0,
    \qquad
    \operatorname{val}(q)
    =
    \frac12\sum_uq_u
    =
    Q.
\]
Hence $\Phi(q)=Q$.  Applying the tangent inequality for the concave
function $\Phi$ at $y^\star$, evaluated at $q$, gives
\[
    Q-\Phi(y^\star)
    \le
    \sum_u
        \left(\frac12-s_u(y^\star)\right)
        (q_u-y^\star_u).
\]

For any random matching $M$ supported on endpoint indices in
$\mathcal U_+$, with endpoint marginals $r_u$, its expected score at
the prices determined by $y^\star$ is
\begin{equation}
\label{eq:score-from-marginals}
    \mathbb E[g(M)]
    =
    \sum_u
        \left(\frac12-s_u(y^\star)\right)r_u.
\end{equation}
Applying \eqref{eq:score-from-marginals} to
$M_C^{\star,D}$ and $M_C$ gives
\[
    \mathbb E[g(M_C^{\star,D})]
    =
    \sum_u
        \left(\frac12-s_u(y^\star)\right)q_u
\]
and
\[
    \mathbb E[g(M_C)]
    =
    \sum_u
        \left(\frac12-s_u(y^\star)\right)y^\star_u.
\]
Therefore
\begin{equation}
\label{eq:new-tangent}
    Q-\Phi(y^\star)
    \le
    \mathbb E[g(M_C^{\star,D})]
    -
    \mathbb E[g(M_C)].
\end{equation}

For a realization of the arrivals and threshold marks, let $W^\star$
be the maximum matching weight of $G_{y^\star}$ under the score
weights $g_e$.
For each realization of the arrivals, benchmark randomness, and
threshold marks, the retained edges of $M_C^{\star,D}$ form a feasible
matching in $G_{y^\star}$ and hence have total score at most $W^\star$.
Every benchmark edge removed by thresholding has score below
$2\gamma$.  Therefore, pointwise,
\[
    g(M_C^{\star,D})
    \le
    W^\star+2\gamma|M_C^{\star,D}|.
\]
Taking expectations and using
$\mathbb E[|M_C^{\star,D}|]=Q$ gives
\begin{equation}
\label{eq:new-benchmark-score}
    \mathbb E[g(M_C^{\star,D})]
    \le
    \mathbb E[W^\star]+2\gamma Q.
\end{equation}

Conditional on the arrivals and threshold marks, $G_{y^\star}$ and
its score weights are fixed and lie in $[\gamma,1]$.
Applying part~(i) of \Cref{thm:prelim-matching} with accuracy
parameter $\alpha$ gives
\[
    \mathbb E_{\pi_\Psi}
        [g(M_C)\mid T,\Theta]
    \ge
    (1-\alpha)W^\star.
\]
Averaging yields
\begin{equation}
\label{eq:new-response-score}
    \mathbb E[g(M_C)]
    \ge
    (1-\alpha)\mathbb E[W^\star].
\end{equation}

Every edge used by $M_C$ has score at most $1$, so
\[
    \mathbb E[g(M_C)]
    \le
    \mathbb E[|M_C|]
    \le
    (1+\eta^2)Q
\]
by \eqref{eq:witness-upper-value}.  Hence
\[
    \mathbb E[W^\star]
    \le
    \frac{1+\eta^2}{1-\alpha}Q
    \le
    2Q,
\]
where the last inequality uses $\eta\le1/4$.

Substituting \eqref{eq:new-benchmark-score} and
\eqref{eq:new-response-score} into \eqref{eq:new-tangent} gives
\[
\begin{aligned}
    Q-\Phi(y^\star)
    &\le
    \alpha\,\mathbb E[W^\star]+2\gamma Q \\
    &\le
    2\alpha Q+2c_{\rm lca}\alpha Q \\
    &=
    2(1+c_{\rm lca})\alpha Q
    =
    \frac{\eta}{2}Q.
\end{aligned}
\]
Thus
\[
    \Phi(y^\star)\ge(1-\eta/2)Q.
\]
Since $P(y^\star)\ge0$, \eqref{eq:new-fixed-point-value} gives
\[
    \mathbb E[|M_C|]
    =
    \operatorname{val}(y^\star)
    \ge
    \Phi(y^\star)
    \ge
    (1-\eta)Q,
\]
proving part~(a).

Combining this with
$Q\ge(1-\eta)x(C_\tau)$ from \Cref{lem:capping} also gives
\begin{equation}
\label{eq:new-crucial-value}
    \mathbb E[|M_C|]
    \ge
    (1-2\eta)x(C_\tau).
\end{equation}

\paragraph{Locality.}
The witness is computed from the capped graph using the locally owned
threshold marks $\Theta$ and the internal dictionary $\pi_\Psi$, all
independent of the arrivals and menu randomness.  The fixed point
$y^\star$ and the capped-benchmark marginals are deterministic
parameters of the rule.

The thresholded graph has maximum degree at most
\[
    D=\Lambda_\eta(\tau).
\]
The weighted-matching rule is invoked with accuracy $\alpha$, so
\Cref{thm:prelim-matching} gives query profile
\[
    D^{a_\alpha},
    \qquad
    a_\alpha
    =
    C_{\rm lca}\alpha^{-2}\log^2\frac e\alpha.
\]
Simulating the threshold-filtering layer from the capped graph costs
at most one additional factor $D$.  Since $\alpha$ is a fixed
universal constant multiple of $\eta$, choosing $C_{\rm loc}$
sufficiently large gives
\[
    a_\alpha+1
    \le
    \widehat a_\eta
    =
    C_{\rm loc}\eta^{-2}\log^2\frac e\eta .
\]
Consequently
\begin{equation}
\label{eq:new-lca-complexity}
    R^+,R^-
    \le
    \Lambda_\eta(\tau)^{\widehat a_\eta}.
\end{equation}
This proves part~(c) and completes the proof.
\end{proof}

\paragraph{Presence of the crucial witness.}
Every edge of $M_C$ is crucial.  Conditional on the arrival profile
and on $M_C$, the menu randomness is independent of the witness, so
(D1) gives
\[
    \Pr[e\notin H\mid T,M_C]
    \le
    e^{-k\tau}
    \qquad
    \text{for every }e\in M_C.
\]
Hence, whenever $k\tau\ge\log(1/\eta)$,
\[
    \mathbb E[|M_C\cap H|\mid T,M_C]
    \ge
    (1-\eta)|M_C|.
\]
Averaging and applying \Cref{lem:witness},
\begin{equation}
\label{eq:witness-presence}
    \mathbb E[|M_C\cap H|]
    \ge
    (1-\eta)\mathbb E[|M_C|]
    \ge
    (1-\eta)^2Q.
\end{equation}
Together with $Q\ge(1-\eta)x(C_\tau)$, this also gives
\[
    \mathbb E[|M_C\cap H|]
    \ge
    (1-3\eta)x(C_\tau).
\]
For VarOpt menus, the same threshold condition implies
$1/k\le\tau$, so every crucial edge, and hence every edge of $M_C$,
is retained deterministically.

\paragraph{Interface to light completion.}
From this point on, we use only four properties of $M_C$: its
near-optimal value, the pointwise endpoint-load control
\[
    y_u\le(1+\eta^2)q_u,
\]
independence from the menus conditional on the arrival profile, and
the free-status query profile \eqref{eq:new-lca-complexity}.
The pointwise load bound lets us transfer the benchmark light plan to
the marginal endpoint capacities left by $M_C$ with only
$O(\eta Z)$ loss, while the independence and locality properties
provide the variance and bias bounds needed to recover that mass from
sampled light edges.

\section{Fitting the light benchmark to the witness}
\label{subsec:light-fit}

Fix $0<\tau^-<\tau$ and recall the decomposition
$C_\tau,I_{\tau^-,\tau},L_{\tau^-}$ from \Cref{eq:taubands}.  Let
\[
    a_{it}:=1-\frac{y_{it}}{p_i(t)}
    \quad (p_i(t)>0),
    \qquad
    a_v:=1-y_v
\]
be the marginal endpoint availabilities left by $M_C$.

The benchmark light plan is feasible in the residual capacities left
by the capped benchmark $M_C^{\star,D}$, whereas we need a plan
feasible in the capacities left by $M_C$.  The pointwise load control
of \Cref{lem:witness} shows that these two sets of capacities are close,
except possibly at endpoints with very small residual availability.
We discard such endpoints and then use one common rescaling.

\begin{lemma}[Residual fitting]
\label{lem:residual-fitting}
There is a deterministic subplan $x^0$ supported on
$L_{\tau^-}$ such that
\begin{equation}
\label{eq:new-x0}
    |x^0|\ge x(L_{\tau^-})-C\eta Z,
\end{equation}
and, writing $z^0_{itv}:=x^0_{itv}/p_i(t)$,
\[
    0\le z^0_{itv}\le\tau^-,
    \qquad
    \sum_vz^0_{itv}\le a_{it},
    \qquad
    \sum_{i,t}x^0_{itv}\le a_v.
\]
Every endpoint in the support of $x^0$ has residual availability at
least $\eta$.
\end{lemma}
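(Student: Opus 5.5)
The plan is to start from the benchmark light plan $x|_{L_{\tau^-}}$, discard light edges at endpoints whose residual availability under $M_C$ is below $\eta$, and rescale once by $(1+\eta)^{-1}$.

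\textbf{Step 1: benchmark feasibility.} The light edges of $M^\star$ and the edges of $M_C^{\star,D}$ are disjoint subsets of the same matching $M^\star$. Hence, for every state $(i,t)$,
\[
\sum_v x_{itv}\mathbf 1\{(i,t,v)\in L_{\tau^-}\}\le p_i(t)-q_{it},
\]
since the event that $i$ is matched by a light edge while $T_i=t$ is disjoint from the event that $i$ is matched in $M_C^{\star,D}$ while $T_i=t$. Likewise, for every offline $v$,
\[
\sum_{i,t:(i,t,v)\in L_{\tau^-}}x_{itv}\le 1-q_v.
\]

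\textbf{Step 2: residual fitting.} By \Cref{lem:witness}(b), $y_u\le q_u+\eta^2 q_u\le q_u+\eta^2\bar y_u$. Writing $r_u:=\bar y_u-q_u$ for the benchmark residual and $\bar y_u a_u=\bar y_u-y_u$ for the witness residual, this gives $r_u\le \bar y_u a_u+\eta^2\bar y_u$. Call $u$ \emph{good} if $a_u\ge\eta$. At a good endpoint, $\eta^2\bar y_u\le\eta\,\bar y_u a_u$, so $r_u\le(1+\eta)\bar y_u a_u$.

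Define $x^0_{itv}:=x_{itv}/(1+\eta)$ for light state-edges whose endpoints $(i,t)$ and $v$ are both good, and $x^0_{itv}:=0$ otherwise. Then:
\begin{itemize}
\item The row and column constraints follow from Step 1 after dividing by $p_i(t)$ on the online side.
\item The bounds $0\le z^0_{itv}\le z_{itv}\le\tau^-$ follow from the definition of $L_{\tau^-}$.
\item The support condition holds by construction.
\end{itemize}

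\textbf{Step 3: value loss (main obstacle).} The rescaling costs at most $\eta\,x(L_{\tau^-})\le\eta Z$. The real work is bounding the light mass at bad endpoints.

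At a bad endpoint $u$, the light mass is at most $r_u$. We have
\[
r_u\le \bar y_u a_u+\eta^2\bar y_u<(\eta+\eta^2)\bar y_u\le 2\eta\,\bar y_u.
\]
Summing $\bar y_u$ is useless, since it gives $\sum_v 1=|V|$. Instead I use $a_u<\eta$, which means $y_u>(1-\eta)\bar y_u$, so $\bar y_u\le 2y_u$. Therefore $r_u\le 4\eta\,y_u$.

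Summing over bad endpoints on both sides gives at most $4\eta\sum_u y_u$. By \eqref{eq:witness-upper-value} this is $\le 8\eta(1+\eta^2)Q$, and $Q\le x(C_\tau)\le Z$, so the loss is $O(\eta Z)$. This gives \eqref{eq:new-x0} with a universal constant $C$.

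Determinism is immediate: $x^0$ depends only on the deterministic marginals $x$, $q$, and $y^\star$.
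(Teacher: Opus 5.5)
Your proposal is correct and follows the paper's proof. You discard light edges at endpoints with availability below $\eta$, use the pointwise load bound $y_u\le(1+\eta^2)q_u$ to show that the benchmark residual is within a $(1+\eta)$ factor of the witness residual at the remaining endpoints, and rescale by $(1+\eta)^{-1}$. The only difference is in bounding the discarded mass: the paper bounds the light load at a bad endpoint by $\frac{\eta+\eta^2}{1-\eta}q_u$ and sums the $q$-loads to $Q$ on each side, whereas you bound it by $4\eta y_u$ and invoke $\sum_u y_u\le 2(1+\eta^2)Q$, which is equally valid.
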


\begin{proof}
For an endpoint $u$, write $b_u=p_i(t)$ if $u=(i,t)$ and $b_u=1$
if $u=v$, so $b_u-y_u=a_ub_u$.  Let $\ell_u$ be the load of the
benchmark light plan at $u$.  The light benchmark and the capped
crucial benchmark are disjoint parts of the same matching, so
\begin{equation}
\label{eq:light-benchmark-residual}
    \ell_u+q_u\le b_u.
\end{equation}

Delete all light edges incident to endpoints with $a_u<\eta$.
At each such endpoint, the witness bound gives
\[
    (1-\eta)b_u<y_u\le(1+\eta^2)q_u,
    \qquad
    \ell_u\le b_u-q_u
    \le\frac{\eta+\eta^2}{1-\eta}\,q_u.
\]
Since the $q$-loads sum to $Q$ on each side, the total deleted mass
is at most $2(\eta+\eta^2)Q/(1-\eta)=O(\eta Z)$.

Let $\bar x$ be the remaining plan.  On its support,
$b_u-y_u\ge\eta b_u$, and hence
\[
\begin{aligned}
    b_u-q_u
    &=(b_u-y_u)+(y_u-q_u)\\
    &\le(b_u-y_u)+\eta^2b_u
     \le(1+\eta)(b_u-y_u).
\end{aligned}
\]
Thus $x^0:=\bar x/(1+\eta)$ fits every residual marginal capacity
$b_u-y_u$.  This scaling loses at most $\eta Z$ additional mass.
Dividing the online-state constraints by $p_i(t)$ gives the claimed
row bounds; the offline constraints give the column bounds.
Finally, deletion and scaling preserve
$0\le z^0_{itv}\le z_{itv}\le\tau^-$, and every remaining endpoint
has availability at least $\eta$.
\end{proof}

\section{Sampling the light residual}
\label{subsec:light-sampling}

The fitted light plan $x^0$ from \Cref{lem:residual-fitting}
satisfies the marginal capacity constraints left by $M_C$.
We now recover its mass, up to $O(\eta Z)$ expected loss, as a
feasible fractional matching $\widehat Y$ on retained light edges
whose endpoints are free in $M_C$.  Combining $\widehat Y$ with
$M_C\cap H$ completes the fractional matching certificate
supported on $H$.

We begin with moment bounds for the menu weights.
We then use the locality of $M_C$ to bound variances and
conditioning bias for weighted sums of free-status indicators.
These estimates allow us to control the mass lost when making
the light-edge loads feasible.

\paragraph{Menu weights.}
The analysis of the given derived menu rule uses only (D1)
and the nonpositive covariances in (D2).
For a state-edge in the support of
$x^0$, the Horvitz--Thompson weight
\[
    W_{iv}
    :=
    z^0_{itv}\,
    \frac{\mathbf1\{v\in S_i\}}
         {\Pr[v\in S_i\mid T_i=t]}
\]
has conditional mean $z^0_{itv}$.  Whenever $1/k\le\tau^-$,
using \eqref{eq:inclusion-lower-bound}, $z^0_{itv}\le z_{itv}$,
and $\max\{z^0_{itv},1/k\}\le\tau^-$, we obtain
\[
    \mathbb E[W_{iv}^2\mid T_i=t]
    =
    \frac{(z^0_{itv})^2}{\Pr[v\in S_i\mid T_i=t]}
    \le
    \frac85 z^0_{itv}\max\{z^0_{itv},1/k\}
    \le
    4\tau^- z^0_{itv}.
\]
Conditional on $T_i=t$, distinct weights from the same menu have
nonpositive covariance.  Conditional on the arrival profile $T$, menus of different arrivals are independent and are independent of $M_C$.

\subsection{Decorrelation from locality.}
The fixed-input correlation bound
(\Cref{lem:prelim-fixed-correlation}) controls dependence among
free-status indicators conditional on the arrival states.
The following theorem uses capping stability to account also
for the arrival randomness, bounding variances of weighted sums
and the bias caused by conditioning on a single arrival state.

\begin{restatable}[In--out decorrelation transfer]{theorem}{decorrelationtransfer}
\label{thm:prelim-transfer}
Let $M_C$ be a matching of $\mathcal G_C^D(T)$ computed by a
natural local rule whose free-status queries have query profile
$(R^+,R^-)$ for every arrival realization $T$.
The input consists of the capped graph with each arrival vertex
$i$ labelled by $T_i$, together with deterministic instance data.
The label $T_i$ is revealed only when $i$ is probed.
The shared dictionary $\pi$ is independent of $T$ and of the
menu randomness.  Put
\[
    F_v:=\mathbf1\{v\notin V(M_C)\},
    \qquad
    F_i:=\mathbf1\{i\notin V(M_C)\}.
\]
There is a universal constant $C_{\mathrm{dec}}\ge1$ such that,
with
\[
    \Gamma
    :=
    C_{\mathrm{dec}}
    \bigl(1+d^2D(D+d+1)\bigr)R^+R^-,
\]
the following bounds hold for all deterministic coefficient
families $(h_v)_v$ and $(h_{it})_{i,t}$:
\[
    \Var\!\left(\sum_v h_vF_v\right)
    \le \Gamma\sum_v h_v^2,
    \qquad
    \Var\!\left(
        \sum_{i,t}h_{it}\mathbf1\{T_i=t\}F_i
    \right)
    \le \Gamma\sum_{i,t}p_i(t)h_{it}^2.
\]
The first inequality remains valid after conditioning on any
single state $T_i=t$ with $p_i(t)>0$.
In addition, for every deterministic coefficient array satisfying
$0\le h_{itv}\le h_{\max}$,
\[
    \sum_{i,t}p_i(t)
    \left|
        \sum_v h_{itv}
        \bigl(
            \Pr[F_v=1\mid T_i=t]-\Pr[F_v=1]
        \bigr)
    \right|
    \le
    C_{\mathrm{dec}}d^2DR^-h_{\max}x(C_\tau).
\]
\end{restatable}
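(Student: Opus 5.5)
Both variance bounds will come from the law of total variance with respect to the arrival profile $T$. Writing $X:=\sum_v h_vF_v$, we have
\[
\Var(X)=\mathbb E[\Var(X\mid T)]+\Var(\mathbb E[X\mid T]).
\]

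\textbf{Conditional term.} Conditional on $T$, the input graph $\mathcal G_C^D(T)$ with its labels is fixed. The shared dictionary $\pi$ is independent of $T$, so \Cref{lem:prelim-fixed-correlation} applies directly and gives
\[
\Var(X\mid T)\le C R^+R^-\sum_v h_v^2.
\]
For the online sum, take coefficients $a_i:=h_{iT_i}$, which are deterministic given $T$. This gives conditional variance at most $CR^+R^-\sum_i h_{iT_i}^2$, and its expectation is $CR^+R^-\sum_{i,t}p_i(t)h_{it}^2$. Conditioning on a single state $T_i=t$ does not affect this step, because the bound holds pointwise in $T$.

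\textbf{Arrival term.} Set $f(T):=\mathbb E_\pi[X\mid T]$. Since $T$ is a product measure, the Efron--Stein inequality gives
\[
\Var f\le \tfrac12\sum_i\mathbb E[(f(T)-f(T^{(i)}))^2],
\]
where $T^{(i)}$ resamples only coordinate $i$. This inequality also holds after conditioning on $T_i=t$: that coordinate simply drops out of the sum.

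To bound a single difference $f(T)-f(T')$, I will couple the two executions with the same $\pi$. A root $u$ can produce a different answer only if its execution probes a vertex of $X_i(T,T')\cup\{i\}$. Indeed, if an execution probes no such vertex, it sees identical neighborhoods, labels and marks in both inputs, so its answer is the same. By \Cref{lem:capping}(ii) this changed set has size at most $1+2d(D+1)$ and lies within distance two of $i$. Hence
\[
|f(T)-f(T')|\le\sum_u|h_u|\,\Pr[\mathrm{Read}_T(u)\cap X\neq\varnothing \text{ or the same under } T'].
\]
Squaring and applying Cauchy--Schwarz together with the out-bound gives, for each $i$,
\[
\Bigl(\sum_u|h_u|\pi_{u,i}\Bigr)^2\le\Bigl(\sum_u\pi_{u,i}\Bigr)\Bigl(\sum_u h_u^2\pi_{u,i}\Bigr).
\]
The in-bound controls the first factor by $(1+2d(D+1))R^-$, since it sums expected in-degrees over the changed vertices. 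When we sum over $i$, the factor $\sum_i\pi_{u,i}$ is controlled because $u$ reads at most $R^+$ vertices, and each read vertex $w$ lies in $X_i$ only for those $i$ within distance two of $w$ whose change can affect $w$. Each $w$ is affected by at most $O(dD)$ arrivals in each of the two realizations, again by the degree bounds. Combining these counts should produce the factor $d^2D(D+d+1)R^+R^-$.

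\textbf{Main obstacle.} The arrival term is where I expect the real difficulty. The changed set depends on both $T$ and $T'$, and the in-bound $R^-$ is stated for a fixed graph and in expectation over $\pi$, whereas Efron--Stein needs $\pi_{u,i}$ averaged over $T$. My plan is to count the changed vertices under both $T$ and $T^{(i)}$, which have the same distribution, so each is a legitimate realization to which the fixed-graph bounds apply. I would then accept a crude polynomial loss in $d,D$, and these losses are what produce $\Gamma$.

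\textbf{Bias bound.} I will use the same coupling. Write
\[
\Pr[F_v=1\mid T_i=t]-\Pr[F_v=1]=\mathbb E[F_v(T^{i\leftarrow t})-F_v(T^{i\leftarrow T_i'})].
\]
This is nonzero only if the execution at $v$ probes the changed set around $i$, and that set is empty unless $i$ has a crucial edge in one of the two realizations. Summing $h_{itv}\le h_{\max}$ over $v$ and applying the in-bound gives at most $h_{\max}(1+2d(D+1))R^-$ times the probability that state $t$ or $T_i'$ has crucial neighbors. Weighting by $p_i(t)$ and summing over $i$, the probability that arrival $i$ realizes a crucial edge is at most $\sum_{t,v}p_i(t)\mathbf 1\{z_{itv}\ge\tau\}\le \tau^{-1}\sum x_{itv}$ over crucial edges. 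This yields $O(d^2DR^-h_{\max}x(C_\tau))$, with $d\ge 1/\tau$ absorbing the factor $\tau^{-1}$.
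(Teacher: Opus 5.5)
\textbf{Gap in the multiplicity count.} Your architecture matches the paper's: the law of total variance, the fixed-input bound conditional on $T$, Efron--Stein with a same-dictionary coupling and first-divergence argument, and the same coupling for the bias. Your weighted Cauchy--Schwarz step is an equivalent substitute for the paper's Schur test. The problem is the one step that actually produces $\Gamma$. After Cauchy--Schwarz you need, conditional on $T$ and for every input vertex $w$,
\[
    \sum_j\Pr\bigl[w\in\widehat X_j(T,T^{(j)})\mid T\bigr]\le 1+d(D+d+1).
\]
You assert instead that each $w$ is affected by $O(dD)$ arrivals ``by the degree bounds.'' This fails pointwise, for two reasons.

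First, offline degrees in the uncapped crucial graph are unbounded, so ``within distance two'' is not a bounded set. You need the observation that an offline $v$ with $N_v(T)\ge D+2$ cannot change under a single resample. That leaves at most $D+1$ currently adjacent arrivals.

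Second, and more seriously, there is no single ``other realization'': each $j$ has its own $T^{(j)}$. The number of arrivals $j$ whose fresh state $T'_j$ creates a new crucial edge at $v$ can be as large as $n$, for example when many arrivals each have a tiny-probability crucial state at $v$. This count is bounded only in expectation. The bound uses two facts: $T'_j$ is independent of $T$, and $\sum_{j,t}p_j(t)\mathbf 1\{z_{jtv}\ge\tau\}\le c_v/\tau\le d$ because the benchmark is a matching. This is exactly where the $+d$ in $D+d+1$ comes from.

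The expectation must be taken conditional on $T$, so that it factors against $L^T_{u,w}=\Pr_\pi[w\in\mathrm{Read}_T(u)]$ and the row sums $\le R^+$ can then be used. The decoupling comes from independence of the fresh draw, not from exchangeability of $(T,T^{(j)})$. Swapping the two makes the read probabilities depend on $j$, and the $R^+$ structure is lost when you sum over $j$.

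\textbf{Smaller points.}
\begin{itemize}
\item Including reads under $T'$ is unnecessary. Before the first divergence the two executions coincide, so the $T$-execution itself must read a vertex of $\widehat X_j$. Only the fixed-input bounds for $T$ are needed, and the exchangeability device can be dropped.
\item Your arrival-term analysis covers only fixed coefficients. For the online sum, the coefficient $h_{j,T_j}$ itself changes when $T_j$ is resampled. Efron--Stein therefore needs the extra term $|h_{j,T_j}-h_{j,T'_j}|$, whose expected squares sum to at most $4\sum_{j,t}p_j(t)h_{jt}^2$.
\item Your bias argument is sound, and the union bound over $v$ gives the same $d\,x(C_\tau)$. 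One correction: $\widehat X_i$ always contains $i$, so it is never empty. What you actually need is that an isolated $i$ cannot be reached by a natural execution rooted at an offline vertex.
\end{itemize}
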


The proof is given in \Cref{app:decorrelation}.
We apply the theorem in \Cref{lem:light-completion}.
Together with the menu moment bounds above, its variance bounds
control fluctuations in the sampled light-edge loads, while its
bias bound controls the error caused by conditioning on an
arrival state.

For the witness of \Cref{lem:witness},
\[
    R^+,R^-\le\Lambda_\eta(\tau)^{\widehat a_\eta},
    \qquad
    d\le D=\Lambda_\eta(\tau).
\]
Consequently, choosing a sufficiently large universal constant
$C_4$ and defining
\begin{equation}
\label{eq:new-Aeta}
    A_\eta
    :=
    C_4\eta^{-2}\log^2\frac{C_4}{\eta},
\end{equation}
we obtain
\begin{equation}
\label{eq:new-Gamma}
    \Gamma\le\Lambda_\eta(\tau)^{A_\eta}.
\end{equation}

\subsection{Light completion}
\label{app:light-completion}

We now combine the menu moment bounds with
\Cref{thm:prelim-transfer} to construct the light completion.
For this construction, we require
\begin{equation}
\label{eq:new-b-small}
    4\tau^{-}
    \le
    \eta^{C_{\rm light}}
    \Lambda_\eta(\tau)^{-(A_\eta+C_{\rm light})},
\end{equation}
where $C_{\rm light}\ge6$ is a universal constant.
We choose thresholds satisfying this condition in
\Cref{sec:thresholds}.

\begin{lemma}[Light completion]
\label{lem:light-completion}
Assume $1/k\le\tau^{-}$ and \eqref{eq:new-b-small}.  Let $x^0$ be the
deterministic light subplan given by \Cref{lem:residual-fitting}, so that
\[
    0\le z^0_{itv}:=\frac{x^0_{itv}}{p_i(t)}\le\tau^{-},
    \qquad
    \sum_vz^0_{itv}\le a_{it},
    \qquad
    \sum_{i,t}x^0_{itv}\le a_v,
\]
and every endpoint in its support has residual availability at least
$\eta$.  Then there is a random feasible fractional matching
$\widehat Y$, supported on retained edges from the support of $x^0$ and
using only vertices free in $M_C$, such that
\[
    \mathbb E[|\widehat Y|]\ge |x^0|-3\eta Z.
\]
\end{lemma}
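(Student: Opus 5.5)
We construct $\widehat Y$ explicitly from the availability-normalized Horvitz--Thompson weights and then repair feasibility by row and column scaling. For each retained state-edge $(i,T_i,v)$ in the support of $x^0$, set
\[
    U_{iv}:=\frac{1}{1+\eta}\,W_{iv}\,\frac{F_iF_v}{a_{iT_i}a_v},
\]
which is supported on retained edges with both endpoints free in $M_C$. Define $\widehat Y_{iv}:=U_{iv}/\max\{1,\sum_{v'}U_{iv'},\sum_{i'}U_{i'v}\}$. Each entry is divided by at least its row sum and at least its column sum, so every row and every column of $\widehat Y$ has total at most one. Hence $\widehat Y$ is feasible. Since $\min\{1,1/r,1/c\}\ge1-(r-1)_+-(c-1)_+$, the loss satisfies
\[
    |U|-|\widehat Y|\le\sum_i U_{i\cdot}(U_{i\cdot}-1)_+ +\sum_vU_{\cdot v}(U_{\cdot v}-1)_+ .
\]
The proof therefore reduces to two estimates: $\mathbb E|U|\ge|x^0|/(1+\eta)-\eta Z$, and the expected overflow terms are $O(\eta Z)$. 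To bound the overflow we use $u(u-1)_+\le (u-\mu)^2\cdot O(1/\eta)$ whenever $\mu\le1/(1+\eta)$, which gives $u(u-1)_+\lesssim \eta^{-1}(u-\mu)^2$. The task thus becomes bounding second moments around a mean that is at most $1/(1+\eta)$, up to bias terms.

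\textbf{Expectation.} Conditional on $T$ and $M_C$, the menus are independent of $F$, so $\mathbb E[U_{iv}\mid T,M_C]=z^0_{iT_iv}F_iF_v/((1+\eta)a_{iT_i}a_v)$. We need $\mathbb E[F_iF_v\mid T_i=t]\approx a_{it}a_v$ in aggregate over the support of $x^0$. We write $F_iF_v=F_i\cdot a_v+F_i(F_v-a_v)$. The first term averages exactly to $a_{it}a_v$. For the second term, we bound $\mathbb E[F_i(F_v-a_v)\mid T_i=t]$ by $|\Pr[F_v\mid T_i=t]-a_v|$ plus a covariance term. To do this, we condition on $T_i=t$ and use $F_i\le1$ together with a Cauchy--Schwarz bound on $\sum_v h_v(F_v-\Pr[F_v\mid T_i=t])$ with $h_v=z^0_{itv}/a_v\le\tau^-/\eta$. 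The conditional variance bound of \Cref{thm:prelim-transfer} gives $\sqrt{\Gamma\sum_v h_v^2}\le\sqrt{\Gamma\tau^-/\eta^2}\cdot\sqrt{\sum_vz^0_{itv}}$. The bias part is handled by the last inequality of \Cref{thm:prelim-transfer} with $h_{\max}=\tau^-/\eta^2$, giving $O(d^2DR^-\tau^-\eta^{-2}Z)$. By \eqref{eq:new-Gamma} and \eqref{eq:new-b-small}, both errors are at most $\eta Z$ once $C_{\rm light}$ is large. Summing over states weighted by $p_i(t)$ and then dividing by $1+\eta$ gives the expectation bound.

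\textbf{Row overflow.} Fix $T_i=t$ with $F_i=1$. The row sum is $\sum_vU_{iv}$, whose conditional mean is at most $\sum_vz^0_{itv}/((1+\eta)a_{it})\cdot(\text{avail. correction})$, which is at most $1/(1+\eta)$ up to the same fluctuation. Its variance has two parts. The menu part is at most $\sum_v\mathbb E W_{iv}^2/(\eta^4)\le4\tau^-\eta^{-4}$, using (D2) to drop cross terms. The free-status part is controlled by the conditional variance bound of \Cref{thm:prelim-transfer}, which gives at most $\Gamma\tau^-\eta^{-4}$. The row contribution is therefore $O(\eta^{-5}\Gamma\tau^-)$ per unit of $p_i(t)$-mass with $\sum_vz^0_{itv}>0$. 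To make this sum proportional to mass rather than to the number of states, we bound the overflow by $U_{i\cdot}\cdot(\cdot)$ so that it is weighted by $\sum_v z^0$; that is, we use $\mathbb E[U(U-1)_+]\le\mathbb E[U^2]^{1/2}\mathbb E[(U-1)_+^2]^{1/2}$ together with $\mathbb E U\lesssim\sum_vz^0/\eta$.

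\textbf{Column overflow (main obstacle).} For an offline vertex $v$, the column sum $\sum_{i}U_{iv}$ mixes arrival randomness, menu randomness, and the free indicators $F_i$. We split it into three pieces: the deviation due to menus given $T$, which is independent across $i$ with second moment $\le4\tau^-\eta^{-4}\sum_{i,t}x^0_{itv}$; the deviation of $\sum_{i,t}\mathbf 1\{T_i=t\}F_i\,z^0_{itv}/a_{it}$ from its mean, bounded by the second variance bound of \Cref{thm:prelim-transfer} with $h_{it}=z^0_{itv}/a_{it}\le\tau^-/\eta$; and the correlation with $F_v$, which we absorb by bounding $F_v\le1$ and dividing by $a_v\ge\eta$. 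The delicate point is that the mean of the column sum is $\mathbb E[\sum\mathbf1\{T_i=t\}F_iz^0/a_{it}]=\sum_{i,t}x^0_{itv}\le a_v$, so after normalization by $a_v$ and the factor $1/(1+\eta)$ the mean is at most $1/(1+\eta)$. This uses $\Pr[F_i\mid T_i=t]=a_{it}$ exactly. The leftover deviations are bounded by $O(\Gamma\tau^-\eta^{-6})\cdot\sum_{i,t}x^0_{itv}$. Summing over $v$ and invoking \eqref{eq:new-b-small} with $C_{\rm light}\ge6$ makes this at most $\eta Z$. Adding the three losses gives $\mathbb E|\widehat Y|\ge|x^0|-3\eta Z$.
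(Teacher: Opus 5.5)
Your architecture matches the paper's: availability-normalized Horvitz--Thompson weights with a $1/(1+\eta)$ slack, exact column means from $\Pr[F_i=1\mid T_i=t]=a_{it}$, transfer-theorem bounds on free-status fluctuations, then an overflow repair. Your column-overflow decomposition is sound.

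The gap is in the expectation step. You condition on $T_i=t$ and bound $F_i\le 1$, which gives a per-state covariance error of order $\eta^{-2}\sqrt{\Gamma\tau^-}\,\sqrt{\ell_{it}}$ with $\ell_{it}=\sum_v z^0_{itv}$. You then sum this with weights $p_i(t)$. But $\sum_{i,t}p_i(t)\sqrt{\ell_{it}}$ is not bounded by any multiple of $Z$. Take $n$ always-present arrivals sharing one offline neighbor, with a benchmark that matches a uniformly random one of them. Once $n\ge 1/\tau^-$, every edge is light, $Z=1$, $\ell_{it}\approx 1/n$, and the sum is about $\sqrt n$. Here $\Gamma$ and $\tau^-$ depend only on $\eta$, while $n$ is unbounded, so no choice of $C_{\rm light}$ makes your bound at most $\eta Z$. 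This is the same ``proportional to mass rather than to the number of states'' issue you flagged for row overflow, but you leave it unresolved here.

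The missing ingredient: a free indicator fluctuates only where the witness places load, and the total load is controlled by \Cref{lem:witness}. The paper reads the value by columns, where the mean is exact. This gives $\mathbb E[|U|]-|x^0|=\sum_v\Cov(F_v,P_v)/a_v$ with $P_v=\sum_iW_{iv}F_i/a_{iT_i}$. Then $\Var(F_v)=a_vy_v$, $\Var(P_v)\le\xi\ell_v$, and Cauchy--Schwarz across $v$ bound the error by $\sqrt{\xi/\eta}\,\sqrt{(\sum_v\ell_v)(\sum_v y_v)}$. This is $O(\sqrt{\xi/\eta})\,Z$ because $\sum_v y_v\le(1+\eta^2)Q$.

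Your row-based route can be repaired the same way. Use
\[
|\Cov(F_i,X\mid T_i=t)|\le\sqrt{a_{it}(1-a_{it})\Var(X\mid T_i=t)},\qquad 1-a_{it}=y_{it}/p_i(t),
\]
and apply Cauchy--Schwarz over states with $\sum_{i,t}y_{it}\le(1+\eta^2)Q$. Alternatively, note that the covariance vanishes unless $c_{it}>0$, and such states carry $p$-mass at most $d\,x(C_\tau)$.

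Two smaller points. In the row overflow, do not condition on $F_i=1$: the transfer theorem gives no moment bounds under that conditioning. Instead use $F_i\le 1$ inside the nonnegative overflow functional, as the paper's scalar estimate does. Also, $1-1/\max\{1,r,c\}\le(1-1/r)_++(1-1/c)_+$ gives the repair loss $\sum_i(r_i-1)_++\sum_v(c_v-1)_+$ directly. The extra factor $r$ in your loss bound, and the Cauchy--Schwarz patch you use to handle it, are unnecessary.
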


\begin{proof}
Set
\[
    \ell_{it}:=\sum_vz^0_{itv}\le a_{it},
    \qquad
    \ell_v:=\sum_{i,t}x^0_{itv}\le a_v.
\]
Both $\sum_{i,t}p_i(t)\ell_{it}$ and $\sum_v\ell_v$ equal
$|x^0|\le Z$.  Throughout the proof, sums and quotients involving
availability denominators are restricted to the support of $x^0$;
there all availabilities are at least $\eta$.  All other edge weights
and endpoint loads are zero.

\paragraph{Preload moments.}
Recall the menu weights $W_{iv}$ defined above and set $b:=4\tau^-$.
Their moment bounds are
\[
    \mathbb E[W_{iv}\mid T_i=t]=z^0_{itv},
    \qquad
    \mathbb E[W_{iv}^2\mid T_i=t]\le b z^0_{itv}.
\]
Conditional on $T_i=t$, distinct weights from the same menu have
nonpositive covariance by (D2), and the menu of $i$ is independent of
the witness.  Define the preloads
\[
    P_i:=\sum_vW_{iv}\frac{F_v}{a_v},
    \qquad
    P_v:=\sum_iW_{iv}\frac{F_i}{a_{iT_i}}.
\]

For a row, condition first on $T_i=t$ and its menu.  The conditional
offline variance bound of \Cref{thm:prelim-transfer} gives
\[
    \Var(P_i\mid T_i=t,\mathrm{menu}_i)
    \le\Gamma\sum_v\frac{W_{iv}^2}{a_v^2}.
\]
The expectation of this bound over the menu is at most
$\Gamma b\ell_{it}/\eta^2$.
The conditional mean is
$\sum_vW_{iv}\Pr[F_v=1\mid T_i=t]/a_v$; its menu variance is at
most $b\ell_{it}/\eta^2$ by (D2).

For a column, independently pre-sample a latent menu $S_{it}$ for
every state $(i,t)$, independently of $T$ and the witness dictionary,
and use $S_{iT_i}$ as the realized menu.  Let $W_{itv}$ denote its
latent Horvitz--Thompson weight.  Conditional on this table,
\[
    P_v=\sum_{i,t}\frac{W_{itv}}{a_{it}}
                  \mathbf1\{T_i=t\}F_i,
    \qquad
    \mathbb E[P_v\mid\mathrm{menu\ table}]
       =\sum_{i,t}p_i(t)W_{itv}.
\]
The mean identity uses
$\Pr[F_i=1\mid T_i=t]=a_{it}$ and independence of the table from
$T$ and the witness.  The online-state variance bound of
\Cref{thm:prelim-transfer}, averaged over the table, is at most
$\Gamma b\ell_v/\eta^2$.
Independence of the latent menus bounds the variance of the displayed
conditional mean by
$\sum_{i,t}p_i(t)^2\mathbb E[W_{itv}^2]\le b\ell_v$.
Thus the law of total variance, $\Gamma\ge1$, and $\eta\le1$ give
\begin{equation}
\label{eq:new-preload-var}
    \Var(P_i\mid T_i=t)\le\xi\ell_{it},
    \qquad
    \Var(P_v)\le\xi\ell_v,
    \qquad
    \xi:=\frac{2\Gamma b}{\eta^2}.
\end{equation}
The column mean is exact: $\mathbb E[P_v]=\ell_v$.
For the row means, write
\[
    \delta_{it}:=
       \left|\mathbb E[P_i\mid T_i=t]-\ell_{it}\right|,
    \qquad
    B_{\mathrm{bias}}:=\sum_{i,t}p_i(t)\delta_{it}.
\]
Applying the bias bound of \Cref{thm:prelim-transfer} with
$h_{itv}=z^0_{itv}/a_v\le\tau^-/\eta$ gives
\begin{equation}
\label{eq:light-aggregate-bias}
    B_{\mathrm{bias}}
    \le C_{\mathrm{dec}}d^2DR^-\frac{\tau^-}{\eta}x(C_\tau)
    \le\frac{\Gamma b}{4\eta}Z
    =\frac{\xi\eta}{8}Z.
\end{equation}
Here the definition of $\Gamma$ and $R^+\ge1$ justify the second
inequality.  By \eqref{eq:new-Gamma}, \eqref{eq:new-b-small}, and
$C_{\rm light}\ge6$, using $D=\Lambda_\eta(\tau)\ge2$, we have
\begin{equation}
\label{eq:light-effective-smallness}
    \Gamma b\le\frac{\eta^6}{2},
    \qquad
    \xi\le\eta^4,
    \qquad
    B_{\mathrm{bias}}\le\frac{\eta^5}{8}Z.
\end{equation}

\paragraph{Value before removing overload.}
Define the availability-normalized weights
\[
    U_{iv}:=W_{iv}\frac{F_iF_v}{a_{iT_i}a_v}.
\]
Reading their value by columns and using the exact column means gives
\[
    \mathbb E[|U|]-|x^0|
       =\sum_v\frac{\Cov(F_v,P_v)}{a_v}.
\]
Since $\Var(F_v)=a_v(1-a_v)=a_vy_v$, Cauchy--Schwarz gives
\[
\begin{aligned}
    \left|\mathbb E[|U|]-|x^0|\right|
    &\le\sum_v\sqrt{\frac{\xi\ell_vy_v}{a_v}}\\
    &\le\sqrt{\frac{\xi}{\eta}}
       \sqrt{\left(\sum_v\ell_v\right)\left(\sum_vy_v\right)}\\
    &\le\sqrt{\frac{(1+\eta^2)\xi}{\eta}}\,Z
     \le\eta Z.
\end{aligned}
\]
We used $\sum_vy_v\le(1+\eta^2)Q\le(1+\eta^2)Z$ from
\Cref{lem:witness}, followed by
\eqref{eq:light-effective-smallness} and $\eta\le1/4$.
Now create slack by setting
\[
    \widetilde Y_{iv}:=\frac{U_{iv}}{1+\eta},
    \qquad
    R_i:=\sum_v\widetilde Y_{iv},
    \qquad
    L_v:=\sum_i\widetilde Y_{iv}.
\]
Then
\begin{equation}
\label{eq:new-Y-value}
    \mathbb E[|\widetilde Y|]\ge|x^0|-2\eta Z.
\end{equation}

\paragraph{Direct control of overload.}
We use a scalar estimate that requires no independence between a
preload and its free-status indicator.  Let $F$ be Bernoulli with
mean $a>0$, let $P$ have mean $\mu$ and variance $\sigma^2$, and
let $\ell\le a$.  With $D_P:=P-\mu$,
\[
\begin{aligned}
    \left(\frac{FP}{(1+\eta)a}-1\right)_+
    &=\frac{F}{(1+\eta)a}\bigl(P-(1+\eta)a\bigr)_+\\
    &\le\frac{F}{(1+\eta)a}
       \bigl((D_P-\eta a)_++(\mu-\ell)_+\bigr).
\end{aligned}
\]
Using $(x-s)_+\le x^2/(4s)$ for $s>0$, $F\le1$, and
$\mathbb E[F]=a$, we obtain
\begin{equation}
\label{eq:new-overload-scalar}
    \mathbb E\!\left[
       \left(\frac{FP}{(1+\eta)a}-1\right)_+
    \right]
    \le(\mu-\ell)_++\frac{\sigma^2}{4\eta a^2}.
\end{equation}

For a row, apply this conditionally on $T_i=t$, with
$F=F_i$, $P=P_i$, $a=a_{it}$, and $\ell=\ell_{it}$.
For a column, apply it unconditionally with
$F=F_v$, $P=P_v$, $a=a_v$, and $\ell=\ell_v$; its mean error is
zero.  Summing over row states with weights $p_i(t)$ and over
columns, \eqref{eq:new-preload-var} and
\eqref{eq:light-aggregate-bias} give
\[
\begin{aligned}
    \mathbb E\!\left[
       \sum_i(R_i-1)_++\sum_v(L_v-1)_+
    \right]
    &\le B_{\mathrm{bias}}
       +\frac{\xi}{2\eta^3}|x^0|\\
    &\le\left(\frac{\eta^5}{8}+\frac\eta2\right)Z
     \le\eta Z.
\end{aligned}
\]

In each realization, scale every overloaded row of $\widetilde Y$
down to load one, then do the same for the remaining overloaded
columns.  The first step loses exactly $\sum_i(R_i-1)_+$ and can
only reduce column loads, so the second loses at most
$\sum_v(L_v-1)_+$.  Column scaling preserves row feasibility.
The resulting $\widehat Y$ is feasible, is supported on retained
light edges, and uses only vertices free in $M_C$.
Together with \eqref{eq:new-Y-value}, this gives
\[
    \mathbb E[|\widehat Y|]\ge|x^0|-3\eta Z.
\]
\end{proof}

\section{Choosing thresholds and completing the proof}
\label{sec:thresholds}

We now choose $\tau^-<\tau$ so that \eqref{eq:new-b-small} holds and
the intermediate band has small benchmark mass.  Let
\[
    m:=\lceil1/\eta\rceil,
    \qquad
    C_3:=C_{\rm light}+1,
    \qquad
    B_\eta:=A_\eta+C_3.
\]  Starting from $\rho_0=1/4$, define
\begin{equation}
\label{eq:new-rho-rec}
    \rho_{j+1}
    :=
    \tfrac12\eta^{C_3}\Lambda_\eta(\rho_j)^{-B_\eta},
    \qquad 0\le j<m.
\end{equation}
Since $\Lambda_\eta(\rho)\ge2e/\rho$ and $B_\eta\ge1$, we have
\[
    \rho_{j+1}
    \le
    \frac{\eta^{C_3}}{4e}\rho_j
    \le
    \eta\rho_j.
\]
Thus the bands
\[
    \mathcal B_j
    :=
    \{(i,t,v):\rho_{j+1}<z_{itv}<\rho_j\},
    \qquad 0\le j<m,
\]
are disjoint.  Their total benchmark mass is at most $Z$, so for some
$j<m$,
\[
    x(\mathcal B_j)\le Z/m\le\eta Z.
\]
Fix such a $j$ and set
\[
    \tau:=\rho_j,
    \qquad
    \tau^-:=\rho_{j+1}.
\]
Then
\[
    4\tau^-
    =
    2\eta^{C_3}\Lambda_\eta(\tau)^{-B_\eta}
    \le
    \eta^{C_{\rm light}}
    \Lambda_\eta(\tau)^{-(A_\eta+C_{\rm light})},
\]
where the inequality uses $2\eta/\Lambda_\eta(\tau)\le1$.
Thus \eqref{eq:new-b-small} holds.  Moreover
$\rho_{j+1}\le\eta\rho_j$ gives
\begin{equation}
\label{eq:threshold-separation}
    \frac{\tau}{\tau^-}\ge\frac1\eta
    \ge\log\frac1\eta.
\end{equation}
Hence, whenever $1/k\le\tau^-$, we also have
$k\tau\ge\log(1/\eta)$, as required for
\eqref{eq:witness-presence}.

The complete certificate is the fractional matching
\[
    Y:=\mathbf1_{M_C\cap H}+\widehat Y,
\]
where $\mathbf1_{M_C\cap H}$ is the edge-incidence vector of the
retained crucial witness.  Since $\widehat Y$ uses only retained light
edges and vertices free in $M_C$, $Y$ is feasible and supported on $H$.
Bipartite matching integrality gives $\nu(H)\ge |Y|$ in every
realization.  Using
\Cref{lem:witness,lem:residual-fitting,lem:light-completion} and
\eqref{eq:witness-presence},
\[
\begin{aligned}
    \mathbb E[\nu(H)]
    &\ge
    \mathbb E[|M_C\cap H|]+\mathbb E[|\widehat Y|] \\
    &\ge
    x(C_\tau)+x(L_{\tau^-})-C\eta Z \\
    &=
    Z-x(I_{\tau^-,\tau})-C\eta Z \\
    &\ge
    (1-C'\eta)Z.
\end{aligned}
\]
Choosing the constant $c_0$ in $\eta=c_0\varepsilon$ so that
$C'c_0\le1$ proves the $(1-\varepsilon)$ value guarantee.

It remains only to bound the menu size.  Since the chosen
$\tau^- =\rho_{j+1}\ge\rho_m$, it suffices that
\[
    k\ge\left\lceil\rho_m^{-1}\right\rceil.
\]
We show that the value of $k_\varepsilon$ in
\Cref{thm:status-product} meets this requirement.
Put $u_j:=\log(1/\rho_j)$.  From \eqref{eq:new-rho-rec} and
\[
    \Lambda_\eta(\rho)
    \le
    C\left(\rho^{-1}+\log\frac1\eta\right)
\]
we obtain
\[
    u_{j+1}
    \le
    B_\eta\left(u_j+C\log\frac C\eta\right).
\]
Thus
\[
    u_m
    \le
    C B_\eta^m\log\frac C\eta,
\]
and, since
\[
    B_\eta=O\!\left(\eta^{-2}\log^2\frac1\eta\right),
    \qquad
    m=O(\eta^{-1}),
\]
we have
\[
    \log u_m
    \le
    C\eta^{-1}\log\frac C\eta.
\]
Consequently, for a sufficiently large universal $C$,
\[
    \left\lceil\rho_m^{-1}\right\rceil
    \le
    \left\lceil
    \exp\!\left(
        \exp\!\left(
            C\varepsilon^{-1}\log\frac C\varepsilon
        \right)
    \right)
    \right\rceil
    =k_\varepsilon,
\]
as required.

\paragraph{Where the double exponential comes from.}
The locality/decorrelation argument costs a polynomial
$\Lambda_\eta(\tau)^{A_\eta}$ with
$A_\eta=O(\eta^{-2}\log^2(1/\eta))$.  Threshold selection then descends
through $m=\Theta(1/\eta)$ separated bands, each transition raising the previous scale to a power $B_\eta=A_\eta+O(1)$.  Hence
$\log(1/\rho_m)$ is exponential in
$\eta^{-1}\log(1/\eta)$, and converting back from the logarithmic scale
produces the double exponential.  Improving the universal rate
therefore seems to require either milder dependence of the light
completion on the crucial threshold or a way to avoid the
$\Theta(1/\eta)$-level threshold ladder.

\section{Local computation machinery}
\label{app:lca}

This appendix supplies the technical LCA arguments behind
\Cref{lem:prelim-fixed-correlation,thm:prelim-matching}.  We first state
an elementary composition rule and prove the fixed-input correlation
bound.  We then state the truncated random-greedy MIS primitive and
use it to construct the weighted matching rule.  Continuous edge
marginals are obtained by an internal perturbation of the weights.

\begin{lemma}[Composition of query profiles]
\label{lem:prelim-compose}
Let $\mathcal R$ be a natural local rule with query profile
$(P^+,P^-)$ and dictionary $\pi_0$.  Suppose an outer natural local
rule, using an independent fresh dictionary $\pi_1$, accesses
$\mathcal R$ only by issuing root queries.  Conditional on every
realization of $\pi_0$, assume that
\begin{itemize}[topsep=2pt,itemsep=1pt]
\item each output-root execution issues at most $B^+$ queries to
$\mathcal R$;
\item for every root $x$ of $\mathcal R$, the expected number, over
$\pi_1$, of queries to $\mathcal R$ at $x$ issued by all output-root
executions is at most $B^-$.
\end{itemize}
Then the composed local rule has query profile at most
\[
    (B^+P^+,\,B^-P^-).
\]
The same conclusion holds, up to adding the corresponding direct-probe
bounds, if the outer rule also probes the base graph directly.
\end{lemma}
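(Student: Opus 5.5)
We prove the out-bound and the in-bound separately. In both cases we fix the outer dictionary $\pi_1$ and the inner dictionary $\pi_0$, and view the composed execution at an output root $u$ as follows: the outer execution issues some root queries $x_1,\dots,x_r$ to $\mathcal R$, and each query $x_j$ triggers an inner execution with read set $\mathrm{Read}_{\mathcal R}(x_j)$. The composed read set is then
\[
    \mathrm{Read}(u)
    \subseteq
    \bigcup_{j\le r}\mathrm{Read}_{\mathcal R}(x_j),
\]
together with any direct probes of the outer rule.

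\textbf{Out-bound.} This is pointwise. By hypothesis $r\le B^+$ for every realization of $\pi_0$ and $\pi_1$. Each inner read set satisfies $|\mathrm{Read}_{\mathcal R}(x_j)|\le P^+$ for every realization of $\pi_0$. Hence $|\mathrm{Read}(u)|\le B^+P^+$.

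\textbf{Naturality and locality of randomness.} Each inner execution is natural and rooted at a vertex $x_j$ that the outer rule reached through revealed adjacencies. Therefore the union of probed sets remains connected to $u$. Marks of $\pi_1$ are read only at vertices probed by the outer rule, and marks of $\pi_0$ only at vertices probed inside inner executions. So the composed rule is natural in the sense of \Cref{def:prelim-access}.

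\textbf{In-bound.} Fix a vertex $w$. The number of output roots whose read set contains $w$ is at most
\[
    \sum_x N_x\,\mathbf 1\{w\in\mathrm{Read}_{\mathcal R}(x)\},
\]
where $N_x$ counts the queries to $\mathcal R$ at root $x$ issued by all output-root executions. The key step is to take expectations in the right order. First condition on $\pi_0$; then the indicator $\mathbf 1\{w\in\mathrm{Read}_{\mathcal R}(x)\}$ is fixed, because the inner executions depend only on $\pi_0$ and the graph. The hypothesis then gives $\mathbb E_{\pi_1}[N_x\mid\pi_0]\le B^-$, since $\pi_1$ is independent of $\pi_0$ and the bound holds conditionally on every $\pi_0$. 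The expression is therefore at most $B^-\sum_x\mathbf 1\{w\in\mathrm{Read}_{\mathcal R}(x)\}$. Next take expectation over $\pi_0$. The in-profile of $\mathcal R$ bounds $\mathbb E_{\pi_0}\bigl[|\{x:w\in\mathrm{Read}_{\mathcal R}(x)\}|\bigr]\le P^-$, so the total is at most $B^-P^-$.

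This is the main subtlety. $N_x$ and the inner read sets are dependent, because the outer rule's query pattern depends on the inner answers, which depend on $\pi_0$. The argument works only because the hypothesis on $B^-$ holds conditionally on every realization of $\pi_0$, which lets us bound $N_x$ first and then average the indicator over $\pi_0$.

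\textbf{Direct probes.} Suppose the outer rule also probes the base graph directly, with out-bound $B_0^+$ and conditional expected in-bound $B_0^-$. Add $B_0^+$ to the out-count. For the in-count, add the expected number of roots whose direct probes hit $w$, which is at most $B_0^-$ by the same conditioning argument. This gives the stated modification.
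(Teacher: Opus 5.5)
Your proposal is correct and follows essentially the same argument as the paper. The out-profile is a pointwise union bound over the at most $B^+$ inner read sets. For the in-profile, you bound the count by $\sum_x N_x\mathbf 1\{w\in\mathrm{Read}_{\mathcal R}(x)\}$, condition on $\pi_0$ to apply the $B^-$ hypothesis, and then average over $\pi_0$ using the inner in-profile $P^-$; your added remarks on naturality and direct probes supply details the paper leaves implicit.
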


\begin{proof}
For an output root $u$, let $C_{u,x}$ be the number of queries that its
execution issues to $\mathcal R$ at root $x$.  Pointwise,
\[
    |\mathrm{Read}_{\rm comp}(u)|
    \le
    \sum_x C_{u,x}\,
    |\mathrm{Read}_{\mathcal R}(x)|
    \le
    B^+P^+.
\]

For the reverse direction, fix a base vertex $w$ and condition on
$\pi_0$.  Then the read sets of $\mathcal R$ are fixed, and
\[
\begin{aligned}
    \mathbb E_{\pi_1}\!\left[
        \bigl|\{u:w\in\mathrm{Read}_{\rm comp}(u)\}\bigr|
    \right]
    &\le
    \sum_x
    \mathbb E_{\pi_1}\!\left[\sum_uC_{u,x}\right]
    \mathbf1\{w\in\mathrm{Read}_{\mathcal R}(x)\}
    \\
    &\le
    B^-
    \sum_x
    \mathbf1\{w\in\mathrm{Read}_{\mathcal R}(x)\}.
\end{aligned}
\]
Averaging over $\pi_0$ gives the bound $B^-P^-$.
\end{proof}

\fixedinputcorrelation*

\begin{proof}[Proof of \Cref{lem:prelim-fixed-correlation}]
Group all independent dictionary keys owned by one base vertex into
that vertex's private tape.  The hypotheses then match the private-tape
LCA model of \citet{ABGR25}.  Their Lemma~3.8 bounds the expected size
of the correlated set of any root by the product of the expected
out- and in-query bounds.  Since our out-query bound is pointwise,
this gives \eqref{eq:prelim-delta-sum}.  Their Lemma~3.7, specialized
to two roots, couples their two outputs to independent variables with
total-variation error at most a universal constant times
$\delta(u,u')$; for Bernoulli outputs this implies
\eqref{eq:prelim-cov-delta}.

Finally, expand the variance.  Using
$\Var(X_u)\le1$ and
$2|a_ua_{u'}|\le a_u^2+a_{u'}^2$, and then
\eqref{eq:prelim-delta-sum}, gives
\[
\begin{aligned}
    \Var_\pi\!\left(\sum_u a_uX_u\right)
    &\le
    \sum_u a_u^2
    +
    C\sum_{u\ne u'}|a_ua_{u'}|\delta(u,u')
    \\
    &\le
    C(1+R^+R^-)\sum_u a_u^2.
\end{aligned}
\]
Every query reads its own root, so $R^+,R^-\ge1$; absorbing the first
term proves \eqref{eq:prelim-fixed-variance}.
\end{proof}

\paragraph{Greedy independent set.}
Assign independent uniform priorities to the vertices, breaking ties
by a fixed ordering of their identifiers.  Write $w\prec_\pi u$ for
the resulting total order.  The \emph{random-greedy} independent set
$I=I(G,\pi)$ processes vertices in this order and inserts a vertex
exactly when none of its previously processed neighbors was inserted.
It is maximal, and
\[
    u\in I
    \iff
    \text{no neighbor $w\prec_\pi u$ of $u$ belongs to $I$}.
\]
The membership recursion visits earlier neighbors in increasing
priority order and stops at the first affirmative answer.  It follows
strictly decreasing paths and hence terminates, but its depth need
not be bounded independently of graph size.  The following primitive truncates
this recursion to obtain a deterministic out-query bound while deleting
only a small expected fraction of the exact random-greedy independent
set.

\begin{theorem}[Truncated random-greedy MIS
{\citep[Definition~6.7 and Claim~6.8]{ABGR25}}]
\label{thm:prelim-mis}
For every $\widehat\Delta\ge1$ and $\zeta\in(0,1)$ there is a natural
local rule on graphs of maximum degree at most $\widehat\Delta$,
computing an independent set
$I_\zeta=I_\zeta(G,\pi)$ such that
\begin{itemize}[topsep=2pt,itemsep=1pt]
\item[(i)]
for every dictionary realization $\pi$,
\[
    I_\zeta(G,\pi)\subseteq I(G,\pi),
\]
where $I(G,\pi)$ is the exact random-greedy maximal independent set
using the same priorities;

\item[(ii)]
for every $G$ of maximum degree at most $\widehat\Delta$,
\[
    \mathbb E_\pi[|I(G,\pi)\setminus I_\zeta(G,\pi)|]
    \le
    \zeta\,\mathbb E_\pi[|I(G,\pi)|];
\]

\item[(iii)]
the out-query complexity is
\[
    R^+=O(\widehat\Delta^3/\zeta);
\]

\item[(iv)]
the expected in-query complexity is
\[
    R^-=O(\widehat\Delta^2).
\]
\end{itemize}
In particular, $I_\zeta(G,\pi)$ is an independent set for every
$G$ and every $\pi$.
\end{theorem}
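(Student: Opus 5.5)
We would prove the statement by truncating the exact random-greedy membership recursion with a deterministic probe budget. Fix $B:=\lceil c\,\widehat\Delta^{3}/\zeta\rceil$ for a suitable constant $c$. To answer a query at root $u$, run the standard membership recursion for $I(G,\pi)$: visit the earlier neighbors of $u$ in increasing priority order, recurse on each, and stop at the first affirmative answer. Priorities are dictionary marks owned by their vertices, so the execution is natural. Count the distinct vertices probed. If the count would exceed $B$, abort and output ``$u\notin I_\zeta$''. Otherwise output the exact answer. Two properties are then immediate. Property (i) holds because a non-aborted answer coincides with membership in $I(G,\pi)$ under the same priorities, so $I_\zeta\subseteq I$. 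Property (iii) holds because $|\mathrm{Read}(u)|\le B=O(\widehat\Delta^3/\zeta)$ pointwise. Independence of $I_\zeta$ follows from (i), since $I$ is independent.

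The in-query bound (iv) is the heart of the argument, and we expect it to be the main obstacle. Since truncation only shrinks read sets, it suffices to bound the untruncated recursion. We would invoke the edge-traversal analysis of Yoshida--Yamamoto--Ito in the form used by \citet{ABGR25}. The recursion from $u$ reaches $w$ only along a path of strictly decreasing priorities, and each step $x\to x'$ is actually taken only if all earlier neighbors of $x$ preceding $x'$ answered negatively. The key lemma to establish (or import) is that, for every directed edge $(x,x')$, the expected number of roots whose execution traverses $(x,x')$ is $O(1)$. The proof is a charging argument over the random permutation: a traversal certifies a decreasing path ending in $(x,x')$, and the ``first-affirmative stops'' rule makes the expected count of such certified paths summable independently of length. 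Given this lemma, a vertex $w$ is read only if the execution traverses one of the at most $\widehat\Delta$ edges into $w$, or probes $w$ as a neighbor of a traversed vertex adjacent to $w$. Either way we pick up at most $O(\widehat\Delta)$ incident edges, each contributing $O(\widehat\Delta)$ roots in the worst accounting, which gives $R^-=O(\widehat\Delta^2)$. If the directed-edge lemma cannot be cited cleanly, we would first try to prove it directly by reversing decreasing paths and using exchangeability of the priorities.

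For (ii), every element of $I\setminus I_\zeta$ is a root whose exact execution reads more than $B$ vertices. By Markov's inequality and summing the in-query bound over all vertices,
\[
\mathbb E_\pi\bigl[|I\setminus I_\zeta|\bigr]
\le\sum_u\Pr\bigl[|\mathrm{Read}(u)|>B\bigr]
\le\frac1B\sum_u\mathbb E\bigl[|\mathrm{Read}(u)|\bigr]
\le\frac{n\,O(\widehat\Delta^2)}{B}.
\]
Here we use that $\sum_u\mathbb E|\mathrm{Read}(u)|$ equals the total expected in-degree of reads, which is at most $nR^-$. Since a maximal independent set in a graph of maximum degree $\widehat\Delta$ has at least $n/(\widehat\Delta+1)$ vertices, we have $\mathbb E|I|\ge n/(\widehat\Delta+1)$. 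The right-hand side is therefore at most $O(\widehat\Delta^3/B)\,\mathbb E|I|\le\zeta\,\mathbb E|I|$ once $c$ is chosen large enough, which proves (ii).
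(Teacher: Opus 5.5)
Your proposal is correct and follows essentially the same route as the paper. Both truncate the exact membership recursion and get (i) and (iii) from exactness of non-aborted answers and the budget; (iv) comes from an $O(\widehat\Delta)$ bound on expected calls to any fixed vertex (the Yoshida--Yamamoto--Ito-type estimate the paper imports as Lemma~6.6 of \citet{ABGR25}), summed over $w$ and its at most $\widehat\Delta$ neighbors; (ii) comes from Markov together with $\mathbb E|I|\ge n/(\widehat\Delta+1)$. The only cosmetic difference is that the paper budgets recursive calls ($C_{\rm mis}\widehat\Delta^2/\zeta$) and multiplies by $O(\widehat\Delta)$ probes per call, whereas you budget distinct probed vertices directly and apply Markov to read-set sizes.
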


\begin{proof}
Simulate the exact membership recursion from the query root, aborting
the entire query and returning \emph{false} if it exceeds
$C_{\rm mis}\widehat\Delta^2/\zeta$ recursive calls.  Otherwise return
its exact answer.  Here $C_{\rm mis}$ is a sufficiently large universal
constant.  This is the truncation in
\citet[Definition~6.7]{ABGR25}.  Every affirmative answer is correct,
so $I_\zeta\subseteq I$ for every dictionary realization, proving (i).

By \citet[Lemma~6.6]{ABGR25}, querying the exact recursion once from
every root makes $O(n\widehat\Delta)$ recursive calls in expectation.
Consequently, choosing $C_{\rm mis}$ sufficiently large gives
\[
    \mathbb E_\pi[|\{u:\text{the query at $u$ truncates}\}|]
    \le \frac{\zeta n}{\widehat\Delta+1}
    \le \zeta\,\mathbb E_\pi[|I|],
\]
where the last inequality uses maximality of $I$.
Every vertex of $I\setminus I_\zeta$ is a truncating root, proving (ii).

Each recursive call requires at most $O(\widehat\Delta)$ vertex
probes to read the current vertex and its neighbors' priorities.
The call budget therefore gives (iii).
For (iv), distinguish these probes from recursive calls.
Lemma~6.6 also bounds the expected total number of calls to any fixed
vertex by $O(\widehat\Delta)$, over one execution from every root.
A vertex $w$ is probed only during a call to $w$ or one of its
neighbors.  Summing the call bound over these at most
$\widehat\Delta+1$ vertices gives $O(\widehat\Delta^2)$.
Truncation can only remove calls and probes.
\end{proof}

\paragraph{Bounded augmentations.}
For a matching $M$, an \emph{augmentation} is an $M$-alternating
path or cycle $P$ such that $M\triangle P$ is again a matching.  Its
gain is
\[
    g_M(P)
    :=
    w(P\setminus M)-w(P\cap M).
\]
A $k_0$-augmentation contains at most $k_0$ nonmatching edges.  It
therefore contains $O(k_0)$ vertices.

\begin{lemma}[Comparison lemma {\citep[Theorem~2.1]{PS04}}]
\label{lem:ps04}
Let $M$ be a matching of a weighted graph and $k_0\ge1$.  There is a
vertex-disjoint family $\mathcal A$ of $k_0$-augmentations such that
\[
    g_M(\mathcal A)
    \ge
    \frac{k_0+1}{2k_0+1}
    \left(
        \frac{k_0}{k_0+1}\operatorname{OPT}_w-w(M)
    \right).
\]
\end{lemma}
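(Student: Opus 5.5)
The plan is to reproduce the Pettie--Sanders argument: start from an optimal matching $M^\ast$ and decompose $M\triangle M^\ast$ into alternating paths and cycles. Each component $P$ satisfies
\[
    w(P\cap M^\ast)-w(P\cap M)
\]
summed over all components equal to $\operatorname{OPT}_w-w(M)$. Each long component gets cut into short augmentations. Because the components are vertex-disjoint, it suffices to treat one component at a time and then take a suitable subfamily.

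For a single component $P$, I would label its $M^\ast$-edges cyclically (or linearly, for a path) as $e_1,e_2,\dots$, with each consecutive pair joined by one $M$-edge. For each shift $s\in\{0,\dots,k_0\}$, I would delete every $(k_0+1)$-th $M^\ast$-edge, starting at position $s$. The remaining runs are alternating segments with at most $k_0$ nonmatching edges. Each run becomes a $k_0$-augmentation once it includes the bordering $M$-edges, since those must be removed so that $M\triangle P$ stays a matching. Averaging over the $k_0+1$ shifts, every $M^\ast$-edge is kept in exactly $k_0$ of them. Every $M$-edge is charged in at most $k_0+2$ of them: it lies inside a run or adjacent to its ends, and in the worst case it is counted at both ends. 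So the total gain summed over the shifts is at least
\[
    k_0\,w(M^\ast\cap P)-(k_0+1)\,w(M\cap P)
\]
once the counting is done correctly; the precise coefficient is exactly what produces the factor $\frac{k_0}{k_0+1}$.

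The augmentations produced by different shifts overlap, so they cannot be summed directly. I would take the union over all shifts of the segment families as a multiset in which each vertex is covered a bounded number of times. I would then pass to a vertex-disjoint subfamily via a fractional packing argument: the segments of one shift are disjoint, and segments from different shifts overlap in a structured, interval-like way along $P$. An interval-graph coloring or LP-integrality argument should extract a disjoint family with at least a $\frac{1}{2k_0+1}$-type fraction of the total gain. Combining this factor with the counting above should give exactly
\[
    g_M(\mathcal A)\ \ge\ \frac{k_0+1}{2k_0+1}\Bigl(\frac{k_0}{k_0+1}\operatorname{OPT}_w-w(M)\Bigr).
\]
Segments with negative gain are simply discarded, which only helps.

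The hardest step is the exact constant bookkeeping: the shift-averaging count together with the disjointification loss must combine to precisely $\frac{k_0+1}{2k_0+1}$ and $\frac{k_0}{k_0+1}$, rather than a weaker constant. If my direct counting falls short, I would instead use a weighted LP-duality argument over the interval structure on each component, which matches the approach of \citep[Theorem~2.1]{PS04}. Failing that, I would simply invoke their theorem as stated.
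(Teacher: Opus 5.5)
The paper gives no proof of this lemma; it imports it from \citet{PS04}, so your fallback of citing that theorem is exactly what the paper does. Your self-contained sketch, however, has a genuine gap at its main step. The family of $k_0+1$ shifts, each deleting every $(k_0+1)$-th $M^\ast$-edge, does not exist on an alternating cycle of $M\triangle M^\ast$ whose number $L>k_0$ of $M^\ast$-edges is not a multiple of $k_0+1$. If residues are taken along a linear labeling, a run crossing the wrap-around can carry more than $k_0$ nonmatching edges: for $k_0=2$ and $L=4$, the shift deleting only $e_3$ leaves the run $e_4,e_1,e_2$. If deletions are instead taken cyclically, then $(k_0+1)\lceil L/(k_0+1)\rceil>L$ deletions occur, so some $M^\ast$-edge is deleted twice and ``kept in exactly $k_0$ shifts'' fails.

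This cannot be fixed by bookkeeping. Within one shift the runs are pairwise vertex-disjoint, since consecutive runs are separated by a deleted $M^\ast$-edge whose endpoints fall into different runs. Also, each $M$-edge lies in at most one run per shift, so the right count is $k_0+1$, not $k_0+2$. Hence a shift family as you describe would make the best shift a vertex-disjoint family with gain at least $\frac{k_0}{k_0+1}\operatorname{OPT}_w-w(M)$. That bound is false. Take a graph that is a single alternating cycle with $k_0+2$ edges of $M^\ast$ of weight $1$ and $k_0+2$ edges of $M$ of weight $\beta\to0$. Every vertex is matched, so each $k_0$-augmentation is a proper subpath ending in $M$-edges. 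Any vertex-disjoint family of them therefore omits at least two $M^\ast$-edges, and its gain is below $k_0$. Meanwhile $\frac{k_0}{k_0+1}\operatorname{OPT}_w-w(M)\to\frac{k_0(k_0+2)}{k_0+1}>k_0$. In particular, the factor $\frac{k_0+1}{2k_0+1}$ cannot simply be dropped.

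The missing construction is the set of all $L$ cyclic windows $W_j$. Each window consists of $k_0$ consecutive $M^\ast$-edges $e_j,\dots,e_{j+k_0-1}$ together with the $k_0+1$ surrounding $M$-edges. If $L\le k_0$, use the whole cycle with multiplicity $k_0$ instead. On path components, use truncated windows; your shifts also work there. Every $M^\ast$-edge is then covered exactly $k_0$ times and every $M$-edge at most $k_0+1$ times. Summing over components, and using $w\ge0$ on $M\cap M^\ast$, gives total gain at least $k_0\operatorname{OPT}_w-(k_0+1)w(M)$.

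Two windows intersect only if their starting indices are within cyclic distance $k_0$, so the conflict graph has maximum degree at most $2k_0$. Repeatedly select a maximum-gain remaining window and discard its conflicts; alternatively, take the best class of a $(2k_0+1)$-coloring. Either way you keep at least a $1/(2k_0+1)$ fraction of the total, which is exactly $\frac{k_0+1}{2k_0+1}\bigl(\frac{k_0}{k_0+1}\operatorname{OPT}_w-w(M)\bigr)$. Interval-graph perfectness is not available here, since windows on a cycle are circular arcs.

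So your overall plan is right: a covering multiset of total gain $k_0\operatorname{OPT}_w-(k_0+1)w(M)$, then disjointification at loss $2k_0+1$. But the multiset construction, and the extraction step you left as ``should,'' are precisely the content of the proof.
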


\weightedmatchingrule*

\begin{proof}
We first analyze the construction with a working accuracy
$\alpha\in(0,1/4)$ and a lower weight bound $b$ satisfying
\[
    \alpha\le b\le1.
\]
This gives a base rule $\Psi^{\mathrm{base}}_{\alpha,b}$.
We will then choose the internal accuracy and perturb the input
weights to obtain a single rule with both properties of the theorem.
Throughout the analysis of the base rule, $G$ and $w$ are fixed,
$w_e\in[b,1]$, and all randomness belongs to the priority
dictionary $\pi_0$.

Write
\[
    W:=\operatorname{OPT}_w(G).
\]
If $G$ has no edges, the empty matching satisfies all claims.
Henceforth assume $W>0$.

\paragraph{The construction.}
Fix
\[
    k_0=\left\lceil\frac{C_0}{\alpha}\right\rceil,
    \qquad
    \zeta=c_{\mathrm{mis}}\alpha\le\frac14,
\]
where $C_0$ is a sufficiently large universal constant and
$c_{\mathrm{mis}}>0$ is a sufficiently small universal constant.
Recall that a $k_0$-augmentation has at most $k_0$
nonmatching edges and therefore has $O(k_0)$ vertices.
Its gain is at most $k_0$.

The algorithm starts from the empty matching and performs
a fixed number of sweeps.  Each sweep processes the dyadic
gain buckets
\[
    [2^j\alpha b,\,2^{j+1}\alpha b),
    \qquad j=0,\ldots,J-1,
\]
in decreasing order of $j$, where
\[
    J
    :=1+\left\lfloor
        \log_2\frac{k_0}{\alpha b}
    \right\rfloor
    =O\!\left(\log\frac{e}{\alpha}\right).
\]
The last bound uses $b\ge\alpha$.

At a bucket stage, consider all currently valid
$k_0$-augmentations whose gains lie in that bucket.
Form their conflict graph, joining two candidates when
they share a base-graph vertex.
Using fresh independent priorities, let $I$ be the exact
random-greedy maximal independent set of this graph,
and let $I_\zeta\subseteq I$ be the truncated set supplied
by \Cref{thm:prelim-mis}.
Apply all augmentations in $I_\zeta$ simultaneously.
They are vertex-disjoint, so the result is a matching.
All applied augmentations have positive gain.

\paragraph{Progress in one sweep.}
Let $M_s$ be the matching at the start of sweep $s$, and define
\[
    \mathrm{gap}(M)
    :=\frac{k_0}{k_0+1}W-w(M).
\]
By \Cref{lem:ps04}, when $\mathrm{gap}(M_s)>0$ there is a
vertex-disjoint family $\mathcal A^\star$ of
$k_0$-augmentations with
\[
    g_{M_s}(\mathcal A^\star)
    \ge\frac12\,\mathrm{gap}(M_s).
\]
When $\mathrm{gap}(M_s)\le0$, the same inequality holds
with $\mathcal A^\star=\varnothing$.
Discarding members of nonpositive gain can only increase
the total gain, so assume every member has positive gain.

Each member then contains a nonmatching edge.
Choosing one such edge from each member gives a matching,
and therefore
\[
    |\mathcal A^\star|
    \le\nu(G)
    \le\frac{W}{b}.
\]
Discard from this comparison family all augmentations
of gain below $\alpha b$.
Their total gain is at most
\[
    \alpha b\,|\mathcal A^\star|
    \le\alpha W.
\]
Thus the surviving comparison family has total gain at least
\[
    \frac12\,\mathrm{gap}(M_s)-\alpha W.
\]
This family is used only in the analysis; the algorithm
considers all valid augmentations in each bucket.

Fix all randomness in the sweep.
For a surviving comparison augmentation $A$, if an applied
augmentation from an earlier bucket meets $A$, charge $A$
to the first such augmentation.
Otherwise no previous update has touched a vertex of $A$.
Consequently, when the bucket containing $g_{M_s}(A)$ is
processed, $A$ is still valid and has the same gain.
By maximality of the exact greedy set $I$ in that bucket,
$A$ either belongs to $I$ or meets a member of $I$.
Charge $A$ to such a member.

In either case, the augmentation receiving the charge has
gain at least $g_{M_s}(A)/2$.
Moreover, each candidate contains $O(k_0)$ vertices,
whereas the comparison augmentations are vertex-disjoint.
Hence each candidate receives at most $O(k_0)$ charges.

Let
\[
    G_s:=w(M_{s+1})-w(M_s)
\]
be the actual gain of the sweep.
Let $D_s$ be the sum, over its bucket stages, of the gains
of candidates in $I\setminus I_\zeta$.
The charging argument gives the pointwise bound
\[
    \frac12\,\mathrm{gap}(M_s)-\alpha W
    \le Ck_0(G_s+D_s).
\]
Equivalently, for universal constants $c,C>0$,
\begin{equation}
\label{eq:prelim-sweep}
    G_s
    \ge
    \frac{c}{k_0}\,\mathrm{gap}(M_s)
    -\frac{C\alpha}{k_0}W-D_s.
\end{equation}

\paragraph{Loss from truncation.}
Consider a single bucket stage, conditional on its history
before the fresh priority dictionary is drawn.
The candidate graph and all candidate gains are then fixed.
Write
\[
    G_j^\star:=\sum_{A\in I}g(A),
    \qquad
    G_j:=\sum_{A\in I_\zeta}g(A),
    \qquad
    D_j:=\sum_{A\in I\setminus I_\zeta}g(A).
\]
Pointwise, $G_j^\star=G_j+D_j$.
Since gains within a bucket differ by at most a factor two,
\Cref{thm:prelim-mis}(ii) implies
\[
    \mathbb E_{\pi_0}[D_j\mid\mathrm{past}]
    \le
    2\zeta\,
    \mathbb E_{\pi_0}[G_j^\star\mid\mathrm{past}].
\]
Using $\zeta\le1/4$ and rearranging gives
\[
    \mathbb E_{\pi_0}[D_j\mid\mathrm{past}]
    \le
    4\zeta\,
    \mathbb E_{\pi_0}[G_j\mid\mathrm{past}].
\]

Let $D_{\mathrm{tot}}:=\sum_sD_s$ and $G_{\mathrm{tot}}:=\sum_sG_s$.
Summing over all stages and using the tower property yields
\begin{equation}
\label{eq:prelim-total-trunc}
    \mathbb E_{\pi_0}[D_{\mathrm{tot}}]
    \le
    4\zeta\,\mathbb E_{\pi_0}[G_{\mathrm{tot}}]
    \le
    4\zeta W
    =O(\alpha)W.
\end{equation}
Here $G_{\mathrm{tot}}\le W$ pointwise because the algorithm
starts from the empty matching and applies only
positive-gain augmentations.

\paragraph{Number of sweeps and approximation guarantee.}
Since
\[
    \mathrm{gap}(M_{s+1})
    =\mathrm{gap}(M_s)-G_s,
\]
\eqref{eq:prelim-sweep} implies
\[
    \mathrm{gap}(M_{s+1})
    \le
    \left(1-\frac{c}{k_0}\right)\mathrm{gap}(M_s)
    +\frac{C\alpha}{k_0}W+D_s.
\]
Choose
\[
    L
    =\left\lceil
        C_L k_0\log\frac{e}{\alpha}
      \right\rceil
\]
sweeps, with $C_L$ sufficiently large.
Unrolling the recurrence, the initial term contributes
$O(\alpha)W$, and the geometric sum of the deterministic
additive terms contributes $O(\alpha)W$.
The expected contribution of the $D_s$ terms is at most
$\mathbb E_{\pi_0}[D_{\mathrm{tot}}]=O(\alpha)W$ by
\eqref{eq:prelim-total-trunc}.
Thus
\[
    \mathbb E_{\pi_0}[\mathrm{gap}(M_L)]
    \le O(\alpha)W.
\]
Since $k_0/(k_0+1)=1-O(\alpha)$, there is a universal
constant $K\ge1$ such that
\[
    \mathbb E_{\pi_0}[w(M_L)]
    \ge(1-K\alpha)W.
\]
The total number of sequential bucket stages is
\begin{equation}
\label{eq:prelim-stages}
    S=LJ
    =O\!\left(
        \alpha^{-1}\log^2\frac{e}{\alpha}
      \right).
\end{equation}

\paragraph{Local implementation.}
For fixed $k_0$, the number of potential augmentations
containing any given base-graph vertex is
$\Delta^{O(k_0)}$.
The same bound holds for the number of potential
augmentations intersecting a given augmentation.
Consequently every candidate conflict graph has maximum
degree at most
\[
    \widehat\Delta=\Delta^{O(k_0)}.
\]

Priority marks are keyed by the stage and a canonical
encoding of the augmentation, and are owned by a canonical
vertex of its support.
The key universe can be fixed in advance using stage-tagged
vertex sequences of length $O(k_0)$.
Invalid encodings are never used as candidates.
All priority marks are independent, and their collection
is the shared randomness $\pi_0$.

Represent each intermediate matching by a local rule that,
when queried at a vertex, returns its matched partner
or declares it free.
To answer an updated query at a root $u$, enumerate the
potential augmentations containing $u$, determine which
are valid candidates, and query their membership in
$I_\zeta$.
Their validity and gains are determined by local edge data
and queries to the preceding matching rule at their
support vertices.
Since selected augmentations are vertex-disjoint,
these answers, together with the preceding matching
at $u$, determine the updated partner of $u$.

A probe of the candidate conflict graph is simulated by
enumerating intersecting potential augmentations and
checking their validity.
All these enumerations and local validations have
multiplicity $\Delta^{O(k_0)}$.
Within each base-root execution, cache simulated candidate probes and
answers from the preceding matching rule, so that repeated requests
do not repeat their underlying probes.  The cache is private to that
execution and does not change the MIS recursion or its call budget.
In particular, a fixed base location can be inspected
by the direct local validation of only
$\Delta^{O(k_0)}$ potential candidates.

\paragraph{Query-profile accounting.}
Suppose the preceding matching rule has query profile
$(P^+,P^-)$.
We apply \Cref{lem:prelim-compose} to one bucket stage.

For the out-query direction, a base-root execution launches
at most $\Delta^{O(k_0)}$ truncated-MIS root queries.
By \Cref{thm:prelim-mis}(iii), each makes at most
$O(\widehat\Delta^3/\zeta)$ conflict-graph probes.
Each such probe requires at most $\Delta^{O(k_0)}$
queries to the preceding matching rule.
Including the direct candidate validations gives
\[
    B^+\le\frac{\Delta^{O(k_0)}}{\zeta}.
\]

For the reverse direction, condition on all preceding-stage
dictionaries.
The current candidate graph and the lists of candidate
roots queried from each base root are now fixed.
Only the fresh priorities of the current stage remain random.

Each candidate appears as a truncated-MIS query root
for at most $\Delta^{O(k_0)}$ base output roots.
If every candidate root were queried once,
\Cref{thm:prelim-mis}(iv) would bound the expected number
of executions probing any fixed candidate by
$O(\widehat\Delta^2)$.
Accounting for the bounded repetition of candidate roots
therefore gives at most
\[
    \Delta^{O(k_0)}\,O(\widehat\Delta^2)
\]
expected probes of any fixed candidate.

Fix a root $x$ of the preceding matching rule.
A query at $x$ can arise either from a direct candidate
validation at a base output root, or while simulating a
conflict-graph probe.
The first source has multiplicity $\Delta^{O(k_0)}$.
For the second source, only $\Delta^{O(k_0)}$ conflict
candidates have local probe simulations that can reach $x$,
and each such simulation issues at most
$\Delta^{O(k_0)}$ queries to the preceding matching rule.
Thus, over one execution from every base output root,
the expected number of preceding-rule queries at $x$ is
\[
    B^-\le\Delta^{O(k_0)}.
\]
This bound holds conditional on every preceding-stage
dictionary, as required by \Cref{lem:prelim-compose}.

Direct base-graph probes, including probes needed to read
locally owned marks, satisfy the same bounds.
By the additive clause of \Cref{lem:prelim-compose},
each stage therefore multiplies both profile coordinates,
after adding one to each, by at most
\[
    B:=\frac{\Delta^{O(k_0)}}{\zeta}.
\]
Since $\Delta\ge2$, $k_0=\Theta(1/\alpha)$, and
$\zeta^{-1}=O(1/\alpha)$, we can write
$B=\Delta^{O(k_0)}$.
Starting from the empty matching and composing the
$S$ stages gives
\[
    R^+,R^-
    \le
    \Delta^{O(k_0S)}
    =
    \Delta^{O\!\left(
        \alpha^{-2}\log^2\frac{e}{\alpha}
    \right)}.
\]

The construction is natural.
Candidate enumeration follows bounded paths from the
query root, conflict-graph steps move between intersecting
augmentations, and every preceding-rule query is rooted
at a vertex exposed by this connected exploration.
A priority mark is read only after its owner has been probed.

\paragraph{The final rule: internal perturbations and accuracy.}
Let $K\ge1$ be the universal constant in the base approximation
guarantee above.  Fix
\[
    \alpha:=\frac{\eta}{2(K+1)},
    \qquad
    \varpi:=\frac{\eta}{2},
\]
and take $c_{\mathrm{lca}}\ge2$.
For every edge $e$, draw an independent mark
\[
    \rho_e\sim\operatorname{Uniform}[0,\varpi],
\]
independently of the priority dictionary $\pi_0$.
Each mark is owned by a fixed endpoint of its edge and is read
only after that owner is probed.
Define
\[
    \widetilde w_e
    :=\frac{w_e(1+\rho_e)}{1+\varpi},
    \qquad
    \widetilde b:=\frac{\gamma}{1+\varpi}.
\]
Then $\widetilde w_e\in[\widetilde b,1]$, and
\[
    \varpi\le\frac\eta2<\frac14,
    \qquad
    \widetilde b\ge\frac45\gamma\ge\alpha.
\]
Thus the base construction applies at working accuracy $\alpha$
and lower weight bound $\widetilde b$.
With the full dictionary $\pi:=(\pi_0,\rho)$, define the final rule by
\[
    M=\Psi(G,\pi;w)
    :=\Psi^{\mathrm{base}}_{\alpha,\widetilde b}
        (G,\pi_0;\widetilde w).
\]
The parameters $\alpha,\varpi,\widetilde b$ depend only on
$\eta,\gamma$, and all marks have fixed local owners.

Conditional on $\rho$, the perturbed weights are fixed and
$\pi_0$ retains its original distribution.
The base query-profile bounds therefore hold uniformly in $\rho$.
Reading the perturbation marks adds only bounded local probe overhead.
Averaging over $\rho$ preserves the expected in-query bound, while
the out-query bound holds for every full dictionary realization.
Since $\alpha$ is a fixed constant multiple of $\eta$, a sufficiently
large universal $C_{\mathrm{lca}}$ gives
\[
    R^+,R^-\le\Delta^{a_\eta},
    \qquad
    a_\eta=C_{\mathrm{lca}}\eta^{-2}\log^2\frac{e}{\eta}.
\]
This also accounts for the perturbation marks within the local-access
model of \Cref{def:prelim-access}.

Conditional on $\rho$, the base approximation guarantee gives
\[
    \mathbb E_{\pi_0}[\widetilde w(M)\mid\rho]
    \ge(1-K\alpha)\operatorname{OPT}_{\widetilde w}(G)
    \ge(1-\eta/2)\operatorname{OPT}_{\widetilde w}(G).
\]
Since $w_e\ge\widetilde w_e\ge w_e/(1+\varpi)$,
\[
\begin{aligned}
    \mathbb E_\pi[w(M)]
    &\ge
    \mathbb E_\rho\!\left[
        \mathbb E_{\pi_0}[\widetilde w(M)\mid\rho]
    \right]\\
    &\ge
    (1-\eta/2)\,
    \mathbb E_\rho[\operatorname{OPT}_{\widetilde w}(G)]\\
    &\ge
    \frac{1-\eta/2}{1+\varpi}\operatorname{OPT}_w(G)\\
    &\ge
    (1-\eta)\operatorname{OPT}_w(G),
\end{aligned}
\]
where the last inequality uses $\varpi\le\eta/2$.
Together with the query-profile bounds, this proves part~(i).

\paragraph{Continuity of the output marginals.}
Fix $G,\eta,\gamma$ and a weight vector
\[
    w^0\in[\gamma,1]^{E(G)}.
\]
In particular, the working accuracy, perturbation range, lower
weight bound $\widetilde b$, number of stages, and bucket boundaries
are fixed independently of the varying weight vector $w$.

There are finitely many possible intermediate matchings,
bounded augmentations, and gain cutoffs.
For any fixed matching $N$ and nonempty augmentation $P$,
its perturbed gain at $w^0$ is
\[
    \widetilde g_N(P;w^0,\rho)
    =
    \frac{1}{1+\varpi}
    \left(
        \sum_{e\in P\setminus N}w^0_e(1+\rho_e)
        -
        \sum_{e\in P\cap N}w^0_e(1+\rho_e)
    \right).
\]
Because $w^0_e>0$, this is a nonconstant affine function
of the independent continuous marks $\rho_e$.
It therefore equals any fixed gain cutoff or bucket
boundary with probability zero.

Take a finite union over all possible intermediate
matchings, augmentations, and relevant boundaries.
With probability one, none of these gain comparisons
is tight at $w^0$.
Priority ties also have probability zero.
This argument considers all possible discrete histories
at once and does not condition on the realized history.

Fix a full dictionary realization $\pi=(\pi_0,\rho)$ outside
this null event.  All relevant gain comparisons are strict at $w^0$
and depend continuously on $w$.
Inducting over the finitely many stages, the candidate
graphs, truncated-MIS decisions, and resulting matchings
remain unchanged throughout some neighborhood of $w^0$.
In particular, for every $e\in E(G)$,
\[
    \mathbf1\{e\in\Psi(G,\pi;w)\}
\]
is locally constant at $w^0$ for almost every $\pi$.

The indicator is bounded by one.
Bounded convergence therefore shows that
\[
    w\longmapsto\Pr_\pi[e\in\Psi(G,\pi;w)]
\]
is continuous at $w^0$.
Since $w^0$ was arbitrary, this proves part~(ii).
\end{proof}

\section{Decorrelation under product arrivals}
\label{app:decorrelation}

This appendix proves the transfer theorem used in the light completion.
The only additional ingredient beyond the fixed-input correlation bound
is the stability of the capped graph under resampling one arrival state.

\paragraph{State-resampling stability.}
For the bounds below, which average over arrivals and auxiliary marks,
we separate these two sources of randomness.  Conditional on the arrival
states $T$, the capped graph is
fixed and only the locally owned dictionary of the witness remains
random.  To control the remaining variation in $T$, we resample one
arrival state at a time while keeping the entire witness dictionary
fixed.

\begin{corollary}[State-resampling stability]
\label{cor:prelim-blockstab}
Let $T'$ be obtained from $T$ by replacing a single state $T_i$ by an
independent copy $T'_i$, while keeping the witness dictionary fixed.
Then
\[
    \bigl|
        E(\mathcal G_C^D(T))
        \triangle
        E(\mathcal G_C^D(T'))
    \bigr|
    \le
    2dD.
\]
Let $X_i(T,T')$ be the set of vertices whose capped neighborhood
changes, as in \eqref{eq:cap-vertex-stability}, and put
\[
    \widehat X_i(T,T'):=X_i(T,T')\cup\{i\}.
\]
With the state-edge storage convention of
\Cref{subsec:witness-construction},
\begin{equation}
\label{eq:transfer-X-size}
    |\widehat X_i(T,T')|
    \le
    K:=1+2d(D+1),
\end{equation}
all capped adjacencies and all state-dependent local data are identical
outside $\widehat X_i(T,T')$, and $\widehat X_i(T,T')$ lies within
distance two of $i$ in the union of the two realized crucial graphs.
Consequently, under the coupling that uses the same dictionary in the
two environments, the answer of a natural query can change only if its
execution reads a vertex of $\widehat X_i(T,T')$.
\end{corollary}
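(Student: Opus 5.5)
The plan is to derive the edge and vertex bounds directly from \Cref{lem:capping}(ii). Resampling only $T_i$ does not change the witness dictionary, and the capped graph is a deterministic function of $T$. So the edge symmetric-difference bound $2dD$ is exactly \eqref{eq:cap-stability}. The set $X_i(T,T')$ is the one defined there, so \eqref{eq:cap-vertex-stability} immediately gives $|\widehat X_i(T,T')|\le 1+2d(D+1)=K$, together with the radius-two localization. These parts need no new argument.

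The next claim is that all capped adjacencies and all state-dependent local data agree outside $\widehat X_i$. Adjacencies are handled by the definition of $X_i$: a vertex outside it has an unchanged capped neighborhood. For the local data, we use the storage convention of \Cref{subsec:witness-construction}. The only state-dependent datum is the label $T_i$, together with the state-edge scores computed from it, and this is stored at $i\in\widehat X_i$. The remaining local data are the deterministic instance parameters, the capped-benchmark marginals, the fixed point $y^\star$, and the dictionary marks. None of these depend on $T$, and the marks are identical under the coupling. We should check one point here. Edge scores $g_e(y^\star)$ depend only on $y^\star$ and the endpoint indices, and the online endpoint index is $(i,T_i)$. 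So an edge's score changes only if it is incident to $i$ or if the edge itself changes. In both cases the edge is incident to a vertex of $\widehat X_i$. The same holds for whether the edge survives thresholding, since $\Theta_e$ is fixed.

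The final claim is a standard coupling argument for natural executions. Fix the dictionary and run the query from a root $u$ in both environments, and suppose the execution never probes a vertex of $\widehat X_i$. We induct over the probe sequence. Each probe reveals the neighbors and local data of a vertex outside $\widehat X_i$, and these agree in both environments. Each dictionary mark read is owned by an already probed vertex and is therefore identical in both. The next probe is chosen as a function of the revealed information, so it is also identical, and it follows a revealed adjacency. Hence both executions follow the same transcript and return the same answer. By contraposition, the answer can differ only if one of the two executions reads a vertex of $\widehat X_i$. Since the transcripts agree up to the first such read, the executions in $T$ and in $T'$ read it at the same step, so it does not matter which environment we mean.

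The main obstacle is the local-data step. The composed rule of \Cref{lem:witness} reads the thresholded and scored graph $G_{y^\star}$ rather than $\mathcal G_C^D$ itself, and every derived quantity has to be shown to be determined by information at its owning vertices. I would treat the threshold-filtering layer as part of what a probe of a vertex reveals: its incident capped edges, their scores, and their threshold marks. Then the claim that ``local data agree outside $\widehat X_i$'' covers this layer as well, and the induction above applies verbatim.
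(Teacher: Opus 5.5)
\textbf{Gap in the threshold-layer convention.} Your overall route is the paper's. The edge bound, the bound $|\widehat X_i|\le K$, and the radius-two claim come straight from \Cref{lem:capping}(ii). The last claim is the first-divergence argument under a fixed dictionary, and your remark that both executions reach the first vertex of $\widehat X_i$ at the same step is exactly what the transfer theorem later uses.

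The problem is your resolution of what you call the main obstacle. Suppose a probe of \emph{any} vertex reveals its incident capped edges together with their scores and threshold outcomes. Then ``local data agree outside $\widehat X_i$'' is false. Take an offline $v$ with $(i,T_i,v),(i,T'_i,v)\in C_\tau$ and $N_v\le D$; $N_v$ is the same in both environments.
\begin{itemize}
\item The capped neighborhood of $v$ is unchanged, so $v\notin\widehat X_i$.
\item The score of $(i,v)$ changes from $1-s_{(i,T_i)}(y^\star)-s_v(y^\star)$ to $1-s_{(i,T'_i)}(y^\star)-s_v(y^\star)$, because $(i,T_i)$ and $(i,T'_i)$ are different endpoint indices.
\item The score may therefore land on the other side of its threshold mark, so even the adjacency of $v$ in $G_{y^\star}$ can change.
\end{itemize}
Under your convention, probing $v$ already exposes different data, and the transcript induction breaks at $v$. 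A natural rule that answers at root $v$ using only the revealed incident weights, with $\mathrm{Read}(v)=\{v\}$, can change its answer without reading any vertex of $\widehat X_i$. Your earlier check that every changed edge is incident to a vertex of $\widehat X_i$ is sufficient only if the edge's data are revealed \emph{solely} through that endpoint. The final paragraph also contradicts your own second paragraph, which correctly places the scores at $i$.

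\textbf{How to repair it.} Use the convention the paper actually relies on. State-dependent state-edge data (scores, threshold outcomes, and hence the weighted filtered graph) are stored only at the arrival endpoint. A probe of an offline vertex reveals only its capped adjacency list and its own marks, and these are unchanged outside $X_i$. The filtered neighborhood of $v$ is simulated by probing each of its at most $D$ capped neighbors; this is the extra factor $D$ in the locality bound of \Cref{lem:witness}(c). Any execution that needs the data of $(i,v)$ must then probe $i\in\widehat X_i$, and your induction goes through verbatim.

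Alternatively, you could add all offline neighbors of $i$ in either realized crucial graph to the exceptional set. The counting in \Cref{lem:capping} still gives at most $1+2d+2dD=K$, but that set is no longer the $X_i\cup\{i\}$ of the statement.
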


\begin{proof}
The edge bound and the neighborhood-change set are given by
\Cref{lem:capping}(ii).  The proof of that lemma bounds the containing
set $\{i\}\cup X_i(T,T')$ by $1+2d(D+1)$, so adjoining $i$ does not
increase the displayed bound.  State-dependent state-edge data are
stored at their arrival endpoint, while all auxiliary marks are held
fixed.  Hence outside $\widehat X_i$ the two executions see identical
adjacency, local data, and dictionary values.  A first divergence can
therefore occur only when a vertex of $\widehat X_i$ is read.
\end{proof}

\decorrelationtransfer*

\begin{proof}
Let $C_{\mathrm{corr}}\ge1$ be a universal constant for the
variance bound in \Cref{lem:prelim-fixed-correlation}.
We will show that $C_{\mathrm{dec}}:=C_{\mathrm{corr}}+14$ suffices.
We separate the witness-dictionary randomness from the arrival-state
randomness.  Conditional on a realization of $T$, the capped graph and
all state-dependent local data are fixed, and the witness is a
private-tape local rule in the sense of
\Cref{lem:prelim-fixed-correlation}.  Hence, for every deterministic
coefficient vector $(a_u)$,
\begin{equation}
\label{eq:transfer-fixed-input}
    \Var_\pi\!\left(
        \sum_u a_uF_u
        \,\middle|\,
        T
    \right)
    \le
    C_{\mathrm{corr}} R^+R^-\sum_u a_u^2.
\end{equation}

For the transfer over $T$, fix a realization $T$ and put
\[
    L^T_{u,w}
    :=
    \Pr_\pi[w\in\mathrm{Read}_T(u)].
\]
By the query-profile bounds,
\begin{equation}
\label{eq:transfer-L-sums}
    \sum_wL^T_{u,w}\le R^+,
    \qquad
    \sum_uL^T_{u,w}\le R^-.
\end{equation}
Indeed, the first inequality follows from the pointwise out-query
bound, and the second is precisely the expected in-query bound on the
fixed input $T$.  Schur's test therefore gives, for every deterministic
$(a_u)$,
\begin{equation}
\label{eq:transfer-schur}
    \sum_w
    \left(
        \sum_u |a_u|L^T_{u,w}
    \right)^2
    \le
    R^+R^-\sum_u a_u^2.
\end{equation}

We next record the multiplicity with which one fixed input location can
be affected by a single-coordinate resampling.  Let $T^{(j)}$ be
obtained by replacing $T_j$ by an independent draw $T'_j\sim p_j$, and
let
\[
    \widehat X_j
    :=
    \widehat X_j(T,T^{(j)})
\]
be the state-resampling set of
\Cref{cor:prelim-blockstab}.  Besides
$|\widehat X_j|\le K:=1+2d(D+1)$, we have, for every fixed $T$ and
input vertex $w$,
\begin{equation}
\label{eq:transfer-influence-multiplicity}
    \sum_j
    \Pr[\,w\in\widehat X_j\mid T\,]
    \le
    M:=1+d(D+d+1).
\end{equation}
For an offline vertex $v$, a single-coordinate resampling cannot change
its capped neighborhood when $N_v(T)\ge D+2$.  Otherwise there are at
most $D+1$ arrivals currently realizing a crucial edge at $v$, whereas
\[
    \sum_j
    \Pr[(j,T'_j,v)\in C_\tau]
    =
    \sum_{j,t}p_j(t)
      \mathbf1\{z_{jtv}\ge\tau\}
    \le
    \frac1\tau
    \sum_{j,t}p_j(t)z_{jtv}
      \mathbf1\{z_{jtv}\ge\tau\}
    =
    \frac{c_v}{\tau}
    \le d.
\]
Thus the left side of
\eqref{eq:transfer-influence-multiplicity} is at most $D+d+1$ when
$w=v$ is offline.  Now let $w=r$ be an arrival.  The resampling
$j=r$ contributes at most one.  For $j\ne r$, the capped neighborhood
of $r$ can change only through one of the at most $d$ crucial offline
neighbors of its fixed state $T_r$; applying the preceding offline
bound to each such neighbor gives at most
$1+d(D+d+1)=M$.  This proves
\eqref{eq:transfer-influence-multiplicity}.

We first prove the offline variance bound.  Put
\[
    S(T,\pi):=\sum_v h_vF_v,
    \qquad
    f(T):=\mathbb E_\pi[S(T,\pi)].
\]
Couple $T$ and $T^{(j)}$ using the same witness dictionary.  By
\Cref{cor:prelim-blockstab}, if the two answers at an offline root $v$
differ then, before their first divergence, the execution on input $T$
must read a vertex of $\widehat X_j$.  Hence
\[
    \Pr_\pi[
        F_v(T)\ne F_v(T^{(j)})
    ]
    \le
    \sum_{w\in\widehat X_j}L^T_{v,w},
\]
and therefore
\begin{equation}
\label{eq:transfer-first-divergence}
    |f(T)-f(T^{(j)})|
    \le
    \sum_{w\in\widehat X_j}
    \sum_v |h_v|L^T_{v,w}.
\end{equation}
For nonnegative numbers $b_w$ and a set $X$ of size at most $K$,
$(\sum_{w\in X}b_w)^2\le K\sum_{w\in X}b_w^2$.  Squaring
\eqref{eq:transfer-first-divergence}, summing over $j$, and taking the
conditional expectation over the resampled states gives
\[
\begin{aligned}
    &\sum_j
    \mathbb E[
        (f(T)-f(T^{(j)}))^2
        \mid T
    ]
    \\
    &\qquad\le
    K\sum_w
    \left(\sum_v|h_v|L^T_{v,w}\right)^2
    \sum_j\Pr[w\in\widehat X_j\mid T]
    \\
    &\qquad\le
    KM R^+R^-\sum_vh_v^2,
\end{aligned}
\]
where the last line uses
\eqref{eq:transfer-influence-multiplicity} and
\eqref{eq:transfer-schur}.  Efron--Stein now yields
\[
    \Var_T f(T)
    \le
    \frac{KM}{2}R^+R^-\sum_vh_v^2.
\]
Combining this with the conditional variance bound
\eqref{eq:transfer-fixed-input} through the law of total variance gives
\[
    \Var\!\left(\sum_vh_vF_v\right)
    \le
    (C_{\mathrm{corr}}+KM/2)R^+R^-\sum_vh_v^2.
\]
Since
\[
    K M
    \le
    10d^2D(D+d+1),
\]
our choice of $C_{\mathrm{dec}}$ proves the offline variance bound in \Cref{thm:prelim-transfer}.  If one state $T_i=t$ is fixed, the same
argument applies to the remaining independent coordinates and simply
omits $i$ from Efron--Stein, so the same bound holds conditionally on
$T_i=t$.

For the online form, define
\[
    S_{\rm on}(T,\pi)
    :=
    \sum_i h_{i,T_i}F_i,
    \qquad
    f_{\rm on}(T)
    :=
    \mathbb E_\pi[S_{\rm on}(T,\pi)].
\]
Conditional on $T$, \eqref{eq:transfer-fixed-input} with
$a_i=h_{i,T_i}$ gives
\[
    \Var_\pi(S_{\rm on}\mid T)
    \le
    C_{\mathrm{corr}} R^+R^-\sum_i h_{i,T_i}^2.
\]
Under the coupling of $T$ and $T^{(j)}$, decompose
\[
\begin{aligned}
    |f_{\rm on}(T)-f_{\rm on}(T^{(j)})|
    &\le
    \sum_i |h_{i,T_i}|
       \Pr_\pi[F_i(T)\ne F_i(T^{(j)})]
    \\
    &\qquad+
    |h_{j,T_j}-h_{j,T'_j}|.
\end{aligned}
\]
The square of the first term is handled exactly as in
\eqref{eq:transfer-first-divergence}, now with coefficients
$|h_{i,T_i}|$.  The sum of its expected squares over $j$ is at most
\[
    KM R^+R^-
    \sum_{i,t}p_i(t)h_{it}^2.
\]
For the direct coefficient change,
\[
    \sum_j
    \mathbb E[|h_{j,T_j}-h_{j,T'_j}|^2]
    \le
    2\sum_j
      \mathbb E[h_{j,T_j}^2+h_{j,T'_j}^2]
    =
    4\sum_{j,t}p_j(t)h_{jt}^2.
\]
Using $(a+b)^2\le2a^2+2b^2$, Efron--Stein, and the law of total
variance gives
\[
    \Var(S_{\rm on})
    \le
    (C_{\mathrm{corr}}+KM+4)R^+R^-
    \sum_{i,t}p_i(t)h_{it}^2,
\]
where $R^+R^-\ge1$ absorbs the direct coefficient-change term.
The bounds on $KM$ and the choice of $C_{\mathrm{dec}}$ imply
the claimed online-state variance bound with the same $\Gamma$.

It remains to prove the aggregate conditioning-bias bound.
Fix $(i,t)$ and couple an unconditional profile $T$ to
$T^{i\leftarrow t}$, obtained by replacing only $T_i$ by $t$, using
the same dictionary.  Put
\[
    A_{it}:=\{c_{i,T_i}>0\text{ or }c_{it}>0\}.
\]
Outside this event, $i$ is isolated in both crucial graphs and its
state change cannot affect any offline output of a natural rule.
The first-divergence argument and the column bound in
\eqref{eq:transfer-L-sums} therefore give
\begin{equation}
\label{eq:transfer-bias-one-state}
\begin{aligned}
    &\left|
        \sum_vh_{itv}
        \bigl(\Pr[F_v=1\mid T_i=t]-\Pr[F_v=1]\bigr)
      \right|\\
    &\qquad\le
    h_{\max}R^-
    \mathbb E\!\left[
        |\widehat X_i(T,T^{i\leftarrow t})|
        \mathbf1_{A_{it}}
    \right].
\end{aligned}
\end{equation}
Since $c_{it}>0$ implies $c_{it}\ge\tau$,
\[
    \sum_{i,t}p_i(t)\mathbf1\{c_{it}>0\}
    \le\frac{x(C_\tau)}{\tau}
    \le d\,x(C_\tau).
\]
Using $|\widehat X_i|\le K$ and averaging over both the old and
target states consequently gives
\[
    \sum_{i,t}p_i(t)
    \mathbb E\!\left[
        |\widehat X_i(T,T^{i\leftarrow t})|\mathbf1_{A_{it}}
    \right]
    \le2Kd\,x(C_\tau)
    \le10d^2D\,x(C_\tau).
\]
Summing \eqref{eq:transfer-bias-one-state} with weights $p_i(t)$ and
using $C_{\mathrm{dec}}\ge10$ proves the bias bound.
\end{proof}

\section{Hub instances: the universal \texorpdfstring{$\Omega(1/k)$}{Omega(1/k)} loss}
\label{sec:lower-bound}

For an instance $\mathcal I$, let $\mathcal G$ be its full realized graph and let
$H_\Phi$ be the graph retained by a menu rule $\Phi$.  When
$\mathbb E[\nu(\mathcal G)]>0$, define
\[
    \alpha_k(\mathcal I)
    :=
    \sup_{\Phi:\,\mathrm{menu}(\Phi)\le k}
    \frac{\mathbb E[\nu(H_\Phi)]}{\mathbb E[\nu(\mathcal G)]},
    \qquad
    \alpha^\star(k):=\inf_{\mathcal I}\alpha_k(\mathcal I),
\]
where the infimum is over nontrivial product-arrival instances.  Thus
$\alpha^\star(k)$ is the best preservation guarantee obtainable uniformly over all
such instances.  By \Cref{thm:status-product}, $\alpha^\star(k)\to1$; quantitatively,
for all sufficiently large $k$,
\begin{equation}
\label{eq:intro-upper-rate}
    1-\alpha^\star(k)
    \le
    O\!\left(\frac{\log\log\log k}{\log\log k}\right).
\end{equation}

We show that the universal convergence cannot be faster than order $1/k$.  This holds
for \emph{independent presence} and applies to all menu rules, not only
benchmark-derived ones. The construction uses the following class of instances.

\paragraph{Hub instances.}
We call $\mathcal I$ a \emph{hub instance} if its offline vertices $v_1,\dots,v_m$
(the \emph{columns}) each have a \emph{dedicated arrival} $u_j$, present with
probability $\sigma_j\in[0,1]$ and adjacent only to $v_j$, and its remaining
arrivals are $h\ge1$ \emph{hubs}, hub $i$ being present with probability
$\pi_i\in(0,1]$ and adjacent to every column.  All arrivals are independent.  A hub adds value only when it finds a free
column, one whose dedicated arrival is absent.

\begin{theorem}[Multi-hub lower bound]
\label{thm:needy-lower}
Let $k,h\ge1$ be integers, and let $\mathcal I$ be the hub
instance with $m=(k+1)h$ columns, $\sigma_j=k/(k+1)$ for
all $j$, and $h$ always-present hubs. Every menu rule of
size at most $k$ satisfies
\[
    \mathbb E[\nu(H)]
    \le
    \left(
        1-\frac{(k/(k+1))^k-h^{-1/2}}{k+1}
    \right)\mathbb E[\nu(\mathcal G)].
\]
Consequently, letting $h\to\infty$,
\begin{equation}
\label{eq:intro-lower-rate}
    1-\alpha^\star(k)
    \ge \frac{k^k}{(k+1)^{k+1}}
    \ge \frac{1}{e(k+1)},
    \qquad k\ge1.
\end{equation}
The first lower bound is asymptotic to $1/(ek)$.
\end{theorem}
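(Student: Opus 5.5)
Every menu rule retains at most $k$ columns per hub, so I will bound the retained value directly and compare it with the full-graph value.

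\textbf{Full graph.} Since hubs are always present and each dedicated arrival is absent independently with probability $1/(k+1)$, the number of free columns is $F\sim\mathrm{Bin}(m,1/(k+1))$, with mean $h$. Every present dedicated arrival can be matched to its column, and the hubs can be matched to $\min\{h,F\}$ of the free columns. Hence
\[
\nu(\mathcal G)=(m-F)+\min\{h,F\}.
\]
Writing $\mathbb E[(h-F)_+]\le \sqrt{\Var F}\le\sqrt h$ gives $\mathbb E[\nu(\mathcal G)]\ge m-\sqrt h$.

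\textbf{Retained graph.} A dedicated arrival has a single edge, and we may assume it keeps it, since keeping it can only help. Each hub's neighborhood is deterministic, so its menu $S_i$ is drawn independently of the dedicated arrivals' presence. In the retained graph, hub $i$ contributes to the matching beyond the dedicated edges only if some column in $S_i$ is free. For fixed $S_i$ with $|S_i|\le k$, the probability that all its columns are occupied is at least $(k/(k+1))^k$. Therefore the expected number of hubs finding no free retained column is at least $h(k/(k+1))^k$.

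A maximum matching in $H$ is at most the number of present dedicated arrivals plus the number of hubs with a free retained column. Indeed, matching a hub to an occupied column merely displaces that column's dedicated arrival and gains nothing. So
\[
\mathbb E[\nu(H)]\le (m-h)+h\bigl(1-(k/(k+1))^k\bigr)=m-h(k/(k+1))^k.
\]
To justify the displacement claim, I would use a König-type argument: the set consisting of all occupied columns together with all hubs that have no free retained column is a vertex cover of the leftover edges. Alternatively, one can note that each hub can only be matched to a free column in $S_i$, because every occupied column is already saturated by its dedicated arrival.

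\textbf{Comparison.} The loss is
\[
\mathbb E[\nu(\mathcal G)]-\mathbb E[\nu(H)]\ge h(k/(k+1))^k-\sqrt h.
\]
Dividing by $\mathbb E[\nu(\mathcal G)]\le m=(k+1)h$ gives the relative loss $\bigl((k/(k+1))^k-h^{-1/2}\bigr)/(k+1)$. The sign of the error term works out: when dividing, I use the upper bound $\mathbb E[\nu(\mathcal G)]\le m$.

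Letting $h\to\infty$ yields $k^k/(k+1)^{k+1}$. This is at least $1/(e(k+1))$ because $(1+1/k)^k\le e$.

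The main care point is the rigorous upper bound on $\nu(H)$. The cleanest route is the per-hub observation that any matching can be transformed, without loss, so that each dedicated arrival is matched to its own column whenever it is present.
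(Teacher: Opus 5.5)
Your proposal is correct and follows essentially the paper's proof. Both retain dedicated edges without loss, use the exchange argument to get $\nu(H)\le m-|N|+\sum_i\mathbf 1\{S_i\cap N\ne\varnothing\}$, use that each single-state hub menu is independent of $N$ and so misses every free column with probability $(k/(k+1))^{|S_i|}\ge(k/(k+1))^k$, and bound $\mathbb E[(h-|N|)_+]\le\sqrt h$.

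Two small fixes are needed. First, the vertex cover you describe should contain the occupied columns together with the hubs that \emph{do} have a free retained column; as written, an edge from such a hub to a free column is uncovered, and the count would give $h$ minus the intended number of hubs. Second, before dividing by $\mathbb E[\nu(\mathcal G)]\le m$, dispose of the case $h(k/(k+1))^k-\sqrt h\le 0$ separately, where the claim follows from $\nu(H)\le\nu(\mathcal G)$, as the paper does.
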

\begin{proof}
We may assume that every dedicated edge is retained, since adding
these edges respects the menu budget and can only increase
$\nu(H)$. In either graph, there is a maximum matching that
matches every present dedicated arrival: if such an arrival is
unmatched, its column must be matched to a hub, and we can
replace that edge by the dedicated edge.

Let $N$ be the set of columns whose dedicated arrivals are absent.
Then
\[
    |N|\sim\mathrm{Bin}\!\left(m,\frac1{k+1}\right),
    \qquad
    \mathbb E[|N|]=h,
    \qquad
    \Var(|N|)\le h.
\]
Matching the present dedicated arrivals first gives
\[
    \nu(\mathcal G)=m-|N|+\min\{h,|N|\},
    \qquad
    \mathbb E[\nu(\mathcal G)]\le m=(k+1)h.
\]

Let $D_i$ be hub $i$'s menu. In a maximum matching of $H$
that matches all present dedicated arrivals, each matched hub
uses a column of $N$. Consequently,
\[
    \nu(H)
    \le m-|N|+\sum_{i=1}^h
        \mathbf1\{D_i\cap N\ne\varnothing\}.
\]
Each hub has a single state, so its menu is independent of $N$.
Conditional on $D_i$, independence of the dedicated arrivals
and $|D_i|\le k$ give
\[
    \Pr[D_i\cap N=\varnothing\mid D_i]
    =
    \left(\frac{k}{k+1}\right)^{|D_i|}
    \ge
    \left(\frac{k}{k+1}\right)^k.
\]
It follows that
\[
\begin{aligned}
    \mathbb E[\nu(\mathcal G)-\nu(H)]
    &\ge
    \mathbb E[\min\{h,|N|\}]
    -h\left(1-\left(\frac{k}{k+1}\right)^k\right)\\
    &=
    h\left(\frac{k}{k+1}\right)^k
    -\mathbb E[(h-|N|)_+]         \ge
    h\left(\frac{k}{k+1}\right)^k-\sqrt h.
\end{aligned}
\]
The last inequality uses
\[
    \mathbb E[(h-|N|)_+]
    \le \mathbb E\bigl[\,\bigl||N|-h\bigr|\,\bigr]
    \le \sqrt{\Var(|N|)}
    \le \sqrt h.
\]
If the resulting lower bound on the loss is nonpositive, the
claimed inequality follows from $\nu(H)\le\nu(\mathcal G)$.
Otherwise, dividing by
$\mathbb E[\nu(\mathcal G)]\le(k+1)h$ gives
\[
    \frac{\mathbb E[\nu(\mathcal G)-\nu(H)]}
         {\mathbb E[\nu(\mathcal G)]}
    \ge
    \frac{(k/(k+1))^k-h^{-1/2}}{k+1},
\]
which proves the first claim.

Letting $h\to\infty$ and using the definition of
$\alpha^\star(k)$ yields
\[
    1-\alpha^\star(k)
    \ge
    \frac{k^k}{(k+1)^{k+1}}
    =
    \frac{1}{(k+1)(1+1/k)^k}
    \ge
    \frac{1}{e(k+1)}.
\]
Here $(1+1/k)^k\le e$ for every $k\ge1$, and its convergence
to $e$ shows that the first lower bound is asymptotic to
$1/(ek)$.
\end{proof}

The bound stems from a coverage effect: Any menu of $k$ columns is fixed before the
arrivals are seen, while each column turns out to be free with probability only
$1/k$; so a hub's menu contains no free column with probability $(1-1/k)^k\to1/e$,
however the menus are chosen.  The full graph has no such restriction: its $h$ hubs
can use the roughly $h$ free columns wherever they happen to fall.

There is also an upper bound that holds for every hub instance: independent uniform menus lose at most $1/(k+1)$. Therefore, hub instances can not yield tighter asymptotic lower bounds.

\begin{proposition}[Hub instances are $\Theta(1/k)$]
\label{prop:hub-upper}
For every hub instance and every $k\ge1$, the menu rule that retains every dedicated
edge and gives each present hub an independent uniformly random set of $k$ columns
(all columns if $k\ge m$) satisfies
\[
    \mathbb E[\nu(H)]\ \ge\ \Bigl(1-\frac1{k+1}\Bigr)\mathbb E[\nu(\mathcal G)].
\]
\end{proposition}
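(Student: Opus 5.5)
Fix the realized set $N$ of free columns and the set $P$ of present hubs, with $s:=|N|$ and $p:=|P|$. As in the proof of \Cref{thm:needy-lower}, $\nu(\mathcal G)=m-s+\min\{p,s\}$. The rule retains every dedicated edge, so it suffices to show that the hubs in $H$ can be matched into $N$ in expectation at least $(1-\frac1{k+1})\min\{p,s\}$, conditional on $(N,P)$. Averaging then gives the claim, since the dedicated part $m-s$ is preserved exactly. If $k\ge m$, each hub retains all columns and there is no loss, so assume $k<m$. Every column lies in a uniform $k$-subset with probability $k/m$, and the hubs' menus are independent of $N$ and of each other.

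I would lower bound the hub--free-column matching by a fractional matching, and then invoke bipartite integrality. The natural certificate is to let each present hub $i$ spread mass $\min\{1,\frac{s}{p}\}/|D_i\cap N|$ over $D_i\cap N$ (or use a greedy argument). A cleaner route is a \emph{symmetric} comparison. Consider the random bipartite graph between $P$ and $N$ in which each hub's neighborhood is a uniform $k$-subset of $[m]$ intersected with $N$. By the exchangeability of columns and hubs, the expected maximum matching size can be bounded through König's theorem: the expected deficiency is at most $\mathbb E[\max_{A\subseteq P}(|A|-|\Gamma(A)|)]$.

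A more tractable alternative, which I would try first, is a sequential greedy analysis. Process the present hubs in random order. When $r$ free columns have been used, the current hub's menu misses all remaining $s-r$ free columns with probability $\binom{m-(s-r)}{k}/\binom{m}{k}\le(1-\frac{s-r}{m})^k$. Hence the expected greedy matching is at least $\sum_{r}$ of success probabilities, and comparing the unmatched count with $\min\{p,s\}$ reduces to a one-dimensional bound. The key computation is to show that the expected number of failures is at most $\min\{p,s\}/(k+1)$ after averaging over $N\sim$ the product law (this is where $\sigma_j$ enters). Conditional on $N$ alone this can fail, for example when $s$ is tiny relative to $m$.

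The main obstacle is that for fixed small $s$ a hub's menu misses $N$ with probability near one, while $\nu(\mathcal G)$ still counts $\min\{p,s\}$. So the bound must come from the averaging over columns' presence, and it must be compared with $\mathbb E[\nu(\mathcal G)]$, which includes the large term $m-s$. I would therefore bound the total loss $\mathbb E[\min\{p,s\}-\nu_{\text{hub}}(H)]$ by $\sum_i \Pr[\text{hub } i \text{ present and finds no free unused column}]$. I would then charge each such failure to the $k$ columns in its menu, all of which are matched (to dedicated arrivals or to earlier hubs). This yields loss $\le \frac1k\cdot$(matched columns) $\le \frac1{k}\mathbb E[\nu(H)]$, which rearranges to $\mathbb E[\nu(H)]\ge\frac{k}{k+1}\mathbb E[\nu(\mathcal G)]$. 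The delicate step is making this charging rigorous, i.e.\ showing that each matched column is charged at most $1/k$ in expectation. Here uniformity and independence of the menus are essential, and I expect that step to need the most care.
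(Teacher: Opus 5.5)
There is a genuine gap. The step you defer as ``delicate'' is the entire content of the proof, and the intermediate target you set along the way is false.

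\textbf{The averaged hub-only target is false.} You aim to show that the expected number of hub failures is at most $\min\{p,s\}/(k+1)$ after averaging over $N$. Take one always-present hub, $k=1$, $m=3$, and each column free with probability $\epsilon$. Then $\mathbb E[\min\{p,s\}]=1-(1-\epsilon)^3\approx3\epsilon$. The hub fails (some column is free, but not its single menu column) with probability $1-(1-\epsilon)^3-\epsilon\approx2\epsilon$, which exceeds $\frac32\epsilon$. So averaging over which columns are free does not rescue a hub-only comparison. What rescues the statement is measuring the loss against the full $\nu(\mathcal G)=m-s+\min\{p,s\}$. Once you do that, the bound holds for every fixed profile $(N,P)$, with no averaging at all.

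\textbf{The charging, as written, counts too many hubs.} Your charging paragraph has the right shape, but it sums failures over all present hubs. When $p>s$, every hub processed after the free columns are exhausted fails and deposits a unit of charge. For $p\gg m$ the total charge then exceeds $m/k$, so the per-column claim is false. You must process only $q:=\min\{p,s\}$ present hubs, in an order fixed independently of the menus.

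\textbf{The missing computation.} It is short and uses your own greedy estimate. Before the $(j+1)$st processed hub, at most $m-s+j$ columns are occupied. Since its menu is a fresh uniform $k$-subset, it fails with probability at most $((m-s+j)/m)^k$, conditionally on the earlier menus. Let $L$ be the number of processed hubs left unmatched. The integrand below is increasing and $m-s+q=\nu(\mathcal G)\le m$, so for every fixed profile
\[
    \mathbb E[L]
    \le
    \sum_{j=0}^{q-1}\Bigl(\frac{m-s+j}{m}\Bigr)^{k}
    \le
    \int_{m-s}^{\nu(\mathcal G)}\Bigl(\frac{x}{m}\Bigr)^{k}dx
    \le
    \frac{\nu(\mathcal G)^{k+1}}{(k+1)m^{k}}
    \le
    \frac{\nu(\mathcal G)}{k+1}.
\]
Together with $\nu(H)\ge\nu(\mathcal G)-L$, this is exactly the paper's argument.

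\textbf{Completing your charging route instead.} Conditional on the history, the next hub deposits on an occupied column an expected charge of $\frac1k\binom{|O|-1}{k-1}/\binom mk\le\frac1m(|O|/m)^{k-1}$, where $|O|\le m-s+j$. Summing over the $q$ steps with the same integral comparison gives a total of at most $\frac1k\bigl[(\nu(\mathcal G)/m)^k-((m-s)/m)^k\bigr]\le\frac1k$ on each column occupied at the end. Hence $\mathbb E[L]\le\frac1k(\nu(\mathcal G)-\mathbb E[L])$, which is the same bound.

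Either way, the missing idea is this integral comparison, in which the dedicated mass $m-s$ absorbs the loss. It does not come from averaging over $N$.
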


\begin{proof}
If $k\ge m$, every edge is retained, so assume $k<m$.
Fix an arrival profile $T$. Let $f$ be the number of columns
whose dedicated arrivals are absent, and let $h'$ be the number
of present hubs. Put
\[
    s:=\min\{h',f\},
    \qquad
    r:=m-f+s=\nu(\mathcal G).
\]

First match every present dedicated arrival to its column.
Choose any $s$ present hubs in an order determined independently
of their menus. Expose their menus in that order, matching each
hub to an unused column in its menu whenever possible.

Before the $(j+1)$st hub is processed, at most $m-f+j$ columns
are occupied. Conditional on $T$ and all previously exposed menus,
the next menu is an independent uniform $k$-subset of the columns.
Thus the probability that this hub cannot be matched is at most
\[
    \frac{\binom{m-f+j}{k}}{\binom{m}{k}}
    \le
    \left(\frac{m-f+j}{m}\right)^k,
    \qquad 0\le j<s,
\]
where $\binom{a}{k}=0$ when $a<k$.

Let $L$ be the number of processed hubs left unmatched.
The resulting matching has size $r-L$, and hence
\[
\begin{aligned}
    \mathbb E[\nu(\mathcal G)-\nu(H)\mid T]
    &\le \mathbb E[L\mid T] \le
       \sum_{j=0}^{s-1}
       \left(\frac{m-f+j}{m}\right)^k \le
       \int_{m-f}^{r}\left(\frac{x}{m}\right)^k\,dx\\
    &\le
       \frac{r^{k+1}}{(k+1)m^k}
     \le \frac{r}{k+1}.
\end{aligned}
\]
The integral bound uses monotonicity of the integrand, and the
last inequality uses $r\le m$. Since $r=\nu(\mathcal G)$,
averaging over $T$ proves the claim.
\end{proof}

Combining \eqref{eq:intro-upper-rate} and
\eqref{eq:intro-lower-rate}, for all sufficiently large $k$,
\[
    \frac{k^k}{(k+1)^{k+1}}
    \le
    1-\alpha^\star(k)
    \le
    O\!\left(\frac{\log\log\log k}{\log\log k}\right).
\]
On hub instances, \Cref{prop:hub-upper} bounds the worst-case
loss between $k^k/(k+1)^{k+1}$ and $1/(k+1)$ for every $k\ge1$.
The ratio of these upper and lower bounds is
$(1+1/k)^k<e$, approaching $e$ as $k\to\infty$.

\section{Experimental details}
\label{app:experiments}

\paragraph{Scope of the experiments.}
The hub families model competition for resources whose availability is
uncertain.  Each offline vertex has a dedicated arrival adjacent only
to it, present with a specified probability.  Hubs are adjacent to all
these vertices and add value by using those whose dedicated arrivals
are absent.  NC is symmetric with one hub; NCH has heterogeneous
presence probabilities; MH$_4$ has four hubs; and MH$_k$ has $k$ hubs.
The private-resource variant TS$_4$ lets the same arrival have both a
contested private resource and many shared alternatives.

NC, NCH, MH$_4$ and MH$_k$ are hub instances, so
\Cref{prop:hub-upper} applies at every scale.
The construction of \Cref{thm:needy-lower} with $h=k$
corresponds to scale $c=1+1/k$ in MH$_k$,
which is not generally included in our tested grid.
For TS$_4$, consider the rule that retains the private vertex $c_i$
in the home state and samples $k-1$ uniform columns in either state.
Match private vertices first.  Their contribution is the same in the
full and retained graphs, and the remaining semi-hubs form an
independent random set of hubs with menus of size $k-1$.
\Cref{prop:hub-upper} therefore bounds this rule's loss by $1/k$.
The nested-neighborhood, random-presence, and random-heterogeneous
families are outside the scope of that proposition.
These analytical guarantees concern specified rules; the main plot
evaluates VarOpt with estimated marginals.  Its maxima over a finite
scale grid are not estimates of $\alpha^\star(k)$.

\paragraph{Families.}
All hub families have always-present hubs.
NC$(m)$: one hub, $\sigma_j=1-1/m$.
NCH$(m)$: one hub, $1-\sigma_j\propto$ log-uniform on $[1/4,4]$ with mean $1/m$.
MH$_h(m)$: $h$ hubs, $1-\sigma_j=h/m$.
TS$_h(m)$: the light layer of MH$_h(m)$, whose hubs are replaced by $h$ semi-hubs;
semi-hub $i$ has a private offline vertex $c_i$, contested by an intruder present with
probability $1/2$ and adjacent to $c_i$ alone, and two equiprobable states, ``home''
with neighborhood $\{c_i\}\cup\{v_1,\dots,v_m\}$ and ``away'' with neighborhood
$\{v_1,\dots,v_m\}$.
Scaled runs use $m=ck$ for NC and NCH, $m=4ck$ for MH$_4$ and TS$_4$, with
$c\in\{1.5,2,2.3,3,4\}$ and $k\in\{2,3,4,6,8,12,16,24,32,48,64\}$; MH$_k$ uses
$h=k$ and $m=ck^2$ with $c\in\{0.75,1,1.5,2.3\}$ for $k\le32$ and $c\in\{0.75,1\}$ for
$k\in\{48,64\}$.  In the main figure NC appears only through its exact loss.
Growing families outside the hub class (\Cref{fig:rate-appendix}): nested
neighborhoods (arrival $j$ present with probability $1-3/m$ and adjacent to
$v_0,\dots,v_j$; $m=ck$, $c\in\{1.5,2.3,4\}$); random presence ($30k$ arrivals present
with probability $0.8$, each with a fixed uniformly random neighborhood of size $4k$
among $18k$ offline vertices); random heterogeneous ($30k$ arrivals with three states
of Dirichlet$(1,1,1)$ probabilities floored at $0.05$, neighborhood sizes in
$\{1,2,4\}\cup\{4k\,2^{-i}:0\le i\le5\}$ weighted toward the largest, offline vertices
drawn with Zipf$(1)$ popularity among $21k$).
The sensitivity check and the selector comparison also use three fixed instances:
RH$(300)$, a random heterogeneous instance with $300$ arrivals and $210$ offline
vertices, three states per arrival with Dirichlet$(1,1,1)$ probabilities floored at
$0.05$, neighborhood sizes in $\{1,2,4,8,16,32\}$ with geometric weights favoring
the smaller sizes, and Zipf$(1)$ offline popularity; and SCAT$(300,d)$ for
$d\in\{8,32\}$, $300$ arrivals present with probability $0.8$, each with a fixed
uniformly random neighborhood of size $d$ among $180$ offline vertices.

\paragraph{Benchmark and marginals.}
The benchmark is Hopcroft--Karp on the realized graph with the arrivals in a uniformly
random order and the offline vertices relabeled by a uniformly random permutation;
this is a random maximum matching but not a uniform one.  Conditional marginals
$z_{itv}$ are empirical frequencies over $N_z$ unconditional arrival profiles: for
each state $(i,t)$ we divide the number of profiles in which $(i,v)$ is matched by the
number in which $T_i=t$.  We use $N_z=3000$, reduced to $2000$ for $k\ge48$ and for
MH$_k$ with $k\le32$, and to $1200$ for MH$_k$ with $k\ge48$.  Every state with a
nonempty neighborhood has probability at least $0.05$ in all families, so each
receives at least $60$ profiles in expectation.  The estimates can assign zero weight
to an edge with positive true marginal, so the implementation need not
satisfy (D1) relative to the true benchmark marginals.  The finite menu
banks described below introduce an additional approximation.

\paragraph{Menus.}
VarOpt$_k$ inclusion probabilities are $\min\{1,z_{itv}/\tau\}$ with
$\sum_v\min\{1,z_{itv}/\tau\}=k$, after adding $10^{-12}$ to every estimate so that
edges of zero estimated weight share the remaining budget uniformly; a neighborhood
with at most $k$ edges is retained entirely.  The probabilities are rounded by the
pivotal method with an independent uniformly random processing order for each
sample, which realizes them exactly, has pairwise nonpositive covariances, and, when
the probabilities are equal, yields a uniformly random $k$-subset by exchangeability.
For each state we draw $64$ menus once and, at evaluation, select one uniformly at
random at every occurrence of the state.  The exact inclusion probabilities
and covariance guarantees above describe the underlying sampler;
they need not hold exactly after conditioning on this finite menu bank.  The comparison rules are \emph{sim}, which
draws $k$ partners with replacement from the empirical partner pool of the state and
retains the distinct ones (possibly fewer than $k$); \emph{top-$k$}, the $k$ largest
estimated marginals with random tie-breaking; and \emph{uniform}, $k$ uniformly random
neighbors.

\paragraph{Estimation of the loss.}
Marginal estimation and evaluation use independent profiles.  For each of
$N_{\rm ev}$ evaluation profiles ($1500$; $1200$ for $k\ge48$ and for MH$_k$ with
$k\le32$; $800$ for MH$_k$ with $k\ge48$) we compute $\nu(\mathcal G)$ and $\nu(H)$ for
every rule on the same profile; the loss is the mean difference divided by the mean of
$\nu(\mathcal G)$, and the reported standard error is that of the mean difference,
divided by the mean of $\nu(\mathcal G)$.
This is an approximate evaluation standard error that treats the
denominator as fixed and conditions on the estimated marginals and
menu banks.  It omits their estimation uncertainty and is not adjusted
for selecting the largest estimated loss over the scale grid.
Thus the plotted bars are not simultaneous confidence bands for the
maxima.
On NC, where every menu with $k$ entries has loss $(a^k-a^m)/(m(1-\rho)+1-a^m)$ with
$a=1-\rho$, $\rho=1/m$, the estimator reproduces the exact value within two standard
errors at all $49$ tested $(k,c)$ pairs.  Re-estimating the marginals with three
independent seeds and with $N_z=12000$ on RH$(300)$, TS$_4(64)$, MH$_4(48)$ and
NCH$(18)$ changes the VarOpt loss by at most $6\%$ (relative), within about two
evaluation standard errors; the deterministic top-$k$ rule on NCH varies by up to
$25\%$ across seeds, since its menus change discontinuously with the estimates.
Losses below two standard errors are omitted from the log-scale plot.

\paragraph{Number of hubs.}
At $k=8$, $k$ times the loss of VarOpt menus on MH$_h$ at the best tested scale
($c\in\{0.5,0.75,1,1.25,1.5,2,2.3,3\}$ for $h\le32$, $\{0.75,1,1.25\}$ for
$h\in\{64,128\}$) is $0.12$, $0.20$, $0.27$, $0.32$, $0.36$, $0.39$, $0.41$, $0.42$ for
$h=1,2,4,\dots,128$.  The tested procedure samples each hub's menu independently and
does not preassign disjoint blocks of columns, which the model would allow.

\begin{figure}[t]
\centering
\includegraphics[width=0.55\textwidth]{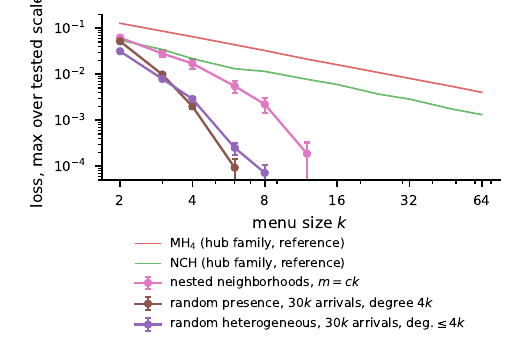}
\caption{Loss of VarOpt$_k$ menus on growing families outside the hub class, maximum
over the tested scales, with two hub families of \Cref{fig:rate} for reference; error
bars show twice the estimated evaluation standard error, and points below that level are omitted.}
\label{fig:rate-appendix}
\end{figure}

\paragraph{Rules and benchmark selector.}
\Cref{tab:rules} compares the four rules at $k\in\{4,8\}$ (maximum over tested scales for the scaled and growing families).  Uniform menus have substantially larger measured loss on the heterogeneous
families and can omit the private resource in TS$_4$.
On symmetric NC, every menu with $k$ entries has the same exact loss;
the differences between full-menu rules in the table are sampling noise.
Top-$k$ has the smallest measured loss on NCH and the random
heterogeneous family, but the largest on MH$_4$, consistent with
collisions between hubs' preferred choices.
The simulation menu can retain fewer than $k$ distinct edges and has
larger measured loss than VarOpt in these comparisons.
Replacing the Hopcroft--Karp selector by a maximum matching that, among maximum
matchings, prefers low-degree arrivals (dense assignment with edge weights
$1+1/(2n\deg i)$) roughly halves the VarOpt loss on the fixed random heterogeneous
instance RH$(300)$ at every $k$ at which it is measurable ($31.5$ to $19.9$,
$3.6$ to $2.0$ and $0.69$ to $0.30$, all $\times10^{-3}$, at $k=2,4,6$), increases
it by $10$--$20\%$ on the random presence graphs, and changes it by at most $7\%$ on
the hub instances NC$(64)$ and TS$_4(64)$.  These comparisons show that the maximum-matching selector can
materially affect the measured loss; they do not determine an
asymptotic decay rate.

\begin{table}[t]
\centering\scriptsize
\caption{Loss $\times10^{3}$ (estimated evaluation standard error) by rule}
\label{tab:rules}
\begin{tabular}{llrrrr}
\toprule
family & $k$ & VarOpt & sim & top-$k$ & uniform \\
\midrule
NC & 4 & 33.6 (1.4) & 46.5 (2.0) & 32.5 (1.5) & 32.2 (1.3) \\
NC & 8 & 15.0 (0.7) & 24.4 (1.0) & 15.2 (0.7) & 15.8 (0.7) \\
NCH & 4 & 21.7 (1.0) & 45.1 (2.0) & 18.0 (0.7) & 35.5 (1.4) \\
NCH & 8 & 11.5 (0.4) & 20.4 (0.9) & 10.3 (0.4) & 16.4 (0.7) \\
MH$_4$ & 4 & 65.6 (1.1) & 68.7 (1.1) & 114.3 (1.1) & 63.4 (1.0) \\
MH$_4$ & 8 & 32.7 (0.5) & 33.9 (0.5) & 51.3 (0.6) & 30.6 (0.5) \\
TS$_4$ & 4 & 58.6 (0.9) & 64.7 (1.0) & 65.8 (1.0) & 67.7 (1.0) \\
TS$_4$ & 8 & 25.9 (0.5) & 29.6 (0.5) & 33.4 (0.5) & 35.1 (0.5) \\
nested & 4 & 17.1 (2.1) & 24.4 (1.8) & 43.9 (3.3) & 92.9 (2.2) \\
nested & 8 & 2.2 (0.4) & 7.7 (0.8) & 3.3 (0.5) & 65.6 (1.2) \\
random presence & 4 & 2.0 (0.1) & 19.3 (0.4) & 0.9 (0.1) & 3.8 (0.2) \\
random presence & 8 & 0.0 (0.0) & 0.5 (0.1) & 0.3 (0.0) & 0.0 (0.0) \\
random heterogeneous & 4 & 2.9 (0.2) & 24.1 (0.4) & 1.1 (0.1) & 80.9 (0.7) \\
random heterogeneous & 8 & 0.1 (0.0) & 1.2 (0.1) & 0.0 (0.0) & 12.2 (0.2) \\
\bottomrule
\end{tabular}
\end{table}

\end{document}